\theoremstyle{plain}
\newtheorem{proposition}[theorem]{Proposition}
\newcommand{\emptyword}{\lambda}
\newcommand{\norm}[1]{\left\lVert #1 \right\rVert}
\newcommand{\N}{\mathbb{N}}
\newcommand{\cA}{\mathcal A}
\newcommand{\cV}{\mathcal V}
\newcommand{\cM}{\mathcal M}
\newcommand{\cF}{\mathcal F}
\newcommand{\cW}{\mathcal W}
\newcommand{\cO}{\mathcal O}
\newcommand{\bV}{\mathbf V}
\newcommand{\bW}{\mathbf W}
\newcommand{\Call}[1][A]{#1_C} 
\newcommand{\Return}[1][A]{#1_R} 
\newcommand{\Intern}[1][A]{#1_I} 
\newcommand{\Ext}{\textsc{Ext}}
\newcommand{\VPSigma}[1][A]{#1} 
\newcommand{\WMSigma}[1][A]{{#1}^\triangle} 
\newcommand{\FreeExt}[1][A]{\WMSigma} 
\newcommand{\ext}[1][a,b]{\mathrm{ext}_{#1}} 
\newcommand{\ExtHat}[1][A]{\widehat{\FreeExt[#1]}} 
\newcommand{\extomega}[1][ab]{\ext[#1]^\omega} 
\newcommand{\ExtMon}{\textbf{M}\Ext}
\newcommand{\LExtMon}{\cM \Ext}
\newcommand{\ExtMonV}[1][V]{\textbf{V}\Ext}
\newcommand{\LExtMonV}[1][V]{\cV \Ext}
\newcommand{\what}[1]{\widehat{#1}}
\newcommand{\mone}{{-1}}
\newcommand{\Eqat}[1]{[\![ #1 ]\!]}
\newcommand{\Ahat}[1][A]{\what{#1^*}}
\title{Visibly Pushdown Languages and Free Profinite Algebras}
\titlerunning{Visibly Pushdown Languages and Free Profinite Algebras}
\author{Silke Czarnetzki}{Wilhelm-Schickard-Institut Universit\"{a}t T\"{u}bingen, Germany}{czarnetz@informatik.uni-tuebingen.de}{}{}
\author{Andreas Krebs}{Wilhelm-Schickard-Institut Universit\"{a}t T\"{u}bingen, Germany}{mail@krebs-net.de}{}{}
\author{Klaus-J\"{o}rn Lange}{Wilhelm-Schickard-Institut Universit\"{a}t T\"{u}bingen, Germany}{lange@informatik.uni-tuebingen.de}{}{}
\authorrunning{S. Czarnetzki, A. Krebs and K-J. Lange}
\subjclass{\ccsdesc[100]{Theory of computation~Algebraic language theory}}
\keywords{Visibly Pushdown Languages, Finite Algebra and Recognition, Pro-finite Algebra, Equations}
\begin{document}

\maketitle

\begin{abstract}
We build a notion of algebraic recognition for visibly pushdown languages by finite algebraic objects. These come with a typical Eilenberg relationship, now between classes of visibly pushdown languages and classes of finite algebras. Building on that algebraic foundation, we further construct a topological object with one purpose being the possibility to derive a notion of equations, through which it is possible to prove that some given visibly pushdown language is not part of a certain class (or to even show decidability of the membership-problem of the class in some cases). In particular, we obtain a special instance of Reiterman's theorem for pseudo-varieties. These findings are then employed on two subclasses of the visibly pushdown languages, for which we derive concrete sets of equations. For some showcase languages, these equations are utilised to prove non-membership to the previously described classes.
 \end{abstract}

\newpage

\section{Introduction}
The algebraic theory of finite monoids has led to many fruitful results on the regular languages. In particular, a lot of naturally emerging subclasses of regular languages are definable via the algebraic properties of their syntactic monoids \cite{Sc65,BrSi73} (for an overview, see the book of Straubing \cite{St94}). The deep understanding of this relationship between classes of regular languages and algebraic properties of finite monoids is mostly ascribable to Eilenberg's famous variety theorem \cite{Ei76}, but also Reiterman's equivalence between varieties and profinite equations \cite{Re82}, offers a different and enlightening perspective on these connections (see for instance the survey of Pin \cite{Pi12}). In addition, the description of varieties by a finite set of profinite equations, in many cases implicitly, as utilised in \cite{GPK16}, provides an algorithm for decidability of the membership problem of the variety.

In two papers \cite{GGP08,GGP10} Gehrke, Grigorieff and Pin point out that not only is Reiterman's theorem a consequence of the duality between Boolean algebras and topological spaces uncovered by Stone in 1936 \cite{St36}, but also that this duality can be utilised to obtain a notion of recognition and minimal recognising objects for Boolean algebras of arbitrary languages at one cost: the loss of some algebraic information. They prove that this topological object is a monoid if and only if the Boolean algebra contains only regular languages. 

In the light of this result, we examine a class of languages that is still relatively close to the regular languages in the sense that it forms a Boolean algebra and equivalence is decidable for the underlying machine model:
The visibly pushdown languages (VPL). \cite{AlMa04}

We prove that by diverging from the classical notion of monoids as algebraic recognisers and replacing them with certain finite algebras, it is possible to obtain a notion of finite algebraic recognisers for VPL. From those recognisers, we construct a profinite algebra, which like the recognisers proposed by Gehrke, Grigorieff and Pin is a Stone space. This particular space also preserves algebraic information on the structure of VPL.

Apart from a profinite object, we also state Eilenberg and Reiterman-like theorems that in combination allow for a characterisation of subclasses of VPL through algebraic properties and profinite equations. We then proceed to examine two subclasses of VPL: the visibly counter languages (VCL) and, as a subclass of VCL, the visibly counter languages with threshold zero. The decidability, whether a given VPL belongs to one of said classes was already shown in \cite{BLS06}.

We derive for both of these classes sound sets of equations and show on some examples, that in these cases, the equations provide a method to prove that a language is not a VCL or not a VCL with threshold zero.

\textbf{Organisation of the Paper}

Since the proofs of some theorems are a bit lengthy, each section is followed by a section containing the proofs. 

\section{Preliminaries}

Let $M$ be a monoid, then we denote by $1_M$ the neutral element of $M$. The neutral element of $A^*$ -- the empty word -- is denoted by $\lambda$. Moreover, we consider the set $M^M$ of all maps from $M$ to $M$ as a monoid with the concatenation of maps $\circ$.

\subsection{Visibly Pushdown Languages}

A \emph{visibly pushdown alphabet} is a finite alphabet $\VPSigma$, which is partitioned into three sets $\Call,\Return$ and $\Intern$. Letters in $\Call$ are named \emph{call letters}, while letters in $\Return$ are \emph{return letters} and letters in $\Intern$ \emph{internal letters}.

The visibly pushdown languages (VPL) were introduced in \cite{AlMa04} as the languages accepted by so-called \emph{visibly pushdown automata} (VPA), which are a restriction of pushdown automata in the sense that once a call letter is read, a symbol is pushed to the stack and similarly, return letters remove a symbol from the stack, while internal letters leave the stack untouched. We consider a slight restriction in the sense that the words accepted by the automata (and thus the VPL) are part of a subset of $A^*$ -- the so-called well-matched words -- which were already regarded by Alur et al. \cite{AKMV05}. 

\begin{definition}[Well-matched Words]
Let $A$ be a visibly pushdown alphabet, then the empty word $\lambda$ and each internal letter $c \in \Intern$ is well-matched, and inductively
\begin{itemize}
\item for a well-matched word $w$, $a \in \Call$ and $b \in \Return$, the word $awb$ is well-matched and
\item if $u,v$ are well-matched, then so is $uv$.
\end{itemize}
We denote the set of well-matched words over $A$ by $\WMSigma$.
\end{definition}

\subsection{Visibly Counter Languages}

Just like visibly pushdown automata are restricted pushdown automata, visibly counter automata (VCA) are restricted counter automata and also work over visibly pushdown alphabets accepting well-matched words. For VCA, reading a call letter increases the value of the counter by one, reading a return letter decreases the value by one and internal letters leave the value of the counter untouched. Special instances of visibly counter automata are the VCA with threshold $m \in \N$ (or $m$-VCA) and were treated, for instance, in \cite{BLS06}. Here, for any counter value smaller than $m$, the counter value may influence the behaviour of the automaton. For values greater than $m$, the automaton behaves essentially like a finite automaton, ignoring its counter value. As such, $0$-VCA are basically finite automata with an auxiliary device to ensure that the word is well-matched.
\section{VPL in Terms of Algebra}\label{sec:Algebra}

VPL were already characterised in \cite{AKMV05} through finite congruences on monoids. We enrich that result for VPL of well-matched words with additional algebraic structure: In fact, the set of all well-matched words $\WMSigma$ is a submonoid of $A^*$ that additionally supports a unary operation from $\WMSigma$ to $\WMSigma$, sending the word $w$ to $awb$ for $a$ a call and $b$ a return letter. One may visualise this operation as an \emph{extension} in height (by one) of the height profile of a well-matched word, which gives the algebraic objects defined in the following their name.

\subsection{Algebras and Morphisms}

We introduce the algebraic objects that form the foundation for the both algebraic and topological investigation of VPL. These algebraic objects will either be finite or free.

\begin{definition}[$\Ext$-Algebra]\label{def:exAlg}
An $\Ext$-algebra is a monoid $(R, \cdot)$ and a submonoid of $R^R$ denoted by $(\cO(R),\circ)$, which for each $r \in R$ contains the maps $x \mapsto r \cdot x$ and $x \mapsto x \cdot r$. 
We usually omit to mention $\cO(R)$ and say that $R$ is an $\Ext$-algebra.
\end{definition}

\textbf{Note on forest algebras.} Each $\Ext$-algebra $(R,\cO(R))$ is a forest algebra as introduced in \cite{BoWa08}, where the horizontal monoid is $R$, the vertical monoid $\cO(R)$ and the action of $\cO(R)$ on $R$ is function application. We distinguish them, since we are investigating languages of words rather than languages of trees. Still, it should be mentioned that VPL and regular tree languages have very close connections \cite{AlMa04}.
\\

Observe that the set of all well-matched words $\WMSigma$ is an $\Ext$-algebra: For any two words $u,v \in A^*$ such that $uv \in \WMSigma$ and $x \in \WMSigma$, let $\ext[u,v](x) = uxv$. Then we let $\cO(\WMSigma)$ be the set of all maps $\ext[u,v]$. The left- and right multiplication maps ($x \mapsto x \cdot r$ and $x \mapsto r \cdot x$) are given by $\ext[w,\lambda]$ (resp. $\ext[\lambda,w]$) for $w \in \WMSigma$.

\begin{definition}[Morphism]
Let $R$ and $S$ be $\Ext$-algebras. A morphism from $R$ to $S$ is a tuple of monoid morphisms $(\phi,\psi)$ with $\phi \colon R \rightarrow S$ and $\psi \colon \cO(R) \rightarrow \cO(S)$ such that for all $e \in \cO(R)$ and $r \in R$: $\psi(e)(\phi(r)) = \phi(e(r)).$
\end{definition}

Observe that $\phi$ is implicitly determined by $\psi$, since $\phi$ is monoid morphism and hence $\phi(1_R) = 1_S$. For $r,x \in R$ letting $m_r(x) = r \cdot x$, we obtain 
\[\phi(r) = \phi(m_r(1_R)) = \psi(m_r)(1_S).\]
Hence we cease to distinguish between $\phi$ and $\psi$ and say that $\psi \colon R \rightarrow S$ is a morphism of $\Ext$-algebras.

In particular, by the inductive definition of the well-matched words, any morphism $\psi \colon \WMSigma \rightarrow R$ is uniquely determined by its values on $\ext$ for $a \in \Call$ and $b \in \Return$, $\ext[c,\lambda]$ and $\ext[\lambda,c]$ for $c \in \Intern$.

In the following, we often write $\ext$ for the operation on $\WMSigma$ and also $\ext$ for an operation on some $\Ext$-algebra $R$. This often has its origin in the fact that the $\ext$ operation on $R$ is considered the morphic image of $\ext$ on $\WMSigma$ for some particular morphism $\psi \colon \WMSigma \rightarrow R$. We assume that it is understood from the context, which is which.

\subsection{Language Recognition}

Similar to recognition of regular languages by monoid morphisms, we can recognise languages of well-matched words via $\Ext$-algebra morphisms. While the syntactic monoid of a VPL, such as $\{a^nb^n \mid n \in \N\}$ is in general infinite, our notion of recognition through algebras instead of monoids and in particular the additional algebraic structure of $\Ext$-algebras allows us to obtain finite recognising objects for the non-regular VCL. This leads to the main theorem at the end of the section, stating that VPL are precisely the languages recognised by finite $\Ext$-algebras. 

\begin{definition}[Recognition]
A language $L \subseteq \FreeExt$ is recognised by an $\Ext$-algebra $R$, if there exists a morphism $\psi \colon \FreeExt \rightarrow R$, such that $L = \psi^{-1}(\psi(L))$.
\end{definition}

\begin{example}\label{ex:anbn}
Consider the $\Ext$-algebra $R$, where the left table displays the multiplication on $R$ and the right displays the maps with their respective values in $\cO(R)$:
\begin{equation*}\label{ex:extAlgebra}
\begin{tabular}[t]{c|ccc}
$\cdot$ & 1 & $x$ & $0$\\ \hline
$1$ & 1 & $x$ & $0$\\ 
$x$ & $x$ & $0$ & $0$\\ 
$0$ & $0$ & $0$ & $0$  
\end{tabular}
\hspace{1cm}
\begin{tabular}[t]{c|ccc}
& 1 & $x$ & 0\\ \hline
$\text{id}$ & $1$ & $x$ & 0\\
$\ext$ & $x$ & $x$ & 0\\
$\ext[x,1]$ & $x$ & $0$ & 0\\
$\ext[x,x]$ & $0$ & $0$ & 0
\end{tabular}
\end{equation*}
To be more precise, we consider the visibly pushdown alphabet with one call letter $a$, one return letter $b$ and no neutral letters. Then the morphism $\psi \colon \WMSigma \rightarrow R$ with $\psi(\ext) = \ext$ recognises $\{a^nb^n \mid n \in \N\}$. 
\end{example}

\begin{example}\label{ex:ExtAlgebraHPlus}
As a second example, consider the language $H^+$ \cite{Ha15} over the alphabet $A = \{a,b\}$, where $a$ is a call and $b$ a return letter. This language is given by the production rules
\[S \rightarrow aNb \mid SS \mid \lambda, \ N \rightarrow aSb \mid NN \mid NS \mid SN.\]

Intuitively speaking, the language encodes all true Boolean formulae in the sense that the empty word is considered true, concatenation is conjunction and enclosing a word by $a$ and $b$ is negation. Then $H^+$ is recognised by the $\Ext$-algebra $R_{H^+}$ defined below.

\begin{tabular}[t]{c|cc}
$\cdot$ & 1 &  0\\ \hline
$1$ & 1 & 0\\
$0$ & $0$ & $0$
\end{tabular}
\hspace{1cm}
\begin{tabular}[t]{c|cc}
& 1 & 0\\ \hline
$\ext$ & $0$ & $1$\\
$\ext[a^2,b^2]$ & 1 & 0\\
$\ext[ab,\lambda]$ & 0 & 0\\
$\ext[a^2b,b]$ & 1 & 1\\
\end{tabular}
\end{example}

As an intermediate step towards the main theorem, we show that for each language of well-matched words, there exists a minimal $\Ext$-algebra recognising it. We say that an equivalence relation $\sim$ on an $\Ext$-algebra $R$ is a congruence of $\Ext$-algebras if and only if for all $e \in \cO(R)$: 
\[x \sim y \Leftrightarrow e(x) \sim e(y).\]
In the following, by congruence, we mean congruence of $\Ext$-algebras unless explicitly stated otherwise.

Observe that the map $x \mapsto [x]$, where $[x]$ denotes the equivalence class of $x$ with respect to $\sim$ is a morphism of $\Ext$-algebras, in the sense that its image is equipped with the operations $e([x]) = [e(x)]$ for each $e \in \cO(R)$. We denote that $\Ext$-algebra by $R /_\sim$.

\begin{definition}[Syntactic Congruence]
Let $L \subseteq \FreeExt$. We say that two words $x,y \in \FreeExt$ are equivalent with respect to $L$ and write $x \sim_L y$ if for all $e \in \cO(\WMSigma)$, 
$$e(x) \in L \Leftrightarrow e(y) \in L$$
holds. We call $\WMSigma /_{\sim_L}$ the syntactic $\Ext$-algebra of $L$ and the canonical map induced by $\sim_L$ the syntactic morphism $\eta_L$.
\end{definition}

Observe that for well-matched words $u$ and $v$, this implies that if $x \sim_L y$ then the equivalence $\ext[u,v](x) = uxv \in L \Leftrightarrow \ext[u,v](y) = uyv \in L$ holds, but the converse does not hold in general.

We may now characterise the syntactic $\Ext$-algebra of $L$ as the smallest $\Ext$-algebra recognising $L$ with respect to subs and quotients: 

\begin{definition}[Subs and Quotients]
Let $R$ and $S$ be $\Ext$-algebras, then 
\begin{itemize}
\item $R$ is a sub of $S$, if $R \subseteq S$ and $\cO(R) \subseteq \cO(S)$.
\item $R$ is a quotient of $S$, if there exists a morphism $(\phi,\psi)$ from $S$ to $R$, such that $\phi$ and $\psi$ are surjective.
\end{itemize}
\end{definition}

\begin{proposition}\label{prop:RecognitionDivision}
An $\Ext$-algebra $R$ recognises a VPL $L$ if and only if the syntactic $\Ext$-algebra of $L$ is the quotient of a sub of $R$.
\end{proposition}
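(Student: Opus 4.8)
The plan is to adapt the classical proof that the syntactic monoid is the minimal recogniser. Write $\eta_L \colon \WMSigma \to \WMSigma /_{\sim_L}$ for the syntactic morphism. For the ``only if'' direction I take $T := \psi(\WMSigma)$, the image of a recognising morphism $\psi$, observe that it is a sub of $R$, and build a surjective morphism $T \to \WMSigma /_{\sim_L}$ by $\psi(w) \mapsto \eta_L(w)$. For the ``if'' direction I lift $\eta_L$ through the given surjection, using that a morphism out of $\WMSigma$ is freely and uniquely determined by its values on the internal letters and on the operations $\ext$ (for $a \in \Call$, $b \in \Return$), i.e. the freeness of $\WMSigma$ over $A$, so that such morphisms lift along surjective morphisms.

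\textbf{(Only if.)} Let $\psi \colon \WMSigma \to R$ recognise $L$, so $L = \psi^{-1}(\psi(L))$. Then $T := \bigl(\psi(\WMSigma),\, \psi(\cO(\WMSigma))\bigr)$ is a sub of $R$: the first component is a submonoid of $R$, the second a submonoid of $\cO(R)$ that maps $\psi(\WMSigma)$ into itself and contains left and right multiplication by each $\psi(w)$, namely $\psi(\ext[w,\emptyword])$ and $\psi(\ext[\emptyword,w])$. Define $\theta$ on $T$ by $\theta(\psi(w)) = \eta_L(w)$ for $w \in \WMSigma$ and $\theta(\psi(e)) = \eta_L(e)$ for $e \in \cO(\WMSigma)$. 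The crux is well-definedness: if $\psi(w) = \psi(w')$, then for every $e \in \cO(\WMSigma)$ we have $\psi(e(w)) = \psi(e)(\psi(w)) = \psi(e)(\psi(w')) = \psi(e(w'))$, so $e(w)$ and $e(w')$ have equal $\psi$-image and therefore, as $L = \psi^{-1}(\psi(L))$, belong to $L$ simultaneously; hence $w \sim_L w'$ and $\eta_L(w) = \eta_L(w')$. The analogous computation starting from $\psi(e) = \psi(e')$ gives $e(w) \sim_L e'(w)$ for all $w$, i.e. $\eta_L(e) = \eta_L(e')$. That $\theta$ is then an $\Ext$-algebra morphism is a direct check, each required identity — for the two multiplications, for composition of operations, and for the action — reducing to the corresponding identity for $\psi$ and $\eta_L$; and $\theta$ is surjective on both components because $\eta_L$ is and because $\cO(\WMSigma /_{\sim_L})$ consists exactly of the operations $\eta_L(e)$. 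Thus $\WMSigma /_{\sim_L}$ is a quotient of the sub $T$ of $R$.

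\textbf{(If.)} Let $\pi \colon T \to \WMSigma /_{\sim_L}$ be a surjective morphism with $T$ a sub of $R$. Using surjectivity of $\pi$ on both components, pick for each $c \in \Intern$ some $\tilde c \in T$ with $\pi(\tilde c) = \eta_L(c)$, and for each $a \in \Call$, $b \in \Return$ some $\tilde e_{a,b} \in \cO(T)$ with $\pi(\tilde e_{a,b}) = \eta_L(\ext)$. By freeness of $\WMSigma$ over $A$ this data extends to a unique morphism $\psi \colon \WMSigma \to T$, and since $\pi \circ \psi$ and $\eta_L$ are morphisms out of $\WMSigma$ agreeing on the generating data, uniqueness forces $\pi \circ \psi = \eta_L$. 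Now $\psi(x) = \psi(y)$ implies $\eta_L(x) = \pi(\psi(x)) = \pi(\psi(y)) = \eta_L(y)$, i.e. $x \sim_L y$, so $x \in L \Leftrightarrow y \in L$ (take $e = \id$ in the definition of $\sim_L$); hence $\psi^{-1}(\psi(L)) = L$. Composing $\psi$ with the inclusion $T \hookrightarrow R$ gives a morphism $\WMSigma \to R$ recognising $L$.

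The argument is routine once the objects are chosen; the two points that genuinely require care are that $\WMSigma$ is free (so that $\eta_L$ lifts in the ``if'' direction) and the well-definedness of $\theta$, where one uses precisely that $\sim_L$ ranges over \emph{all} of $\cO(\WMSigma)$ while $\psi$ being a morphism turns $e(w)$ into $\psi(e)(\psi(w))$. The remainder is bookkeeping to confirm that the constructed maps respect both the horizontal monoid and the vertical operation structure of $\Ext$-algebras.
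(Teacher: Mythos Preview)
Your proof is correct and follows essentially the same approach as the paper: in the ``only if'' direction you take the image of the recognising morphism as the sub and factor the syntactic morphism through it, and in the ``if'' direction you lift $\eta_L$ along the surjection using the freeness of $\WMSigma$. If anything, your treatment is slightly more careful than the paper's own proof: you explicitly choose preimages for both the internal letters \emph{and} the operations $\ext$ (the paper only writes down the choice on $\Intern$), and you verify well-definedness of the quotient map for arbitrary $e \in \cO(\WMSigma)$ via the morphism identity $\psi(e(w)) = \psi(e)(\psi(w))$ rather than checking only the generating contexts.
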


The proof is similar as the proof for the statement that a finite monoid recognises a certain (regular) language if and only if its syntactic monoid divides it. With the previous proposition, we are now ready to come to the main theorem of the section:

\begin{theorem}\label{thm:VPL-Ext}
A language $L \subseteq \FreeExt$ is VPL, if and only if it is recognised by a finite $\Ext$-algebra.
\end{theorem}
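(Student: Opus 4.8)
The plan is to prove both directions of the equivalence separately, with the harder direction being that every VPL is recognised by a finite $\Ext$-algebra.

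\textbf{From finite $\Ext$-algebra to VPL.} Suppose $\psi \colon \WMSigma \to R$ is a morphism into a finite $\Ext$-algebra and $L = \psi^{-1}(\psi(L))$. I would build a visibly pushdown automaton (restricted to well-matched words) that simulates $\psi$. The idea is that the automaton tracks, in its finite control, the value in $R$ of the well-matched word read so far at the current stack level; when it reads a call letter $a$, it pushes the current $R$-value together with $a$ onto the stack and resets its control to $1_R$; when it reads a return letter $b$ matching a pushed pair $(r,a)$, it pops, recovers $r$, and updates the control to $r \cdot \psi(\ext[a,b])(s)$, where $s$ is the value accumulated inside the matched block; internal letters $c$ multiply the control on the right by $\psi(\ext[c,\lambda])(1_R)$. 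Since $R$ and $\cO(R)$ are finite, the set of pushed pairs and the control are finite, so this is a legitimate VPA; acceptance is defined by the control lying in $\psi(L)$ at the end, and well-matchedness is guaranteed by the stack discipline. Care is needed because the pushed $R$-value must suffice to reconstruct the action $x \mapsto r \cdot x$, which it does since $\cO(R)$ contains all left-multiplication maps.

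\textbf{From VPL to finite $\Ext$-algebra.} Given a VPL $L$, by Proposition~\ref{prop:RecognitionDivision} (or directly) it suffices to show the syntactic $\Ext$-algebra $\WMSigma/{\sim_L}$ is finite, equivalently that $\sim_L$ has finite index and $\cO(\WMSigma/{\sim_L})$ is finite. The standard tool is the congruence of Alur et al.\ \cite{AKMV05}: a VPL of well-matched words is recognised by a finite congruence on $A^*$ that is compatible with the visibly pushdown structure, encoding for each pair of stack-heights the behaviour of the machine. I would take such a finite congruence $\approx$ refining $\sim_L$, note that it induces a well-defined finite quotient that carries a monoid structure on the classes of well-matched words, and check that the induced maps $[\ext[a,b]]$ and the left/right multiplications generate a finite submonoid of the transformation monoid on that quotient — this uses that the "height-one" extension operation is realised by a fixed finite piece of the recognising data. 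Then $\sim_L$, being coarser, also has finite index and finitely many induced operations, so $\WMSigma/{\sim_L}$ is a finite $\Ext$-algebra recognising $L$ via $\eta_L$.

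\textbf{Main obstacle.} The delicate point is the second direction: showing that not only does the horizontal monoid $\WMSigma/{\sim_L}$ become finite, but that $\cO(\WMSigma/{\sim_L})$ — a priori a submonoid of $(\WMSigma/{\sim_L})^{\WMSigma/{\sim_L}}$, hence automatically finite once the base set is finite — is correctly generated by the images of the $\ext[a,b]$ together with the left/right multiplications, so that $\eta_L$ is genuinely a morphism of $\Ext$-algebras onto a finite object. Concretely one must verify that $x \sim_L y$ implies $uxv \sim_L uyv$ for all well-matched $u,v$ (immediate from the definition via the remark after the Syntactic Congruence definition) and that $awb$-type contexts behave compatibly, i.e.\ $\ext[a,b]$ descends to a well-defined map on classes. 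Finiteness of the horizontal part is the real content and is imported from \cite{AKMV05}; I would spend most of the argument translating their stack-height-indexed congruence into the $\Ext$-algebra language, and relatively little on the automaton construction for the converse.
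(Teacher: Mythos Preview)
Your proposal is correct and follows essentially the same route as the paper. The automaton construction for the finite-algebra-to-VPL direction is identical to the paper's (push the current $R$-value together with the call letter, reset the control to $1_R$, and on a return $b$ with popped pair $(r,a)$ set the new state to $r \cdot \psi(\ext[a,b])(s)$); the paper proves by structural induction on well-matched words that the control always holds $q \cdot h(w)$. For the VPL-to-finite-algebra direction the paper makes explicit what you defer to \cite{AKMV05}: from a recognising VPA it defines $u \sim v$ iff for every stack symbol $G$ the induced state maps $f_{u,G}, f_{v,G} \colon Q \to Q$ coincide, verifies by hand that this is compatible both with concatenation and with each $\ext[a,b]$, and observes it has finite index and refines $\sim_L$.

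One remark: your ``main obstacle'' is over-worried. The syntactic congruence $\sim_L$ is, by its very definition, stable under every $e \in \cO(\WMSigma)$ (since $e' \circ e \in \cO(\WMSigma)$ whenever $e,e'$ are), so $\eta_L$ is automatically a morphism of $\Ext$-algebras and $\cO(\WMSigma/{\sim_L})$ is just the image of $\cO(\WMSigma)$ in the self-maps of a finite set --- there is nothing to check about generation. The only real content is finiteness of the index of $\sim_L$, which is exactly what the refining congruence supplies.
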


The proof is similar to that in \cite{AKMV05} enriched by $\ext[u,v]$-operations.

\textbf{Proof sketch:} If $\mathcal A$ is a VPA recognising a VPL $L$, then each pair $(w,G)$, where $w$ is a well-matched word and $G$ is some stack symbol, induces a map $f_{w,G}$ from the states of $\mathcal A$ to the states of $\mathcal A$. One shows that the relation $u \sim v$ if $f_{u,G} = f_{v,G}$ for all $G$ is a finite $\Ext$-algebra congruence on $\FreeExt$. Furthermore, this congruence refines the syntactic congruence and hence by Proposition \ref{prop:RecognitionDivision}, $L$ is recognised by the finite $\Ext$-algebra $\WMSigma /_\sim$. The number of elements of that $\Ext$-algebra can be exponential in the number of states of $\mathcal A$.

For the converse direction, if $R$ is an $\Ext$-algebra recognising $L$, one constructs a VPA with the elements of $R$ as states. The automaton then simulates the evaluation of a word $w$ in $R$ by keeping track of the read call-letters, pushing them on the stack and when reading a return letter, applying the appropriate $\ext$-operation to the state it is currently in.\\

Observe that the sketched procedure above describes an algorithm to calculate an $\Ext$-algebra recognising the same language $L$ as a given VPA $\cA$. This is possible, since there are finitely many functions $g$ from the states of $\cA$ to its states and hence for each such function there exists a $u \in \WMSigma$ and a stack symbol $G$, such that $g = f_{u,G}$. Moreover, the length of $u$ is bounded by a constant dependant only on the number of states of $\cA$. Hence, the representatives $u$ for the equivalence classes can be found in finite time. One may then construct the syntactic $\Ext$-algebra of $L$ by a standard minimisation procedure, merging equivalence classes.

\section{Proofs of Section \ref{sec:Algebra}}

We say that an $\Ext$-algebra $R$ divides an $\Ext$-algebra $S$, if $R$ is the quotient of a sub of $S$. Moreover, the syntactic $\Ext$-algebra of a languages $L \subseteq \WMSigma$ is denoted by $\Ext(L)$.

\subsection{Proof of Proposition \ref{prop:RecognitionDivision}:}
\begin{proof}
Let us first prove the converse direction and suppose that $S$ is some $\Ext$-algebra that is
divided by $\Ext(L)$. Hence there exists a subalgebra $T$ of $S$ and a surjective morphism
$\pi \colon T \rightarrow \Ext(L)$. By $\eta_L$ denote the syntactic morphism of $L$. We show
the existence of a morphism $h \colon \WMSigma \rightarrow S$ such that the following diagram
commutes. 
\begin{center}
\begin{tikzpicture}[->,>=stealth',shorten >=1pt,auto,node distance=3.4cm,semithick,scale=1]
\node(W) at (0,0) {$\WMSigma$};
\node(S) at (2,0) {$S$};
\node(T) at (2,-1.5) {$T$};
\node(E) at (2,-3) {$\Ext(L)$};

\draw[->,dashed] (W) to node[above] {$h$} (S);
\draw[->,dashed] (W) to node[above right] {$h$} (T);
\draw[->>] (W) to node[below left] {$\eta_L$} (E);
\draw[>->] (T) to node[right] {$i$} (S);
\draw[->>] (T) to node[right] {$\pi$} (E);
\end{tikzpicture}
\end{center}
Define the function $h \colon \Intern \rightarrow T$, by choosing $h(a) \in \pi^{-1}(\eta_L(a))$.
Since $\pi$ is surjective, such an element exists. By the unique property of the free
$\Ext$-algebra $\WMSigma$, $h$ naturally extends to a morphism $h \colon \WMSigma \rightarrow
T$ and since $T$ is a subalgebra of $S$, we can view $h$ as a morphism from $\WMSigma$ to $S$. By
the choice of $h$, we have for $w \in \WMSigma$
$$ \pi(h(w)) = \prod_{i=1}^{|w|} \pi(h(w_i)) = \prod_{i=1}^{|w|} \eta(w_i) = \eta(w)$$
and hence, by $L = \eta^{-1}(\eta(L))$
$$w \in L \Leftrightarrow \eta(w) \in \eta(L)
	\Leftrightarrow \pi(h(w)) \in \eta(L)
	\Leftrightarrow h(w) \in \pi^{-1}(\eta(L)).
$$
Thus $L = h^{-1}(\pi^{-1}(\eta(L)))$ and $S$ recognises $L$.

Now assume that $S$ is some $\Ext$-algebra that recognises $L$ by the morphism $h \colon
\WMSigma \rightarrow S$. Then the image of $h$ is a subalgebra of $S$. We show that $h(S)$ has
$\Ext(L)$ as a quotient. For that, we prove that $\eta$ factors through $h$, that is for any
two $u,v \in \WMSigma$, $h(u) = h(v)$ implies $\eta(u) = \eta(v)$.
If $h(u) = h(v)$, since $h$ is a morphism recognising $L$, we
obtain $xuy \in L$ iff $xvy \in L$ and $\ext(u) \in L$ iff $\ext(v) \in L$ and hence
$\eta(u) = \eta(v)$. Thus we can define $\pi \colon h(S) \rightarrow \Ext(L)$ by letting
$\pi(s) = \eta(h^{-1}(s))$. One can then verify that $\pi$ is indeed a morphism and thus
$\Ext(L)$ is a quotient of $S$.
\end{proof}

\subsection{Proof of Theorem \ref{thm:VPL-Ext}:}

In the proof, we use the following notation for VPA:

\begin{definition}[Visibly pushdown automaton (VPA)]
A \emph{visibly pushdown automaton} is a tuple $(\VPSigma,Q,q_0,\Gamma,\#,\delta,F)$, where
\begin{itemize}
\item $\VPSigma$ is a visibly pushdown alphabet,
\item $Q$ is a finite set, the set of \emph{states},
\item $q_0 \in Q$ is the \emph{initial state},
\item $\Gamma$ is a finite alphabet, the \emph{stack alphabet},
\item $\# \in \Gamma$ is the \emph{bottom-of-stack symbol},
\item $\delta \colon A \times Q \times \Gamma \rightarrow Q \times \Gamma^*$ is the transition function with the following restrictions:
For $q \in Q, a \in A$ and 
$$\delta(a,q,G) = (q',G'),$$
where $G \in \Gamma$ and $G' \in (\Gamma \backslash \{\#\})^*$, it must hold that
	\begin{itemize}
		\item if $a \in \Call$, then $G' = G_0 G$, for some $G_0 \in \Gamma \backslash \{\#\}$.
		\item if $a \in \Return$, then $G' = \emptyword$ .
		\item if $a \in \Intern$, then $G' = G$.
	\end{itemize}
\item and $F \subseteq Q$ is the set of final states.
\end{itemize}
\end{definition}
\begin{definition}[Language of a VPA]
Let $M = (A,Q,q_0,\Gamma,\#,\delta,F)$ be a visibly pushdown automaton and let $k \in \N$ and $G_i \in \Gamma$ for $i = 1,\dots,k$. We define the \emph{extended transition function}, denoted by $\widehat{\delta} \colon A^* \times Q \times \Gamma^* \rightarrow Q \times \Gamma^*$, inductively as
\begin{itemize}
\item $\widehat{\delta}(\emptyword,q,G_0 \dots G_k) = (q,G_0 \dots G_k)$
\item $\widehat{\delta}(aw,q,G_0 \dots G_k) = \widehat{\delta}(w,q',G' G_1 \dots G_k)$, where $(q',G') = \delta(a,q,G_0)$. 
\end{itemize}
The language accepted by $M$ is the language
$$L(M) = \{w \in A^* \mid \widehat{\delta}(w,q_0,\#) \in F \times \{\#\}\}$$
\end{definition}

\begin{lemma}\label{lem:stackBehav}
Let $w \in \WMSigma$ and let $M = (A,Q,q_0,\Gamma,\#,\delta,F)$ be a VPA. Then
\[\what{\delta}(w,q,G) = \pi_Q(\what{\delta}(w,q,G)) \times \{G\}\]
\end{lemma}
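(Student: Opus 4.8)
The plan is to prove the statement by induction on the structure of the well-matched word $w$, following the inductive definition of $\WMSigma$. The claim says that running the VPA on a well-matched word $w$ starting from any state $q$ with any single stack symbol $G$ on top leaves the top-of-stack symbol unchanged (and, implicitly, touches nothing below it); only the state component may change. The base cases are $w = \lambda$ and $w = c$ for an internal letter $c \in \Intern$. For $w = \lambda$ the extended transition function returns $(q,G)$ by definition, so there is nothing to do. For $w = c$ internal, the VPA restriction forces $\delta(c,q,G) = (q',G)$, i.e.\ $G' = G$, so again the stack is preserved.

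For the inductive step there are two cases matching the two closure rules. First, if $w = uv$ with $u,v \in \WMSigma$ both shorter, I would argue that $\what\delta(uv,q,G)$ can be computed by first running $u$ from $(q,G)$ and then running $v$ from the resulting configuration; this requires a small auxiliary observation that $\what\delta$ respects concatenation of input words (a routine induction on $|u|$ from the defining recursion of $\what\delta$), and also that the behaviour of the automaton on $v$ depends only on the current state and the current top-of-stack symbol, not on what lies below — which is exactly why it is enough to track a single symbol $G$. By the induction hypothesis applied to $u$, after reading $u$ we are in some state $q'$ with top-of-stack still $G$; by the induction hypothesis applied to $v$ from $(q',G)$, after reading $v$ the top-of-stack is again $G$. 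Hence $\what\delta(uv,q,G) = \pi_Q(\what\delta(uv,q,G)) \times \{G\}$.

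Second, if $w = a u b$ with $a \in \Call$, $b \in \Return$ and $u \in \WMSigma$ well-matched, I would reason as follows. Reading $a$ from $(q,G)$ pushes a symbol: $\delta(a,q,G) = (q_1, G_0 G)$ with $G_0 \in \Gamma \setminus \{\#\}$, so the configuration becomes $(q_1, G_0 G)$. Now run $u$: by the induction hypothesis (and again the fact that the run on $u$ only sees the top symbol $G_0$), after $u$ we are in some state $q_2$ with stack $G_0 G$ — the symbol $G_0$ is still on top and $G$ is untouched below it. Finally reading the return letter $b$ from top-of-stack $G_0$ forces $\delta(b,q_2,G_0) = (q_3,\lambda)$, popping $G_0$ and exposing $G$ again; the configuration is $(q_3, G)$. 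Thus the top-of-stack after $aub$ is $G$, as claimed.

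The main obstacle is not any single case but the bookkeeping needed to make precise the repeated appeal to ``the run on a subword only depends on the top-of-stack symbol.'' Concretely, one should first establish as a preliminary the concatenation property $\what\delta(uv,q,\gamma) $ obtained by feeding the result of $\what\delta(u,q,\gamma)$ into $\what\delta(v,\cdot,\cdot)$, together with a ``stack-extension'' property stating that if $\what\delta(u,q,\gamma) = (q',\gamma')$ then $\what\delta(u,q,\gamma\gamma'') = (q',\gamma'\gamma'')$ for any suffix $\gamma''$ that $u$ does not reach — for well-matched $u$ this is automatic since, by the statement being proved, a well-matched word never pops below where it started. These two facts let the induction go through cleanly, and each is itself a short induction on word length using only the defining clauses of $\delta$ and $\what\delta$.
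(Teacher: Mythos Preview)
Your proposal is correct, and it is exactly the natural argument: the paper itself gives no proof at all, merely remarking that ``the proof of this Lemma is entirely straight-forward,'' so your structural induction on the definition of $\WMSigma$ is precisely what a reader is expected to supply. The one point worth tightening is the slight circularity you flag at the end---the stack-extension property for well-matched $u$ is essentially the lemma itself---which is most cleanly handled by strengthening the induction hypothesis to the statement $\what\delta(w,q,G\gamma) = (\pi_Q(\what\delta(w,q,G)),\,G\gamma)$ for arbitrary $\gamma \in \Gamma^*$, so that the $awb$ case goes through directly without a separate auxiliary lemma.
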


The proof of this Lemma is entirely straight-forward.
\begin{proof}
Let $L \subseteq \FreeExt$ be a VPL and let $M_L = (A,Q,q_0,\Gamma,\#,\delta,F)$ be a VPA accepting $L$. Recall that by $\pi_Q \colon Q \times \Gamma^* \rightarrow Q$, we denote the projection to the state. We will now define an equivalence on well matched words, based on the states of the automaton $M_L$. Let $w \in \FreeExt, G \in \Gamma$ and define the function
\begin{align*}
f_{w,G} \colon Q &\rightarrow Q\\
q 	     &\mapsto \pi_Q(\widehat{\delta}(w,q,G)).
\end{align*}
Observe that for $w_1,w_2 \in \FreeExt$ the relation
$$w_1 \sim_{M_L} w_2 \text{ iff for all $G \in \Gamma$, } f_{w_1,G} = f_{w_2,G}$$
is an equivalence relation on $\FreeExt$. Note that it is also a congruence on $\FreeExt$, since for $w_1,w_2,z \in \FreeExt$ with $w_1 \sim_{M_L} w_2$ and $G \in \Gamma$, we have
\begin{align*}
\pi_Q(\widehat{\delta}(zw_1,q,G)) &= \pi_Q(\widehat{\delta}(w_1,\widehat{\delta}(z,q,G)))\\
				 &= \pi_Q(\widehat{\delta}(w_1,\widehat{\delta}(\emptyword,q',G))) \text{ for some $q'$, since $z$ is well-matched,}\\
				 &= \pi_Q(\widehat{\delta}(w_1,q',G))\\
				 &= \pi_Q(\widehat{\delta}(w_2,q',G))\text{ since $f_{w_1,G} = f_{w_2,G}$,}\\
				 &= \pi_Q(\widehat{\delta}(w_2,\widehat{\delta}(z,q,G)))\\ 
				 &= \pi_Q(\widehat{\delta}(zw_2,q,G)).\\ 
\end{align*}
The case for $w_1z$ and $w_2z$ follows from Lemma \ref{lem:stackBehav}. Combining the two yields $xw_1y \sim_{M_L} xw_2y$ for $x,y \in \FreeExt$. Moreover
\begin{align*}
\pi_Q(\widehat{\delta}(aw_1b,q,G)) &= \pi_Q(\widehat{\delta}(w_1b,\delta(a,q,G)))\\
					     &= \pi_Q(\widehat{\delta}(w_1b,q',G_aG)) \text{ for some $q' \in Q, G_a \in \Gamma$,}\\
					     &= \pi_Q(\widehat{\delta}(b,\pi_Q(\widehat{\delta}(w_1,q',G_a)),G_aG)) \text{ by Lemma \ref{lem:stackBehav},}\\
					     &= \pi_Q(\widehat{\delta}(b,\pi_Q(\widehat{\delta}(w_2,q',G_a)),G_aG)) \text{ since $f_{w_1,G_a} = f_{w_2,G_a}$,}\\
					     &= \pi_Q(\widehat{\delta}(w_2b,q',G_aG))\\
					     &= \pi_Q(\widehat{\delta}(w_2b,\delta(a,q,G)))\\
					     &= \pi_Q(\widehat{\delta}(\ext(w_2),q,G))\\
\end{align*}
Since $\Gamma$ is finite and there are $|Q|^{|Q|}$ different functions from $Q$ to $Q$, $\sim_{M_L}$ has at most $|\Gamma| \cdot |Q|^{|Q|}$ congruence classes and $\sim_{M_L}$ is finite, which also makes $\WMSigma \backslash_{\sim{M_L}}$ finite. By construction if $w_1 \sim_{M_L} w_2$, then $f_{w_1,\#}(q_0) = f_{w_2,\#}(q_0)$ and hence
$$w_1 \in L \Leftrightarrow \pi_Q(\widehat{\delta}(w_1,q_0,\#)) \in F \Leftrightarrow \pi_Q(\widehat{\delta}(w_2,q_0,\#)) \in F \Leftrightarrow w_2 \in L,$$
which implies that $\sim_{M_L}$ is a refinement of the syntactic congruence, which in turn implies that the syntactic $\Ext$-algebra of $L$ divides $\FreeExt /_{\sim_{M_L}}$. Thus $L$ is recognised by a finite $\Ext$-algebra by Proposition \ref{prop:RecognitionDivision}.

For the converse direction, assume that $L$ is recognised by a finite $\Ext$-algebra $R$ via a morphism $h \colon \Ext\rightarrow R$. We construct a visibly pushdown automaton $M = (A,Q,q_0,\Gamma,\#,\delta,F)$ recognising $L$ as follows
\begin{itemize}
\item $Q = R$
\item $q_0 = h(\lambda)$
\item $\Gamma = \{\#\} \cup (Q \times \Call)$
\item $\delta \colon A \times Q \times \Gamma \rightarrow Q \times \Gamma^*$ is defined as follows: Let $a,a' \in A$ and $q,q' \in Q$. Then
\[
\delta(a,q,(q',a')) = 
\begin{cases}
(h(\emptyword),(q,a)(q',a')) & \text{ if $a \in \Call$,}\\
 (q' \cdot \ext[a'a](q),\emptyword) & \text{ if $a \in \Return$ and}\\
 (q \cdot h(a),(q',a')) & \text{ if $a \in \Intern$}\\
\end{cases}
\]
\item $F = h(L)$.
\end{itemize}
We have to show that for each $w \in \FreeExt$, $\pi_Q(\widehat{\delta}(w,q_0,\#)) \in F$ if and only if $w \in L$. We prove this by showing that $\widehat{\delta}(w,q,G) = (q \cdot h(w),G)$ for each $G \in \Gamma^*$ by induction on the structure of words in $\FreeExt$.

\textbf{Inductive start:}
Let $w \in \Intern$ then $$\widehat{\delta}(w,q,G) = \delta(w,q,G) = (h(\emptyword) \cdot h(w),G).$$

\textbf{Inductive step:} Let $w,w_1,w_2 \in \FreeExt$ be some words for which the claim holds.
Then
\begin{align*}
\widehat{\delta}(w_1 \cdot w_2,q,G) &=\widehat{\delta}(w_2,\widehat{\delta}(w_1,q,G)\\
					     &= \widehat{\delta}(w_2,q \cdot h(w_1),G)\\
					     &= (q \cdot h(w_1) \cdot h(w_2),G)\\
					     &= (h(w_1 \cdot w_2),G)
\end{align*}
and
\begin{align*}
\widehat{\delta}(awb,q,G) &=\widehat{\delta}(wb,\delta(a,q,G))\\
					     &=\widehat{\delta}(wb,h(\emptyword),(q,a)G)\\
					     &=\widehat{\delta}(b,\pi_Q(\widehat{\delta}(w,h(\emptyword),(q,a))),G)\\
					     &=\widehat{\delta}(b,h(w),(q,a)G)\\
					     &=\delta(b,h(w),(q,a)G)\\
					     &= (q \cdot \ext(h(w)),G)\\
					     &= (q \cdot h(awb),G)
\end{align*}
Setting $q = q_0 = h(\emptyword)$ proves the claim.
\end{proof}

\section{Varieties and an Eilenberg Theorem}\label{sec:Eilenberg}

In this section, we show that there is a one-to-one correspondence between classes of VPL and classes of $\Ext$-algebras with certain closure properties.\\

\textbf{Note.} To readers familiar with universal algebra, these closure properties should come as no surprise, see pseudo-varieties in \cite{Al95}. However, to keep the subject accessible to a broader community, we refrain from using the slang of universal algebra. It should however be mentioned, that it might be possible to obtain these results using category theoretic machinery as, for instance, in \cite{ACMU16} or \cite{Bo15}.\\

We define the following operations on well-matched words: If $L \subseteq \FreeExt$ is a language of well-matched words and $u,v \in A^*$ words such that $uv \in \WMSigma$, then
\[\ext[u,v]^\mone(L) = \{w \in \FreeExt \mid \ext[u,v](w) \in L\},\]
which we call an inverse $\Ext$-operation. Observe that, for instance, $\ext[w,\lambda]^\mone(L)$ for $w \in \WMSigma$ is very similar to what is known under the name of quotients by words for languages over $A^*$.

\begin{definition}[Pseudo-Variety of VPL]
A pseudo-variety of visibly pushdown languages is a class $\cV$ of languages of well-matched words such that
\begin{enumerate}
\item for each visibly pushdown alphabet $\VPSigma$, the set $\cV(\VPSigma)$ is a Boolean algebra of VPL over $\FreeExt$,
		\item the set $\cV(\VPSigma)$ is closed under inverse extend operations, that is for $L \in \cV(\VPSigma)$ and $u,v \in A^*$ such that $uv \in \WMSigma$, $\ext[u,v]^\mone(L)$ is an element of $\cV(\VPSigma)$,
		\item and $\cV$ is closed under inverse morphisms, that is if $\psi \colon \FreeExt \rightarrow \WMSigma[B]$ is a morphism of $\Ext$-algebras, then $L \in \cV(\VPSigma[B])$ implies $\psi^\mone(L) \in \cV(\VPSigma)$.
\end{enumerate}
\end{definition}

To define the closure properties of the corresponding classes of $\Ext$-algebras, we need the notion of (finite) products: 

\begin{definition}[Direct Product]
  Let $R$ and $S$ be two $\Ext$-algebras, then their direct product $R \times S$ is the monoid $R \times S$ with component wise multiplication and $\cO(R \times S)$ is generated by the maps 
  \[f_R(r,s) = (e_R(r),s) \text{ and } f_S(r,s) = (r,e_S(s)) \text{ for } e_R \in \cO(R) \text{ and } e_S \in \cO(S).\]
\end{definition}

\begin{definition}[Peudo-Variety of $\Ext$-algebras]
A class $\bV$ of $\Ext$-algebras is a pseudo-variety, if it is closed under
\begin{itemize}
\item subs, that is if $S \in \bV$ and $R$ is a sub of $S$, then $R \in \bV$,
\item quotients, that is if $S \in \bV$ and $R$ is a quotient of $S$, then $R \in \bV$ and 
\item finite direct products, that is if $R,S \in \bV$ then $R \times S \in \bV$.
\end{itemize}
\end{definition}

\begin{theorem}\label{thm:eil}
There is a one-to-one correspondence between varieties of VPL and varieties of $\Ext$-algebras.
\end{theorem}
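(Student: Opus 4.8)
The plan is to follow the pattern of Eilenberg's theorem for monoids, with Proposition~\ref{prop:RecognitionDivision} playing the role of the fact that a language is recognised by a monoid iff its syntactic monoid divides it. To a pseudo-variety $\bV$ of $\Ext$-algebras I would associate the class $\cV_{\bV}$ with $\cV_{\bV}(A)$ the set of all $L\subseteq\WMSigma$ recognised by some member of $\bV$; to a pseudo-variety $\cV$ of VPL I would associate the smallest pseudo-variety $\bV_{\cV}$ of $\Ext$-algebras containing all syntactic $\Ext$-algebras $\Ext(L)$ with $L\in\cV(A)$ — which, by a routine argument, consists exactly of the $\Ext$-algebras dividing a finite direct product of such $\Ext(L)$'s. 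The first task is to check that $\cV_{\bV}$ is a pseudo-variety of VPL: closure under intersection and complement comes from $R\times S$ recognising $L\cap L'$ and from a recogniser of $L$ also recognising $\WMSigma\setminus L$ (the trivial $\Ext$-algebra, which lies in any pseudo-variety, takes care of $\emptyset$ and $\WMSigma$); closure under inverse extend operations holds because, if $h\colon\WMSigma\to R$ recognises $L$ and $e\in\cO(R)$ is the image of $\ext[u,v]$, then $\ext[u,v]^{\mone}(L)=h^{\mone}(e^{\mone}(h(L)))$ is again recognised by $R$; and closure under inverse morphisms is immediate by composition. That $\bV_{\cV}$ is a pseudo-variety is automatic.

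Next I would prove $\bV_{\cV_{\bV}}=\bV$. The inclusion $\bV_{\cV_{\bV}}\subseteq\bV$ is easy: each generator $\Ext(L)$ has $L$ recognised by some $S\in\bV$, so $\Ext(L)$ divides $S$ by Proposition~\ref{prop:RecognitionDivision} and lies in $\bV$; since $\bV$ is closed under subs, quotients and finite products, the generated class stays inside. For $\bV\subseteq\bV_{\cV_{\bV}}$, fix $R\in\bV$. Taking one internal letter for each element of a finite generating set of the monoid $R$, and one call/return pair for each of a finite generating set of $\cO(R)$ other than left and right multiplications, yields a finite visibly pushdown alphabet $A$ together with a surjective morphism $h\colon\WMSigma\twoheadrightarrow R$. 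For every $X\subseteq R$ the language $L_X=h^{\mone}(X)$ is recognised by $R$, so $\Ext(L_X)\in\bV_{\cV_{\bV}}$. I would then check that the product of syntactic morphisms $\WMSigma\to\prod_{X\subseteq R}\Ext(L_X)$ factors through $h$ and that the induced morphism $R\to\prod_{X}\Ext(L_X)$ is injective: if $r_1\neq r_2$, pick $u,v$ with $h(u)=r_1$, $h(v)=r_2$ and $X=\{r_1\}$, so $u\in L_X$ while $v\notin L_X$, and the $X$-components already differ. Hence $R$ is a sub of a finite product of members of $\bV_{\cV_{\bV}}$, and therefore $R\in\bV_{\cV_{\bV}}$.

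Then I would prove $\cV_{\bV_{\cV}}=\cV$. The inclusion $\cV\subseteq\cV_{\bV_{\cV}}$ holds because for $L\in\cV(A)$ the $\Ext$-algebra $\Ext(L)$ lies in $\bV_{\cV}$ and recognises $L$ via $\eta_L$. For the converse, let $L\in\cV_{\bV_{\cV}}(A)$ be recognised by some $R\in\bV_{\cV}$ through $h\colon\WMSigma\to R$. By the description of $\bV_{\cV}$, $R$ divides a finite product $P=\Ext(L_1)\times\cdots\times\Ext(L_n)$ with $L_i\in\cV(A_i)$; lifting $h$ through the sub and the quotient witnessing that $R$ divides $P$ (using freeness of $\WMSigma$) produces a morphism $g=(g_1,\dots,g_n)\colon\WMSigma\to P$ and a subset $Z\subseteq P$ with $L=g^{\mone}(Z)$. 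Writing $Z$ as the finite union of its elements, $L=\bigcup_{z\in Z}\bigcap_{i} g_i^{\mone}(z_i)$, so it suffices to show $g_i^{\mone}(s)\in\cV(A)$ for each $i$ and $s\in\Ext(L_i)$. Lifting $g_i$ through the syntactic morphism $\eta_{L_i}\colon\WMSigma[A_i]\twoheadrightarrow\Ext(L_i)$ to $\tilde g_i\colon\WMSigma\to\WMSigma[A_i]$ reduces this, by closure under inverse morphisms, to $\eta_{L_i}^{\mone}(s)\in\cV(A_i)$. Finally, choosing $u$ with $\eta_{L_i}(u)=s$, one has $\eta_{L_i}^{\mone}(s)=\{v\mid \forall e\in\cO(\WMSigma[A_i]),\,e(v)\in L_i\Leftrightarrow e(u)\in L_i\}$; since $\Ext(L_i)$ is finite, only finitely many of the sets $e^{\mone}(L_i)$ are distinct, each being an inverse extend operation applied to $L_i\in\cV(A_i)$, so $\eta_{L_i}^{\mone}(s)$ is a finite Boolean combination of members of $\cV(A_i)$ and hence lies in $\cV(A_i)$. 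This gives $L\in\cV(A)$, and the two composites being identities yields the claimed bijection.

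The step I expect to be the main obstacle is the embedding of an arbitrary finite $\Ext$-algebra $R$ into a finite product of syntactic $\Ext$-algebras of languages it recognises, and in particular the prior step of producing a finite visibly pushdown alphabet together with a surjection onto $R$. Here the two-sorted nature of an $\Ext$-algebra — the monoid $R$ together with the submonoid $\cO(R)\subseteq R^R$ — complicates the bookkeeping compared with the monoid case: one must observe that $\cO(R)$ is finitely generated and that every one of its generators which is not a left or right multiplication can be realised as the image of some $\ext[a,b]$ with $a$ a call and $b$ a return letter, so finitely many call/return pairs suffice. The same two-sortedness also calls for mild care whenever freeness of $\WMSigma$ (projectivity) or closure under inverse morphisms is invoked — the lifted morphisms must be checked to respect both sorts — but these checks are routine.
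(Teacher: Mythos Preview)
Your proposal is correct and follows essentially the same Eilenberg-style argument as the paper: both directions of the correspondence are defined identically, the key lemma that $\eta_L^{-1}(s)$ is a finite Boolean combination of inverse extend operations of $L$ is your inline argument and the paper's Proposition~\ref{prop:VarietyRecAtoms}, and the embedding of $R$ into a product of syntactic $\Ext$-algebras of the languages $h^{-1}(\cdot)$ is exactly the paper's injectivity proof (the paper indexes over singletons $m\in R$ rather than subsets $X\subseteq R$, but this is immaterial). If anything you are more careful than the paper about the point you flag as the main obstacle---producing a finite visibly pushdown alphabet with a surjection $\WMSigma\twoheadrightarrow R$ on both sorts---which the paper simply takes for granted; your resolution via finite generating sets for $R$ and $\cO(R)$ is the right one and, as you suspect, routine.
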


\textbf{Proof Sketch:} We define the correspondence $\bV \rightarrow \cV$ by sending a pseudo-variety of $\Ext$-algebras $\bV$ to the class of all languages recognised by members of $\bV$. This class turns out to be a pseudo-variety of VPL. Conversely, we define the correspondence $\cV \rightarrow \bV$, where a pseudo-variety of VPL is sent to the pseudo-variety generated by all syntactic $\Ext$-algebras. One then shows that these correspondences are mutually inverse bijections. The constructions are similar to those in the proof of the original Eilenberg theorem (or see for instance \cite{St02}).

\section{Proofs of Section \ref{sec:Eilenberg}}
\subsection{Proof of Theorem \ref{thm:eil}:}

The proofs following from here up to the proofs of Section \ref{sec:application} are almost entirely along the lines of the book of Jean-\'{E}ric Pin - Mathematical Foundations of Automata Theory \footnote{Version November 30, 2016: \texttt{https://www.irif.fr/~jep/PDF/MPRI/MPRI.pdf}}. The parts for the monoid component of $\Ext$-algebras are very similar to proofs for finite monoids, the $\ext$ operations are added.

Since the following proofs do not differ significantly from the ones that give the original Eilenberg theorem between varieties of regular languages and varieties of finite monoids, we keep them short.

\begin{proposition}\label{prop:VarietyRecAtoms}
  Let $\cV$ be a variety of VPL, $L \in \cV(\VPSigma)$ and $\eta_L \colon \FreeExt \rightarrow R$ the syntactic morphism. Then for each $x \in R$, $\eta^\mone(x) \in \cV(\VPSigma)$.
\end{proposition}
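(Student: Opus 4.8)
The plan is to express each preimage $\eta_L^{\mone}(x)$ as a finite Boolean combination of sets obtained from $L$ by applying inverse $\Ext$-operations, and then invoke the closure properties of the variety $\cV$ directly. The key observation is the definition of the syntactic congruence: $x \sim_L y$ iff $e(x) \in L \Leftrightarrow e(y) \in L$ for all $e \in \cO(\WMSigma)$. So the singleton $\{x\}$ in $R$ (equivalently the class $\eta_L^{\mone}(x) \subseteq \FreeExt$) is carved out by specifying, for each $e \in \cO(\WMSigma)$, whether or not $e(w) \in L$. Concretely, $w \sim_L w'$ precisely when $w \in \ext[u,v]^{\mone}(L) \Leftrightarrow w' \in \ext[u,v]^{\mone}(L)$ for every pair $u,v \in A^*$ with $uv \in \WMSigma$ (recall $\cO(\WMSigma)$ consists exactly of the maps $\ext[u,v]$). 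Hence $\eta_L^{\mone}(x)$ is the intersection $\bigcap_{e} E_e$, where $E_e$ is either $\ext[u,v]^{\mone}(L)$ or its complement $\FreeExt \setminus \ext[u,v]^{\mone}(L)$, according to the "type" of $x$.

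The main obstacle is that this intersection ranges over infinitely many $e \in \cO(\WMSigma)$, and a variety is only assumed closed under \emph{finite} Boolean operations. The fix is finiteness of $R$: since $R$ is finite (being the syntactic $\Ext$-algebra of a VPL, which is finite by Theorem~\ref{thm:VPL-Ext}), there are only finitely many elements $y \in R$, and for each $y \neq x$ one can pick a \emph{single} witnessing operation $e_y \in \cO(\WMSigma)$ separating $x$ from $y$, i.e. with $\eta_L(e_y(x)) \in \eta_L(L) \not\ni \eta_L(e_y(y))$ or vice versa. Since $\eta_L$ is a morphism, $e_y(w) \in L \Leftrightarrow w \in \ext[u_y,v_y]^{\mone}(L)$ for the corresponding $u_y,v_y$, so the set $E_{e_y}$ (the appropriate one of $\ext[u_y,v_y]^{\mone}(L)$ or its complement) contains $\eta_L^{\mone}(x)$ but is disjoint from $\eta_L^{\mone}(y)$. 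Therefore
\[
  \eta_L^{\mone}(x) \;=\; \bigcap_{y \in R,\, y \neq x} E_{e_y},
\]
a finite intersection.

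To finish, I would note that each building block lies in $\cV(\VPSigma)$: $L \in \cV(\VPSigma)$ by hypothesis, so $\ext[u_y,v_y]^{\mone}(L) \in \cV(\VPSigma)$ by closure under inverse $\Ext$-operations (clause 2 of the definition of pseudo-variety of VPL), and its complement lies in $\cV(\VPSigma)$ since $\cV(\VPSigma)$ is a Boolean algebra (clause 1); the finite intersection is again in $\cV(\VPSigma)$ for the same reason. Hence $\eta_L^{\mone}(x) \in \cV(\VPSigma)$, as claimed. The only slightly delicate point to write carefully is the translation between "$e(w) \in L$" for $e = \ext[u,v] \in \cO(\WMSigma)$ and membership of $w$ in $\ext[u,v]^{\mone}(L)$, which is immediate from the definition of the inverse $\Ext$-operation, together with the fact that $w \sim_L w'$ is equivalent to $w,w'$ agreeing on all these membership questions — i.e. that the family $\{\ext[u,v]^{\mone}(L)\}_{uv \in \WMSigma}$ generates the syntactic congruence.
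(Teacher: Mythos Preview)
Your argument is correct and follows essentially the same strategy as the paper: express $\eta_L^{-1}(x)$ as a finite Boolean combination of sets of the form $\ext[u,v]^{-1}(L)$, then invoke the closure properties of $\cV(\VPSigma)$. The paper carries this out by working inside $R$, writing $\{x\}=C\setminus N$ where $C$ is the intersection of all such quotients containing $x$ and $N$ the union of those not containing $x$, and then pulling back along $\eta_L$; you instead stay in $\FreeExt$ and pick, for each $y\neq x$, a single separating operation $e_y$, giving $\eta_L^{-1}(x)=\bigcap_{y\neq x}E_{e_y}$. Both reductions to a \emph{finite} Boolean combination hinge on the finiteness of $R$, and your version makes this finiteness step slightly more explicit.
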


\begin{proof}
  Let $K$ be the syntactic image of $L$, that is $K = \eta_L(L)$. Then $L = \eta_L^\mone(K)$. We prove that we can express $x$ as a Boolean combination of quotients and inverse extend operations of $K$.
  Define the sets
  \[C = \bigcap_{\substack{s,t \in \FreeExt\\ x \in s^\mone K t^\mone}} s^\mone K t^\mone \cap \bigcap_{\substack{(a,b) \in A_C \times A_R \\ x \in \ext^\mone(K)}} \ext^\mone(K)\]
  and
  \[N = \bigcup_{\substack{s,t \in \FreeExt\\ x \notin s^\mone K t^\mone}} s^\mone K t^\mone \cup \bigcup_{\substack{(a,b) \in A_C \times A_R \\ x \notin \ext^\mone(K)}} \ext^\mone(K).\]
  Then $u \in C \backslash N$ if and only if $u \sim_K x$, where $\sim_K$ is the syntactic congruence of $K$ on $R$. Since $R$ is the syntactic $\Ext$-algebra of $L$ and $K = \eta(L)$, it follows that
  \[C \backslash N = \{x\}.\]
  Since $\eta^\mone(K) \in \cV(\VPSigma)$ and the preimage of a morphism commutes both with quotients by words and inverse extend operations, we obtain $\eta^\mone(\{x\} \in \cV(A^*)$.
\end{proof}

\begin{proposition}\label{prop:ExtVarFormsLangVar}
   Let $\bV$ be a variety of $\Ext$-algebras, then the languages recognised by $\bV$ form a variety of well-matched languages.
\end{proposition}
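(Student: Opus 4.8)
The plan is to show that the class $\cV$ of all well-matched languages recognised by members of $\bV$ satisfies the three defining properties of a pseudo-variety of VPL, working through them in that order. By Theorem~\ref{thm:VPL-Ext}, every language recognised by a finite $\Ext$-algebra is a VPL, so membership of each $\cV(\VPSigma)$ in the VPL follows immediately, and it remains to verify Boolean closure, closure under inverse $\Ext$-operations, and closure under inverse morphisms.

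For the \textbf{Boolean algebra} property, fix a visibly pushdown alphabet $\VPSigma$. If $L = \psi^\mone(P)$ for some morphism $\psi \colon \FreeExt \to R$ with $R \in \bV$ and $P \subseteq R$, then $\FreeExt \setminus L = \psi^\mone(R \setminus P)$ is recognised by the same $R$, so $\cV(\VPSigma)$ is closed under complement. For $L_1 = \psi_1^\mone(P_1)$ and $L_2 = \psi_2^\mone(P_2)$ with $R_1, R_2 \in \bV$, form the product morphism $\psi_1 \times \psi_2 \colon \FreeExt \to R_1 \times R_2$; since $\bV$ is closed under finite direct products, $R_1 \times R_2 \in \bV$, and $L_1 \cap L_2$ (resp.\ $L_1 \cup L_2$) is the preimage of $P_1 \times P_2$ (resp.\ of $(P_1 \times R_2) \cup (R_1 \times P_2)$). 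Note that $\psi_1 \times \psi_2$ is genuinely a morphism of $\Ext$-algebras by the definition of the direct product, since $\cO(R_1 \times R_2)$ contains exactly the maps built from $f_{R_1}, f_{R_2}$, which are the images of the generating $\ext$-operations on $\FreeExt$. The empty language and $\FreeExt$ itself are recognised by the trivial $\Ext$-algebra, which lies in $\bV$ (as a quotient of any member), so $\cV(\VPSigma)$ is a Boolean algebra.

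For \textbf{closure under inverse $\Ext$-operations}, suppose $L = \psi^\mone(P)$ with $\psi \colon \FreeExt \to R$, $R \in \bV$, and let $u,v \in A^*$ with $uv \in \WMSigma$. Since $uv$ is well-matched, the map $\ext[u,v]$ lies in $\cO(\WMSigma)$, so $e := \psi(\ext[u,v]) \in \cO(R)$, and for every $w \in \FreeExt$ we have $\psi(\ext[u,v](w)) = e(\psi(w))$. Hence $w \in \ext[u,v]^\mone(L)$ iff $e(\psi(w)) \in P$ iff $w \in \psi^\mone(e^\mone(P))$, so $\ext[u,v]^\mone(L)$ is recognised by $R$ via the same morphism $\psi$ and the set $e^\mone(P)$. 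For \textbf{closure under inverse morphisms}, let $\varphi \colon \FreeExt \to \WMSigma[B]$ be a morphism of $\Ext$-algebras and $L = \psi^\mone(P) \in \cV(\VPSigma[B])$ with $\psi \colon \WMSigma[B] \to R$, $R \in \bV$. Then $\psi \circ \varphi \colon \FreeExt \to R$ is again a morphism of $\Ext$-algebras (composition of morphisms is a morphism, and $\cO$-components compose), and $\varphi^\mone(L) = (\psi \circ \varphi)^\mone(P)$ is recognised by $R \in \bV$. Thus $\cV$ is closed under inverse morphisms, completing the verification.

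The argument is entirely routine; the only point requiring a moment's care is checking that the product morphism in the Boolean-closure step is in fact a morphism of $\Ext$-algebras in the sense of the definition — i.e.\ that its $\cO$-component lands in $\cO(R_1\times R_2)$ as defined — which is where the specific generation clause in the Direct Product definition is used. Everything else is the standard transfer of the monoid-recognition arguments to $\Ext$-algebras, with $\ext$-operations handled exactly as left/right multiplications are in the classical proof.
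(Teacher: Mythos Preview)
Your proposal is correct and follows essentially the same route as the paper's proof: Boolean closure via the direct product, inverse $\ext$-operations handled by pulling back along the image operation $e = \psi(\ext[u,v])$ in $\cO(R)$, and inverse morphisms by composition. The paper is considerably terser (it simply asserts that closure under direct products yields the Boolean algebra property), whereas you spell out the product-morphism construction and flag the one non-automatic point, namely that the $\cO$-component of $\psi_1 \times \psi_2$ actually lands in $\cO(R_1 \times R_2)$ as defined; this is a welcome addition rather than a divergence in strategy.
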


\begin{proof}
  Denote by $\cV(\VPSigma)$ the set of all languages over $\FreeExt$ that are recognised by elements of $\bV$. Since $\bV$ is closed under direct products, it follows that $\cV(\VPSigma)$ is a Boolean algebra.

  Let $R \in \bV$ and $h \colon \FreeExt \rightarrow R$ a morphism with $L = h^\mone(K)$ for some $K \subseteq R$. It is straight-forward that for any $w \in \FreeExt$
  \[w^\mone L = h^\mone\left( \left\{ m \in R \mid h(w) m \in K  \right\} \right)\]
  and hence $w^\mone L \in \cV(\VPSigma)$. It follows from a similar argument, that also $w^\mone L \in \cV(\VPSigma)$ and $\ext^\mone(L) \in \cV(\VPSigma)$.

  Also, $\cV$ is closed under inverse morphisms, for the reason that if $h \colon \FreeExt[B] \rightarrow R$ recognises $L$, then $h \circ \varphi$ with $\varphi \colon \FreeExt \rightarrow \WMSigma[B]$ recognises $\varphi^\mone(L)$.

\end{proof}

We denote the correspondence sending a variety of $\Ext$-algebras to a variety of VPL by $\bV \rightarrow \cV$, where a variety of $\Ext$-algebras maps to the variety of all languages recognised by members of $\bV$.

\begin{proposition}
  The correspondence $\bV \rightarrow \cV$ is one-to-one.
\end{proposition}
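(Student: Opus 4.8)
The strategy is the one from the classical Eilenberg theorem: show that a variety $\bV$ of $\Ext$-algebras can be reconstructed from its associated language variety $\cV$. Concretely, I would prove the characterisation that for every finite $\Ext$-algebra $R$ one has $R \in \bV$ if and only if every language recognised by $R$ belongs to $\cV$. Once this is available, injectivity is immediate: if $\bV_1$ and $\bV_2$ are both mapped to the same $\cV$, then for every finite $\Ext$-algebra $R$ we get $R \in \bV_1$ iff every language recognised by $R$ lies in $\cV$ iff $R \in \bV_2$, so $\bV_1 = \bV_2$. The forward implication of the characterisation is nothing but the definition of $\cV$: if $R \in \bV$ then any language recognised by $R$ is recognised by a member of $\bV$, hence lies in $\cV$.

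For the converse I would first record an auxiliary fact: every $L \in \cV$ has its syntactic $\Ext$-algebra $\Ext(L)$ in $\bV$. Indeed, $L$ is recognised by some $S \in \bV$, so by Proposition~\ref{prop:RecognitionDivision} $\Ext(L)$ is the quotient of a sub of $S$, and $\bV$ is closed under subs and quotients. Now let $R$ be a finite $\Ext$-algebra all of whose recognised languages lie in $\cV$. Choose a visibly pushdown alphabet $A$ together with a surjective morphism $h \colon \WMSigma \to R$; such a pair exists, since we may take internal letters realising a generating set of the monoid $R$ and, for each $e \in \cO(R)$, a call/return pair $(a_e,b_e)$ with $\ext[a_e,b_e] \mapsto e$. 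For each $x \in R$ put $L_x = h^\mone(x)$; since $h$ is surjective, $L_x = h^\mone(h(L_x))$, so $L_x$ is recognised by $R$, hence $L_x \in \cV(A)$ and $\Ext(L_x) \in \bV$ by the auxiliary fact.

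Finally, consider the morphism $\eta = (\eta_{L_x})_{x \in R} \colon \WMSigma \to \prod_{x \in R} \Ext(L_x)$ into the finite direct product. I would check that $\eta(u) = \eta(v)$ implies $h(u) = h(v)$: if the syntactic images of $u$ and $v$ agree in every $\Ext(L_x)$ then $u \in L_x \Leftrightarrow v \in L_x$ for all $x$, and since the sets $L_x$ partition $\WMSigma$ this forces $h(u) = h(v)$. Hence $h$ factors through $\eta$ as $h = \pi \circ \eta$ for a surjective $\Ext$-algebra morphism $\pi \colon \eta(\WMSigma) \to R$, exactly as in the proof of Proposition~\ref{prop:RecognitionDivision}; consequently $R$ divides the finite product $\prod_{x \in R}\Ext(L_x)$, which lies in $\bV$ by closure under finite direct products, and closure under subs and quotients then yields $R \in \bV$. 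The point requiring care — and the only genuine departure from the monoid argument — is the construction of the surjective morphism $h$ (one must hit both the monoid $R$ and the operation monoid $\cO(R)$) together with the verification that the factorisation $\pi$ respects the unary operations, not merely the monoid structure; both are routine but must be spelled out in the $\Ext$-algebra setting.
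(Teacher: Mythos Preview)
Your proposal is correct and follows essentially the same route as the paper: both arguments pick a morphism $h\colon \WMSigma \to R$, set $L_x = h^{-1}(x)$, observe that each $\Ext(L_x)$ lies in the variety, and conclude by showing that $R$ divides $\prod_{x\in R}\Ext(L_x)$. You are in fact more careful than the paper on two points it leaves implicit --- the existence of a \emph{surjective} $h$ (needed so that $R$ itself, not merely its image, divides the product) and the verification that the factorisation respects the $\ext$-operations.
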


\begin{proof}
  Assume that $\bV$ and $\bW$ are two varieties of $\Ext$-algebras with $\bV \mapsto \cV$ and $\bW \mapsto \cV$. For an $\Ext$-algebra $R \in \bV$ and morphism $h \colon \FreeExt \rightarrow R$ and any $m \in R$, define the language $L_m = h^\mone(m)$. Observe that by \ref{prop:RecognitionDivision}, the syntactic monoid $\Ext(L_m)$ is contained in $\bV$ and $\bW$. Also, $R$ divides
  \[\prod_{m \in R} \Ext(L_m),\]
which results in $R \in \bW$. By the symmetry of the argument $\bV = \bW$.
\end{proof}

To each variety $\cV$ of VPL, associate the variety of $\Ext$-algebras generated by all syntactic $\Ext$-algebras of languages $L \in \cV(\VPSigma)$ for some visibly pushdown alphabet $\VPSigma$. Denote that correspondence by $\cV \rightarrow \bV$.

\begin{theorem}
  The correspondences $\cV \rightarrow \bV$ and $\bV \rightarrow \cV$ are mutually inverse bijections.
\end{theorem}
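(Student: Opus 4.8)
The plan is to write $F$ for the correspondence $\bV \mapsto \cV$ (sending a pseudo-variety of $\Ext$-algebras to its class of recognised languages) and $G$ for the correspondence $\cV \mapsto \bV$ (sending a pseudo-variety of VPL to the pseudo-variety of $\Ext$-algebras generated by the syntactic $\Ext$-algebras of its members). Since the preceding proposition already gives that $F$ is injective, it is enough to prove $F \circ G = \id$: this forces $F$ to be surjective, hence a bijection, with $G$ its two-sided inverse, and then both correspondences are mutually inverse bijections. One inclusion is immediate: if $L \in \cV(\VPSigma)$ then $L$ is recognised by its syntactic $\Ext$-algebra $\Ext(L)$, which lies in $G(\cV)$ by construction, so $L \in (F \circ G)(\cV)(\VPSigma)$.

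For the reverse inclusion, take $L$ recognised by some $R \in G(\cV)$. First I would note that $G(\cV)$ is precisely the class of $\Ext$-algebras dividing a finite product of syntactic $\Ext$-algebras of members of $\cV$ — one checks routinely that this class is closed under subs, quotients and finite products, so it is the generated pseudo-variety — and accordingly pick $L_i \in \cV(\VPSigma[B_i])$, $1 \le i \le k$, with $R$ dividing $S := \Ext(L_1) \times \cdots \times \Ext(L_k)$, a finite $\Ext$-algebra. By Proposition \ref{prop:RecognitionDivision}, $\Ext(L)$ divides $R$, hence, division being transitive, $\Ext(L)$ divides $S$; by Proposition \ref{prop:RecognitionDivision} again, $S$ recognises $L$, via some morphism $h \colon \FreeExt \to S$ with $L = h^\mone(h(L))$. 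Letting $\pi_i \colon S \to \Ext(L_i)$ be the projections and $h_i := \pi_i \circ h$ (a morphism of $\Ext$-algebras), one has
\[ L = \bigcup_{p \in h(L)} \bigcap_{i=1}^{k} h_i^\mone(p_i) , \]
a finite Boolean combination, so it suffices to prove $h_i^\mone(p_i) \in \cV(\VPSigma)$ for every $i$ and every $p_i \in \Ext(L_i)$.

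Fix $i$ and let $\eta_i \colon \WMSigma[B_i] \to \Ext(L_i)$ be the syntactic morphism; it is surjective on both the monoid and the $\cO$-component. Using that $\FreeExt$ is the free $\Ext$-algebra over $A$ — a morphism out of it is determined by freely chosen images of the internal letters and of the operations $\ext$ (for $a \in \Call$, $b \in \Return$) — I lift $h_i$ through $\eta_i$: pick for each internal letter $c$ a word $w_c$ with $\eta_i(w_c) = h_i(c)$, pick for each pair $(a,b)$ an operation $\ext[u,v] \in \cO(\WMSigma[B_i])$ with $\eta_i(\ext[u,v]) = h_i(\ext)$, and let $\varphi \colon \FreeExt \to \WMSigma[B_i]$ be the induced morphism. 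Then $\eta_i \circ \varphi$ and $h_i$ agree on generators, so $\eta_i \circ \varphi = h_i$ and therefore $h_i^\mone(p_i) = \varphi^\mone(\eta_i^\mone(p_i))$. By Proposition \ref{prop:VarietyRecAtoms}, $\eta_i^\mone(p_i) \in \cV(\VPSigma[B_i])$, and since varieties of VPL are closed under inverse morphisms of $\Ext$-algebras, $\varphi^\mone(\eta_i^\mone(p_i)) \in \cV(\VPSigma)$. Hence $h_i^\mone(p_i) \in \cV(\VPSigma)$, so $L \in \cV(\VPSigma)$, which establishes $F \circ G = \id$ and completes the argument.

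I expect the lifting of $h_i$ through $\eta_i$ to be the main obstacle: it is the one place where the $\Ext$-algebra structure genuinely matters, since it uses both the freeness of $\FreeExt$ and the fact that a surjective morphism of $\Ext$-algebras is surjective on the operation monoid (so that each $h_i(\ext)$ has a preimage of the form $\ext[u,v]$). The remaining ingredients — transitivity of division, that the class of divisors of a finite product is a pseudo-variety, and the Boolean decomposition of $L$ — are routine and run exactly as in the classical Eilenberg theorem, the only extra work being to carry the unary $\ext$-operations alongside the monoid multiplication throughout.
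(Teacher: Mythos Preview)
Your proposal is correct and follows essentially the same route as the paper's proof: both show $F\circ G=\id$ by decomposing $L$ as a finite Boolean combination of sets $h_i^{-1}(p_i)$ via the projections of a product of syntactic $\Ext$-algebras, then factor $h_i=\eta_{L_i}\circ\varphi_i$ through the syntactic morphism and invoke Proposition~\ref{prop:VarietyRecAtoms} together with closure under inverse morphisms. The only difference is expository---you spell out the construction of the lift $\varphi_i$ from freeness of $\FreeExt$ and surjectivity of $\eta_{L_i}$ on both components, whereas the paper simply asserts the existence of $\psi_i$ making the diagram commute.
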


\begin{proof}
  Let $\cV$ be a variety of VPL with $\cV \mapsto \bV \mapsto \cW$. We show that $\cV = \cW$.

  Let $L \in \cV(\VPSigma)$, then the syntactic $\Ext$-algebra of $L$ is contained in $\bV$ and hence also $L \in \cW(\VPSigma)$.

  For the converse direction assume that $L \in \cW(\VPSigma)$. By Proposition \ref{prop:RecognitionDivision} the syntactic $\Ext$-algebra of $L$, denoted by $R_L$ is contained in $\bV$. Since $\bV$ is the variety of $\Ext$-algebras generated by all syntactic $\Ext$-algebras of languages of $\cV$, there exist an $n \in \N$ and for $i = 1,\dots,n$ visibly pushdown alphabets $\VPSigma_i$ and languages $L_i \subseteq \WMSigma[A_i]$ such that $R_L$ divides the product
  \[R := \prod_{i=1}^n R_{L_i}.\]
  It follows immediately that also $R$ recognises $L$ and we denote the morphism recognising $L$ by $\varphi \colon \FreeExt \rightarrow R$. Denote by $\pi_i \colon R \rightarrow R_{L_i}$ the projection on the $i$th component and by $\varphi_i = \pi_i \circ \varphi$. Then there exist morphisms $\psi_i \colon \FreeExt \rightarrow \Ext(\VPSigma_i)$ such that the diagram
  \begin{center}
  \begin{tikzpicture}[->,>=stealth',shorten >=1pt,auto,node distance=3.4cm, scale=1]
		\node(A) at (0,0) {$\FreeExt$};
	    \node(Ai) at (2.7,0) {$\WMSigma[A_i]$};
	    \node(R) at (0,-2.7) {$R$};
		\node(Ri) at (2.7,-2.7) {$R_{L_i}$};
		\draw[->] (A) to node[left] {$\varphi$} (R);
		\draw[->] (A) to node[above] {$\psi_i$} (Ai);
		\draw[->] (A) to node[above right] {$\varphi_i$} (Ri);
		\draw[->] (Ai) to node[right] {$\eta_{L_i}$} (Ri);
		\draw[->] (R) to node[below] {$\pi_i$} (Ri);
  \end{tikzpicture}
\end{center}
  commutes.
  Observe that since $R$ recognises $L$, there exists some $P \subseteq R$ such that
  \[L = \varphi^\mone(K) = \bigcup_{x \in K} \varphi^\mone(x)\]
  and letting $x = (x_1,\dots,x_n)$
  \[\varphi^\mone(x) = \bigcap_{i=1}^n \varphi_i^\mone(x_i).\]
  From the previous diagram, we get $\varphi_i = (\eta_{L_i} \circ \psi_i)$. We conclude that $L \in \cV(\VPSigma)$: Since $\cV(\VPSigma)$ is closed under Boolean combinations and inverse morphisms, it suffices to that $\eta_{L_i}^\mone(x_i) \in \cV(\VPSigma_i)$, which follows directly from Proposition \ref{prop:VarietyRecAtoms}.
\end{proof}

\section{The Free Profinite $\Ext$-algebra}\label{sec:topology}

Constructing the free profinite $\Ext$-algebra is the key ingredient to obtaining a notion of equations for pseudo-varieties of $\Ext$-algebras in the sense that a language belongs to a pseudo-variety if and only if it satisfies all of its equations. 

We are going to use the following approach: Starting with a notion of metric on the set of well-matched words, we are going to consider its metric completion $\ExtHat$. This initially provides us with a topological space, which contains the well-matched words as a subset. We are then going to show that $\ExtHat$ can be equipped with algebraic structure, such that it is an $\Ext$-algebra. To be more precise, that it can be equipped with a multiplication which agrees with the usual multiplication on the subset of the well-matched words and that (for $u,v \in A^*$ such that $uv \in \WMSigma$) for each $\ext[u,v] \in \cO(\WMSigma)$, there exists a map $\what{\ext[u,v]} \in \cO(\ExtHat)$ agreeing with it on the well-matched words. 

While the construction of the metric completion and the fact that it has a multiplication are almost entirely the same as in the well-known finite monoid (or regular) case \cite{Pi16}, the construction of $\cO(\ExtHat)$ requires some extra consideration. 

\textbf{Note.} The concluding theorem of this section is a special case of Reiterman's theorem \cite{Al95}, which holds for arbitrary pseudo-varieties of finitary algebras (or could probably be obtained using \cite{ACMU15}). We review the precise construction in this chapter, since it helps to understand the interpretation of equations on finite $\Ext$-algebras. For reading on free profinite forest algebras, see \cite{Ali16}.

\subsection{The Well-matched Words as a Metric Space}

We say that an $\Ext$-algebra $R$ separates two well-matched words $x,y \in \FreeExt$, if there is a morphism $\psi \colon \FreeExt \rightarrow R$ such that $\psi(x) \neq \psi(y)$. 

\begin{example}\label{ex:separation}
Let $a,c$ be call and $b,d$ be return letters, then $a^4b^2c^2d^4 \text{ and } a^2b^2c^2d^2$ are separated by the $\Ext$-algebra 
\begin{center}
\begin{tabular}[t]{c|cc}
$\cdot$ & $1$ & $0$  \\ \hline
$1$ & $1$ & $0$ \\ 
$0$ & $0$ & $0$ \\ 
\end{tabular}
\hspace{1cm}
\begin{tabular}[t]{c|cc}
& $1$ & $0$ \\ \hline
$\ext, \ext[c,d]$ & $1$ & $0$\\ 
$\ext[a,d], \ext[c,b]$ & $0$ & $0$\\
\end{tabular}
\end{center}
since the word $a^4b^2c^2d^4 = \ext[a,d]^2(\ext[a,b]^2(\lambda)\ext[c,d]^2(\lambda))$ is mapped to $0$ and the word $a^2b^2c^2d^2 = \ext[a,b]^2(\lambda)\ext[c,d]^2(\lambda)$ to $1$.
However, the two words can not be distinguished by a finite monoid of size $2$. 
\end{example}
In fact, it is possible to construct well-matched words for any natural number $n$, that can be distinguished by the $\Ext$-algebra from Example \ref{ex:separation}, but not by a monoid of size $n$. 

\begin{proposition}\label{lem:separate}
\begin{itemize}
\item Let 
\[r(x,y) = \min\{|R| \mid R \text{ is a finite $\Ext$-algebra that separates $x$ and $y$}\}.\]
 Then the map $d \colon \WMSigma \times \WMSigma \rightarrow [0,\infty)$ with $(x,y) \mapsto 2^{-r(u,v)}$ defines a metric on $\WMSigma$.
\end{itemize}
\end{proposition}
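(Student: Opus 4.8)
This is the standard construction of a profinite metric, and I would verify the metric axioms directly, observing along the way that $d$ is in fact an ultrametric. Throughout, adopt the usual conventions $\min\emptyset = +\infty$ and $2^{-\infty}=0$, so that $d(x,x)=0$ holds automatically (no $\Ext$-algebra can separate a word from itself, since a morphism sends equal arguments to equal values).

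The first step, and the only one requiring a genuine input, is the separation fact: for all $x\neq y$ in $\WMSigma$ there is a \emph{finite} $\Ext$-algebra separating $x$ and $y$, so that $r(x,y)<\infty$ and hence $d(x,y)>0$. To see this, note that the singleton language $\{x\}\subseteq\WMSigma$ is a VPL — a VPA can simply trace the fixed well-matched word $x$ letter by letter (the stack operations being forced by the call/return/internal type of each letter, and being back at $\#$ after reading all of $x$ because $x$ is well-matched) and accept only that word. By Theorem~\ref{thm:VPL-Ext} there is thus a finite $\Ext$-algebra $R$ and a morphism $\psi\colon\WMSigma\to R$ with $\psi^{-1}(\psi(\{x\}))=\{x\}$; since $y\notin\{x\}$ we get $\psi(y)\neq\psi(x)$, so $R$ separates $x$ and $y$. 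Combined with the conventions above this yields $d(x,y)=0\iff x=y$. Non-negativity is clear, and symmetry follows from $r(x,y)=r(y,x)$, as $\psi(x)\neq\psi(y)$ is the same condition as $\psi(y)\neq\psi(x)$.

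It remains to prove the strong triangle inequality $d(x,z)\le\max\{d(x,y),d(y,z)\}$ for all $x,y,z\in\WMSigma$; since $t\mapsto 2^{-t}$ is decreasing, this is equivalent to $r(x,z)\ge\min\{r(x,y),r(y,z)\}$. Let $R$ be a finite $\Ext$-algebra with $|R|=r(x,z)$ and a morphism $\psi\colon\WMSigma\to R$ with $\psi(x)\neq\psi(z)$. Then $\psi(x)\neq\psi(y)$ or $\psi(y)\neq\psi(z)$, for otherwise $\psi(x)=\psi(y)=\psi(z)$; hence $R$ separates $x$ from $y$, or $y$ from $z$. In the first case $r(x,y)\le|R|=r(x,z)$, in the second $r(y,z)\le|R|=r(x,z)$, so in either case $\min\{r(x,y),r(y,z)\}\le r(x,z)$, as required. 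An ultrametric is in particular a metric, so this finishes the argument.

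I do not anticipate a real obstacle. The only subtle point is the separation fact in the second step: it rests on singleton well-matched languages being VPL and therefore, via Theorem~\ref{thm:VPL-Ext}, recognisable by finite $\Ext$-algebras. Without this residual-finiteness property $d$ would only be a pseudometric, so it is worth stating explicitly.
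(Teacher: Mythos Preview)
Your argument is correct and essentially identical to the paper's: both establish positive-definiteness via the finite syntactic $\Ext$-algebra of the singleton $\{x\}$ (you route through a VPA and Theorem~\ref{thm:VPL-Ext}, the paper invokes the syntactic algebra directly, but these amount to the same thing), and the strong triangle inequality is proved by exactly the same separation argument. There is nothing to add.
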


Since $\WMSigma$ as a metric space is discrete and thus not particularly revealing, we consider its metric completion $\ExtHat$, which has $\WMSigma$ as a dense subspace. Recall that the completion may be obtained constructively as the set of equivalence classes of Cauchy-sequences, which we use here. For a more in-depth presentation of the concepts needed, we refer for instance to \cite{Pi16}. 

\subsection{The metric completion as an $\Ext$-algebra}

As it is already the case for the free profinite monoid (see \cite{Pi16}), $\ExtHat$ contains elements which are not in $\WMSigma$, such as for instance for each $x \in \WMSigma$
\[x^\omega = \lim_{n \rightarrow \infty} x^{n!}\]
is one of those elements.
The proof that the sequence $(x^{n!})_{n \in \N}$ is a Cauchy-sequence in $\WMSigma$ and the limit on the right hand side thus exists in $\ExtHat$ is entirely the same as in the case of finite monoids and thus omitted. 

So far, $\ExtHat$ is just a topological space without any algebraic properties. The following Proposition states that there exists a multiplication on $\ExtHat$, making it a monoid, which on the elements of $\WMSigma$ agrees with the usual multiplication. A similar statement holds for the operations $\ext[u,v]$.

\textbf{Note.} One may show that $\ExtHat$ is indeed the Stone space of visibly pushdown languages and as such similar to the recognisers for Boolean algebras over $A^*$ proposed in \cite{GGP10}. One difference is that we are regarding the VPLs as a Boolean algebra over well-matched words and hence $\ExtHat$ remains a monoid, contrary to the recognisers in \cite{GGP10}, which are a monoid if and only if the associated Boolean algebra consists of regular languages entirely.

\begin{proposition}\label{prop:concUniform}
  \begin{enumerate}
	\item The multiplication $\cdot$ on $\WMSigma$ has a unique and uniformly continuous extension $\what{\cdot}$ on $\ExtHat$.
	\item For all $u,v \in A^*$ with $uv \in \WMSigma$, the maps $\ext[u,v]$ have unique uniformly continuous extensions $\what{\ext[u,v]} \colon \ExtHat \rightarrow \ExtHat$.
  \end{enumerate}
\end{proposition}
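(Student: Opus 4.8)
The plan is to establish both extension results through the standard technique of showing that the relevant operations are uniformly continuous with respect to the profinite metric $d$, and then invoking the universal property of metric completions: a uniformly continuous map from a metric space into a complete metric space extends uniquely and continuously to the completion. Since $\ExtHat$ is by construction the completion of $(\WMSigma, d)$, it suffices to verify uniform continuity of $\cdot \colon \WMSigma \times \WMSigma \to \WMSigma$ and of each $\ext[u,v] \colon \WMSigma \to \WMSigma$; uniqueness of the extension is automatic from density of $\WMSigma$ in $\ExtHat$.

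For part (1), I would first recall that $d(x,y) = 2^{-r(x,y)}$ where $r(x,y)$ is the minimal size of a finite $\Ext$-algebra separating $x$ and $y$. The key observation is a ``congruence'' property: if a finite $\Ext$-algebra $R$ (via morphism $\psi$) fails to separate $x_1$ from $y_1$ and fails to separate $x_2$ from $y_2$, then it also fails to separate $x_1 x_2$ from $y_1 y_2$, because $\psi(x_1 x_2) = \psi(x_1)\psi(x_2) = \psi(y_1)\psi(y_2) = \psi(y_1 y_2)$. Hence $r(x_1 x_2, y_1 y_2) \geq \min\{r(x_1,y_1), r(x_2,y_2)\}$, which gives $d(x_1 x_2, y_1 y_2) \leq \max\{d(x_1,y_1), d(x_2,y_2)\}$. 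This ultrametric inequality immediately yields uniform continuity of multiplication on $\WMSigma \times \WMSigma$ (with the max-metric on the product), so it extends uniquely and uniformly continuously to $\what{\cdot}$ on $\ExtHat$, and one checks that associativity and the neutral element $\lambda$ survive passage to the limit by continuity, so $(\ExtHat, \what{\cdot})$ is a monoid.

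For part (2), the same scheme applies: if $\psi \colon \WMSigma \to R$ is a morphism into a finite $\Ext$-algebra and $\psi(x) = \psi(y)$, then since $\ext[u,v] \in \cO(\WMSigma)$ maps to some $\what{\ext[u,v]}^\psi \in \cO(R)$ under $\psi$ — this is exactly what it means for $\psi$ to be an $\Ext$-algebra morphism — we get $\psi(\ext[u,v](x)) = \what{\ext[u,v]}^\psi(\psi(x)) = \what{\ext[u,v]}^\psi(\psi(y)) = \psi(\ext[u,v](y))$. Thus $R$ does not separate $\ext[u,v](x)$ from $\ext[u,v](y)$, giving $r(\ext[u,v](x), \ext[u,v](y)) \geq r(x,y)$ and hence $d(\ext[u,v](x), \ext[u,v](y)) \leq d(x,y)$, i.e. $\ext[u,v]$ is $1$-Lipschitz, in particular uniformly continuous. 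Again the universal property of the completion supplies a unique uniformly continuous extension $\what{\ext[u,v]} \colon \ExtHat \to \ExtHat$.

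The main subtlety — and the step I would be most careful about — is not the continuity estimates, which are routine ultrametric bookkeeping, but rather the need to know that every morphism $\psi \colon \WMSigma \to R$ into a \emph{finite} $\Ext$-algebra genuinely carries $\ext[u,v]$ to an element of $\cO(R)$, and conversely that $\cO(\WMSigma)$ is exactly generated (as a monoid under $\circ$) by the $\ext$ for $a$ a call and $b$ a return letter together with the left/right multiplications by internal letters, so that controlling these generators controls all of $\cO(\WMSigma)$. This is where the structure of $\WMSigma$ as a free $\Ext$-algebra (already used in the proof of Proposition~\ref{prop:RecognitionDivision}) is essential. One should also remark that there are only finitely many $\Ext$-algebras of each size up to isomorphism, so $r(x,y)$ is well-defined and the minimum is attained; this was already implicitly needed for Proposition~\ref{lem:separate}. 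With those points in hand, the extensions exist and are unique, and their compatibility with the monoid structure (so that each $\what{\ext[u,v]}$ lands in $\cO(\ExtHat)$, and the left/right multiplication maps of $\ExtHat$ are among them) follows by density and continuity.
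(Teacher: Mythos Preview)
Your proposal is correct and follows essentially the same route as the paper: establish the Lipschitz bounds $d(x_1x_2,y_1y_2)\leq\max\{d(x_1,y_1),d(x_2,y_2)\}$ and $d(\ext[u,v](x),\ext[u,v](y))\leq d(x,y)$ via the separation argument, then invoke the extension property of the completion. The only cosmetic difference is that the paper derives the product bound by first applying the ultrametric triangle inequality through the intermediate point $x_1y_2$ and then treating one coordinate at a time, whereas you argue directly that a morphism distinguishing $x_1x_2$ from $y_1y_2$ must distinguish one of the coordinate pairs; your additional remarks on associativity in the limit and on $\psi$ carrying $\ext[u,v]$ into $\cO(R)$ are accurate but go slightly beyond what the paper records in this proof.
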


Intuitively, one may view the extensions as follows: if $(x_n)_{n \in \N}$ (resp. $(y_n)_{n \in \N}$) is a Cauchy-sequence and $x$ (resp. $y$) its limit in $\ExtHat$, then $x \cdot y$ is equal to the limit of $(x_n \cdot y_n)_{n \in \N}$ and $\what{\ext[u,v]}(x)$ is equal to the limit of the sequence $(\ext[u,v](x_n))_{n \in \N}$. However, apart from the maps $\what{\ext[u,v]} \colon \ExtHat \rightarrow \ExtHat$, it is possible to derive further maps from elements of $\cO(\WMSigma)$ in a more general fashion.  

\begin{proposition}\label{prop:ExtensionOperations}
  Let $(e_n)_{n \in \N}$ be a sequence of elements in $\cO(\WMSigma)$, such that $(e_n(x))_{n \in \N}$ is Cauchy for each $x \in \WMSigma$. Then the sequence $(e_n)_{n \in \N}$ uniquely determines a uniformly continuous map $e \colon \ExtHat \rightarrow \ExtHat$.
\end{proposition}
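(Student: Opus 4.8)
The plan is to reduce the statement to the universal property of the metric completion: since $\WMSigma$ is dense in the complete metric space $\ExtHat$, every uniformly continuous map $\WMSigma \to \ExtHat$ extends to a unique uniformly continuous map $\ExtHat \to \ExtHat$. Hence it suffices to build such a map on $\WMSigma$ out of the sequence $(e_n)_n$ and to check that it is uniformly continuous.

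The main ingredient is that the operations in $\cO(\WMSigma)$ are non-expansive for $d$: for every $e \in \cO(\WMSigma)$ and all $x,y \in \WMSigma$ one has $d(e(x), e(y)) \le d(x,y)$. Indeed, if a finite $\Ext$-algebra $R$ separates $e(x)$ and $e(y)$ through a morphism $\psi \colon \WMSigma \to R$, then, denoting by $\psi(e) \in \cO(R)$ the image of $e$ under $\psi$, we have $\psi(e)(\psi(x)) = \psi(e(x)) \neq \psi(e(y)) = \psi(e)(\psi(y))$, so $\psi(x) \neq \psi(y)$ and the very same $R$ separates $x$ and $y$; thus $r(x,y) \le r(e(x),e(y))$, i.e. $d(e(x),e(y)) \le d(x,y)$. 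In particular each $e_n$ is $1$-Lipschitz. (Taking the sequence constant equal to $\ext[u,v]$ re-derives the uniform continuity of $\ext[u,v]$ used in Proposition~\ref{prop:concUniform}, so the present statement genuinely generalises that one; note also that the argument only uses $e_n \in \cO(\WMSigma)$, not the concrete form $\ext[u,v]$.)

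Now set $e_0 \colon \WMSigma \to \ExtHat$, $e_0(x) = \lim_n e_n(x)$; the limit exists since $(e_n(x))_n$ is Cauchy by hypothesis and $\ExtHat$ is complete. As $d$ extends continuously to $\ExtHat$, for $x,y \in \WMSigma$ we get $d(e_0(x), e_0(y)) = \lim_n d(e_n(x), e_n(y)) \le d(x,y)$, so $e_0$ is $1$-Lipschitz, hence uniformly continuous, and therefore extends to a unique uniformly continuous (indeed $1$-Lipschitz) map $e \colon \ExtHat \to \ExtHat$; this is the map determined by $(e_n)_n$. Uniqueness among uniformly continuous maps whose restriction to $\WMSigma$ is $x \mapsto \lim_n e_n(x)$ is immediate, since $\WMSigma$ is dense in the Hausdorff space $\ExtHat$ and two continuous maps agreeing on a dense set coincide.

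I do not expect a real obstacle here beyond bookkeeping. The only step deserving care is the non-expansiveness lemma — which is exactly where the definition of a morphism of $\Ext$-algebras (that $\cO$ is mapped into $\cO$ compatibly with function application) enters — together with the routine but essential observation that a common Lipschitz modulus survives pointwise limits, so that $e_0$ remains uniformly continuous and thus extends to the completion.
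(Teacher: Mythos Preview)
Your proposal is correct and follows essentially the same approach as the paper's own proof sketch: define $e_0(x)=\lim_n e_n(x)$ on $\WMSigma$, use that each $e_n$ is non-expansive (an $\Ext$-algebra separating $e_n(x)$ and $e_n(y)$ must already separate $x$ and $y$), pass this bound to the limit to get $d(e_0(x),e_0(y))\le d(x,y)$, and then invoke the unique uniformly continuous extension to the completion. Your write-up is slightly more detailed than the paper's sketch (you make explicit the survival of the Lipschitz constant under pointwise limits and the uniqueness by density), but the argument is the same.
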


\textbf{Proof sketch:} Since $(e_n(x))_{n \in \N}$ is Cauchy for each $x \in \WMSigma$, the map $e$ sending $x$ to $\lim_{n \rightarrow \infty} e_n(x)$ is a well-defined map from $\WMSigma$ to $\ExtHat$. Moreover, it is uniformly continuous, since an $\Ext$-algebra that does not separate two well-matched words $x,y$ does also not separate $e_n(x)$ and $e_n(y)$ for each $n \in \N$. Hence $d(e(x),e(y)) \leq d(x,y)$, which implies uniform continuity and thus there exists a uniformly continuous extension $\what{e} \colon \ExtHat \rightarrow \ExtHat$.\\

The space $\ExtHat$ becomes an $\Ext$-algebra, with the uniformly continuous extension $\what{\cdot}$ of the multiplication on $\WMSigma$ as multiplication on $\ExtHat$ and the set $\cO(\ExtHat)$ is the set of all maps $e \colon \ExtHat \rightarrow \ExtHat$ obtained in the fashion of Proposition \ref{prop:ExtensionOperations}. 

Observe that $\cO(\ExtHat)$ indeed is a monoid, since if the element $e$ (resp. $f$) in $\cO(\ExtHat)$ is determined by the sequence $(e_n)_{n \in \N}$ (resp. $(f_n)_{n \in \N})$), then $e \circ f$ is determined by $(e_n \circ f_n)_{n \in \N}$. 

We now consider a sequence in $\cO(\ExtHat)$, which determines an operation in $\cO(\ExtHat)$ of particular importance for the equations in the following sections.

\begin{lemma}\label{lem:ExtCauchy}
Let $u,v \in A^*$ such that $uv \in \WMSigma$ and $x \in \WMSigma$, then the sequence $(\ext[u,v]^{n!}(x))_{n \in \N}$ is Cauchy. 
\end{lemma}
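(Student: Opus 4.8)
The plan is to show that $(\ext[u,v]^{n!}(x))_{n\in\N}$ is Cauchy by reducing the statement to the behaviour of finite $\Ext$-algebras, since the metric $d$ is defined through separation by finite $\Ext$-algebras. Concretely, I would fix a finite $\Ext$-algebra $R$ and a morphism $\psi\colon\WMSigma\rightarrow R$, and look at the sequence $\psi(\ext[u,v]^{n!}(x))$ in $R$. The key observation is that in $R$ this sequence has the form $\what{e}^{\,n!}(\psi(x))$ where $\what{e}=\psi(\ext[u,v])\in\cO(R)$ is the image of the extend operation. Since $\cO(R)$ is a finite monoid (a submonoid of $R^R$), the sequence $(\what{e}^{\,n!})_{n\in\N}$ is eventually constant in $\cO(R)$: once $n!$ exceeds the index of $\what e$ and is a multiple of its period, $\what e^{\,n!}$ equals the unique idempotent power $\what e^{\,\omega}$ of $\what e$ in $\cO(R)$, and every subsequent factorial $m!$ (for $m\ge n$) is again of that form since $m!$ is a multiple of $n!$. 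Hence $(\psi(\ext[u,v]^{n!}(x)))_{n\in\N}$ is eventually constant for every such $\psi$.

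From there the Cauchy property follows by the standard argument relating the metric to separation: given $k\in\N$, to guarantee $d(\ext[u,v]^{n!}(x),\ext[u,v]^{m!}(x))<2^{-k}$ for all large $n,m$, I must ensure that no finite $\Ext$-algebra of size at most $k$ separates these two words. There are, up to isomorphism and up to the relevant morphisms, only finitely many pairs $(R,\psi)$ with $|R|\le k$ and $\psi\colon\WMSigma\rightarrow R$ a morphism (a morphism out of the free $\Ext$-algebra is determined by finitely many values, and $R$ ranges over finitely many finite structures). For each such pair the sequence $(\psi(\ext[u,v]^{n!}(x)))_n$ stabilises at some index $N_{R,\psi}$; taking $N_k$ to be the maximum of these finitely many indices, we get that for all $n,m\ge N_k$ and all $(R,\psi)$ with $|R|\le k$, $\psi(\ext[u,v]^{n!}(x))=\psi(\ext[u,v]^{m!}(x))$, so $r(\ext[u,v]^{n!}(x),\ext[u,v]^{m!}(x))>k$ and thus $d<2^{-k}$. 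This is exactly the Cauchy condition.

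I would present the argument in this order: (1) observe that for a morphism $\psi\colon\WMSigma\rightarrow R$, $\psi(\ext[u,v]^{n!}(x)) = \what e^{\,n!}(\psi(x))$ where $\what e = \psi(\ext[u,v])\in\cO(R)$; (2) invoke finiteness of $\cO(R)\le R^R$ to conclude $\what e^{\,n!}$ is eventually equal to the idempotent $\what e^{\,\omega}$, using that the factorials $n!$ are cofinal in the divisibility order so that $\what e^{\,n!} = \what e^{\,\omega}$ for all sufficiently large $n$; (3) note there are only finitely many morphisms to $\Ext$-algebras of size $\le k$ (using that a morphism from the free $\Ext$-algebra $\WMSigma$ is determined by its values on the generating operations $\ext$ for $a\in\Call, b\in\Return$ and $\ext[c,\lambda],\ext[\lambda,c]$ for $c\in\Intern$); (4) combine to obtain the bound $N_k$ and conclude.

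The main obstacle — really more a bookkeeping point than a genuine difficulty — is step (3): making precise that "finitely many finite $\Ext$-algebras of size $\le k$ with a morphism from $\WMSigma$" is a legitimate finite set, so that the supremum $N_k$ is well-defined. This is the same subtlety as in the classical free profinite monoid case and is handled by noting that we only care about the image $\psi(\WMSigma)$, which is a finite $\Ext$-algebra generated by the finitely many images of generators, and that there are only finitely many isomorphism types of finite $\Ext$-algebras of bounded size together with a bounded-size generating assignment. Everything else is the routine translation between the separation-based metric and eventual agreement of morphic images, plus the elementary fact that powers indexed by factorials eventually land on the idempotent power in any finite monoid.
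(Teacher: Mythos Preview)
Your proof is correct and follows the same idea as the paper's: in any finite $\Ext$-algebra the iterates $\psi(\ext[u,v])^{n!}$ stabilise, so no small algebra can separate $\ext[u,v]^{n!}(x)$ from $\ext[u,v]^{m!}(x)$ for large $n,m$. The paper's version is a little more direct than yours: instead of your step~(3) (enumerating the finitely many morphisms into algebras of size $\le k$ and taking the maximum of their stabilisation indices), it observes that the stabilisation index is \emph{uniform} in the algebra --- for any $R$ with $|R|\le N$ there is always some $r\le N$ with $\ext^{r}(\psi(x))=\ext^{2r}(\psi(x))$, and then $r\mid n!$ whenever $n\ge N$, so $N_k=k$ works for every algebra of size $\le k$ simultaneously. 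This makes your bookkeeping step~(3) unnecessary, though what you wrote there is also fine.
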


We denote by $\extomega[u,v] \colon \ExtHat \rightarrow \ExtHat$ the uniformly continuous extension of the map sending $x \in \WMSigma$ to $\lim_{n \rightarrow \infty} \ext[u,v]^{n!}(x)$.

From a first glance, since the well-matched words are a subset of $A^*$ with $\ext[u,v](w) = uwv$ (for appropriate $u,v$), it might seem convenient to write $\ext[u,v]^\omega(x) = u^\omega x v^\omega$. But this notation is very misleading for the following reason:

The profinite word $u^\omega$ is an idempotent in $\Ahat$ and thus for $u,v,u',v' \in A^*$, it holds that $u^\omega u^\omega v'^\omega u'^\omega v^\omega v^\omega =  u^\omega v'^\omega u'^\omega v^\omega$ in $\Ahat$.
But in $\ExtHat$, the element $\extomega[u,v](\extomega[u,v'](\lambda)\extomega[u',v](\lambda))$ does not equal $\extomega[u,v'](\lambda)\extomega[u',v](\lambda)$, as can be understood on the following Example \ref{ex:morphismProfinite}, letting $u=a,v'=b,u'=c$ and $v=d$.

\subsection{$\ExtHat$ and Morphisms}

Understanding the morphisms from $\ExtHat$ into finite $\Ext$-algebras is crucial for the interpretation of equations in later sections.

Observe that for each Cauchy sequence $(x_n)_{n \in \N}$ in $\WMSigma$ and each morphism $\psi \colon \WMSigma \rightarrow R$ into a finite $\Ext$-algebra $R$, the sequence $(\psi(x_n))_{n \in \N}$ becomes eventually stationary in $R$. For instance, it holds that for each morphism $\psi \colon \WMSigma \rightarrow R$ into a finite $\Ext$-algebra $R$, the sequence $(\psi(x^{n!}))_{n \in \N}$ is eventually equal to the unique idempotent generated by $\psi(x)$ in $R$.

Each morphism $\psi \colon \WMSigma \rightarrow R$ has a unique continuous extension $\what{\psi} \colon \ExtHat \rightarrow R$ and for an element $x \in \ExtHat$, which is the limit of a Cauchy-sequence $(x_n)_{n \in \N}$ in $\WMSigma$, 
\[\what{\psi}(x) = \lim_{n \rightarrow \infty} \psi(x_n),\]
 which converges in $R$, since $(x_n)_{n \in \N}$ is a Cauchy-sequence with respect to the metric installed on $\WMSigma$.

\begin{example}\label{ex:morphismProfinite}
  Consider the $\Ext$-algebra from Example \ref{ex:separation} and let $\psi$ be the morphism mapping $\ext[u,v]$ in $\WMSigma$ to its respective counterpart in that algebra. Then
  \[\what{\psi}(\extomega[a,d](\lambda)) = \lim_{n \rightarrow \infty} \psi(\ext[a,d]^{n!}(\lambda)) = \lim_{n \rightarrow \infty} \ext[a,d]^{n!}(1) = \ext[a,d](1) = 0,\]
but $\what{\psi}(\extomega[a,b](\lambda)) = \what{\psi}(\extomega[c,d](\lambda)) = 1$.
\end{example}

\subsection{Reiterman}

Let $u,v \in \ExtHat$. We say that an $\Ext$-algebra $R$ satisfies the equation $u = v$ if and only if for each morphism $\phi \colon \WMSigma \rightarrow R$ the equality $\what{\phi}(u) = \what{\phi}(v)$ holds. A VPL $L$ satisfies $u = v$ if there exists an $\Ext$-algebra recognising $L$, that satisfies $u = v$. 

We say that a pseudo-variety of $\Ext$-algebras $\bV$ is defined by a set of equations $E$, if an $\Ext$-algebra belongs to $\bV$ if and only if it satisfies all equations of $E$.

\begin{theorem}[Reiterman]\label{thm:Reiterman}
A class of $\Ext$-algebras is a pseudo-variety if and only if it can be defined by a set of (possibly infinitely many) equations.
\end{theorem}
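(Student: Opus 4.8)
The plan is to prove the two implications of Theorem~\ref{thm:Reiterman} separately; the converse direction — every pseudo-variety is equational — is where essentially all of the work sits. An equation is a pair $u=v$ with $u,v$ in the metric completion $\widehat{\WMSigma[B]}$ of $\WMSigma[B]$ for some visibly pushdown alphabet $B$ (built exactly as $\ExtHat$ in Section~\ref{sec:topology}), and I write $\Eqat{E}$ for the class of finite $\Ext$-algebras satisfying every equation of a set $E$. I will use freely the basic facts recorded in Section~\ref{sec:topology}: $\WMSigma[B]$ is dense in the compact space $\widehat{\WMSigma[B]}$, and every morphism $\WMSigma[B]\to R$ into a finite $\Ext$-algebra has a unique continuous extension $\widehat{\WMSigma[B]}\to R$, which is again a morphism of $\Ext$-algebras.

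\emph{Equational classes are pseudo-varieties.} It suffices to show that for a single equation $u=v$ the class $\Eqat{u=v}$ is closed under subs, quotients and finite direct products, since $\Eqat{E}=\bigcap_{(u,v)\in E}\Eqat{u=v}$ and an intersection of pseudo-varieties is again one. All three closures follow from one observation: if $\alpha\colon R\to S$ is a morphism of finite $\Ext$-algebras and $\phi\colon\WMSigma[B]\to R$ a morphism, then $\alpha\circ\widehat\phi$ is a continuous morphism extending $\alpha\circ\phi$, so by uniqueness $\widehat{\alpha\circ\phi}=\alpha\circ\widehat\phi$. Taking $\alpha$ to be an inclusion $R\hookrightarrow S$ and using that it is injective gives closure under subs; taking a surjection $\alpha\colon S\twoheadrightarrow R$, lifting $\phi$ to $\widetilde\phi\colon\WMSigma[B]\to S$ along $\alpha$ by the freeness of $\WMSigma[B]$, and using $\widehat\phi=\alpha\circ\widehat{\widetilde\phi}$ gives closure under quotients; using the two projections of a direct product $R\times S$ simultaneously gives closure under finite direct products. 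In each case the desired equality $\widehat\phi(u)=\widehat\phi(v)$ is read off from the corresponding equality in the ambient algebra.

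\emph{Pseudo-varieties are equational.} Let $\bV$ be a pseudo-variety of finite $\Ext$-algebras and let $E$ be the set of all equations (over all visibly pushdown alphabets) satisfied by every member of $\bV$. Then $\bV\subseteq\Eqat{E}$ is immediate; I must show $\Eqat{E}\subseteq\bV$, so fix a finite $R\in\Eqat{E}$. Choose a visibly pushdown alphabet $B$ rich enough that the generators of $\WMSigma[B]$ can be mapped onto generators of $R$ — one internal letter for each element of $R$, one call/return pair for each element of a finite set generating the monoid $\cO(R)$ — so that freeness gives a surjective morphism $h\colon\WMSigma[B]\twoheadrightarrow R$, hence a continuous surjective morphism $\widehat h\colon\widehat{\WMSigma[B]}\to R$. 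Let $\cS$ be the family of all surjective morphisms $g\colon\WMSigma[B]\twoheadrightarrow S_g$ with $S_g\in\bV$. Since $\bV$ is closed under subs, every morphism from $\WMSigma[B]$ into a member of $\bV$ corestricts to a member of $\cS$; hence a pair $u,v\in\widehat{\WMSigma[B]}$ is identified by all of $\{\widehat g\mid g\in\cS\}$ if and only if the equation $u=v$ holds throughout $\bV$, i.e.\ $(u,v)\in E$. As $R\in\Eqat{E}$, this yields
\[\bigcap_{g\in\cS}\ker\widehat g\subseteq\ker\widehat h\]
as binary relations on $\widehat{\WMSigma[B]}$. Each $\ker\widehat g$ and $\ker\widehat h$ is clopen in the compact space $\widehat{\WMSigma[B]}\times\widehat{\WMSigma[B]}$, being a finite union of products of the clopen fibres of a continuous map into a finite discrete algebra. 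Hence the complement of $\ker\widehat h$ is compact and is covered by the open complements of the sets $\ker\widehat g$, so finitely many $g_1,\dots,g_k$ already satisfy $\bigcap_{i=1}^{k}\ker\widehat{g_i}\subseteq\ker\widehat h$. Now let $S$ be the image of the morphism $w\mapsto(g_1(w),\dots,g_k(w))$ from $\WMSigma[B]$ to $S_{g_1}\times\cdots\times S_{g_k}$, and let $\bar g\colon\WMSigma[B]\twoheadrightarrow S$ be its corestriction: $S$ is a sub of a finite direct product of members of $\bV$, hence $S\in\bV$, and $\ker\widehat{\bar g}=\bigcap_{i=1}^{k}\ker\widehat{g_i}\subseteq\ker\widehat h$. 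Therefore $\widehat h$ factors through the surjection $\widehat{\bar g}$, giving a map $S\to R$ — a morphism of $\Ext$-algebras because $\widehat{\bar g}$ is a surjective morphism, and surjective because $\widehat h$ is; thus $R$ is a quotient of $S\in\bV$, whence $R\in\bV$.

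\emph{Where the difficulty lies.} Everything delicate is in the second direction. The crucial move — replacing the a priori infinite family $\cS$ by a finite subfamily — uses the compactness of $\widehat{\WMSigma[B]}$, i.e.\ its Stone-space property noted after Proposition~\ref{prop:ExtensionOperations} (concretely, $\WMSigma[B]$ is totally bounded since, up to isomorphism, there are only finitely many $\Ext$-algebras of any fixed size and each admits only finitely many morphisms from $\WMSigma[B]$). The second subtlety is invoking closure under subs at precisely the right point, so that ``identified by all of $\cS$'' really does coincide with ``valid throughout $\bV$'' rather than just ``valid in the $B$-generated members of $\bV$''. The remaining ingredients — freeness of $\WMSigma[B]$, uniqueness of continuous extensions, and that continuous extensions of morphisms are morphisms — are routine and already used in Sections~\ref{sec:Algebra} and~\ref{sec:topology}.
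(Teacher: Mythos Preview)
Your proposal is correct and follows essentially the same route as the paper: the easy direction is handled by observing that identities are preserved under subs, quotients and finite products, and the substantive direction is the classical compactness argument --- the paper phrases it as extracting a finite subcover from $E_\varphi\cup\bigcup_h N_h$, which is exactly your inclusion $\bigcap_{g}\ker\widehat g\subseteq\ker\widehat h$ in complemented form. If anything, your write-up is slightly more careful than the paper's on two points: you explicitly choose an alphabet $B$ rich enough to admit a \emph{surjective} morphism onto $R$ (the paper just takes ``a morphism $\varphi\colon\ExtHat\to R$'' without ensuring surjectivity, which is needed to conclude that $R$ itself, rather than merely the image of $\varphi$, divides the finite product), and you make explicit the role of closure under subs in identifying ``satisfied by all $g\in\cS$'' with ``valid throughout $\bV$''.
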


\textbf{Proof sketch:} Let $E$ be a set of equations over $\ExtHat$ and $[\![ E ]\!]$ the set of all $\Ext$-algebras satisfying all equations in $E$.
The proof idea is entirely the same as for the free profinite monoid: $\Eqat{E}$ is a pseudo-variety of $\Ext$-algebras, since equations are preserved by direct products, subs and quotients. And for the converse direction, one shows that if $E$ is the set of equations satisfied by a pseudo-variety $\bV$, then $\Eqat{E} \subseteq \bV$. Observe that the other inclusion is straight forward.

\section{Proofs of Section \ref{sec:topology}}

\subsection{Proof of Lemma \ref{lem:separate}:}

\begin{proof}
That $d$ is positive definite follows directly from the fact that any two distinct words $x,y$ may be separated by the syntactic $\Ext$-algebra of $\{x\}$ (resp. $\{y\}$). The symmetry of $d$ is clear, hence the strong triangle inequality remains to be shown.

Assume that $R$ is an $\Ext$-algebra separating $x$ and $y$ and that $z \in \WMSigma$. Then $R$ must separate $x$ and $z$ or $y$ and $z$, which results in $r(x,y) \geq \min\{r(x,z),r(z,y)\}$. And since $d(x,y) = 2^{-r(x,y)}$, we obtain that the strong triangle inequality holds.
\end{proof}

\subsection{Proof of Proposition \ref{prop:concUniform}:}
\begin{proof}
To show that the concatenation $\cdot \colon \FreeExt \times \FreeExt \rightarrow \FreeExt$ sending $(u,v)$ to $u \cdot v$, is uniformly continuous, it suffices to prove $d(uv,u'v') \leq d((u,v),(u',v'))$, where the metric on the right is that of the product. Observe that by the strong triangle inequality, for all $u,v,u'v' \in \FreeExt$
\begin{align*}
d(uv,u'v') \leq \max\{d(uv,uv'),d(uv',u'v')\}\\
\end{align*}
and, since an $\Ext$-algebra separating $uv$ and $uv'$ (respectively $uv'$ and $u'v'$) also has to separate $v$ and $v'$ (respectively $u$ and $u'$) we obtain
$$d(uv,u'v') \leq \max\{d(uv,uv'),d(uv',u'v')\} \leq \max\{d(v,v'),d(u,u')\}.$$
Hence the concatenation is uniformly continuous with respect to the product metric on $\FreeExt \times \FreeExt$.

To see that $\ext$ is uniformly continuous, observe that if an $\Ext$-algebra separates $\ext(u)$ and $\ext(v)$, then it must also separate $u$ and $v$ and hence
$$d(\ext(u),\ext(v)) \leq d(u,v),$$
which implies the uniform continuity of $\ext$.

Thus Concatenation and $\ext$ have unique uniformly continuous extensions.
\end{proof}

\subsection{Proof of Lemma \ref{lem:ExtCauchy}:}
\begin{proof}
Let $n,m,N \in N$. We show that for any $n,m \geq N$, the profinite well-matched words $\ext^{n!}(x)$ and $\ext^{m!}(x)$ cannot be separated by an $\Ext$-algebra of size at most $N$. Assume that $\varphi \colon \ExtHat \rightarrow E$ is a morphism, where $E$ is an $\Ext$-algebra with $|E| \leq N$. Since $E$ is finite, there exists an $r \in \N$ such that $\ext^{r}(x) = \ext^{r}(\ext^{r}(x)) = \ext^{2r}(x)$, with $r \leq N$. Since $m,n \geq N$, $r$ divides both $n!$ and $m!$ and hence $\ext^{n!}(x) = \ext^{r}(x) = \ext^{m!}(x)$. This shows that $\ext^{n!}(x)$ and $\ext^{m!}(x)$ cannot be separated by an $\Ext$-algebra of size at most $N$ and hence the sequence $\ext^{n!}(x)$ is a Cauchy-sequence.
\end{proof}

\subsection{Proof that $\ExtHat$ is the Stone Space of the VPL}

We show that the clopens of $\ExtHat$ are isomorphic to the VPL over $A$.

\begin{proposition}
The space $\ExtHat$ is compact.
\end{proposition}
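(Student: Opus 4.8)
The plan is to invoke the classical characterisation of compact metric spaces: a metric space is compact if and only if it is complete and totally bounded. Since $\ExtHat$ is by construction the metric completion of $\WMSigma$, it is automatically complete, so the whole task reduces to showing total boundedness.

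For this I would fix $n \in \N$ and consider the equivalence relation $\approx_n$ on $\WMSigma$ defined by $x \approx_n y$ if and only if no $\Ext$-algebra of size at most $n$ separates $x$ and $y$. By the definition of $r(x,y)$, the relation $x \approx_n y$ holds precisely when $r(x,y) > n$ (or $x = y$), so each $\approx_n$-class has diameter at most $2^{-(n+1)}$ with respect to the metric $d$ of Proposition~\ref{lem:separate}. The key step is that $\approx_n$ has only finitely many classes. First, up to isomorphism there are only finitely many $\Ext$-algebras $R$ with $|R| \le n$: the multiplication is one of finitely many maps $R \times R \to R$, and $\cO(R)$ is one of the finitely many submonoids of the finite monoid $R^R$. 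Second, any morphism $\psi \colon \WMSigma \to R$ is determined by its values on the finitely many generators $\ext$ (for $a \in \Call$, $b \in \Return$), $\ext[c,\lambda]$ and $\ext[\lambda,c]$ (for $c \in \Intern$), and each of these values lies in the finite set $\cO(R)$; hence there are only finitely many morphisms from $\WMSigma$ into any fixed finite $\Ext$-algebra. Taking the finite product $\Psi$ of all morphisms $\psi \colon \WMSigma \to R$ ranging over (representatives of) the finitely many $\Ext$-algebras $R$ with $|R| \le n$, one gets $x \approx_n y$ if and only if $\Psi(x) = \Psi(y)$, so $\approx_n$ has at most $|\operatorname{im}\Psi| < \infty$ classes.

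Given $\varepsilon > 0$, choose $n$ with $2^{-(n+1)} < \varepsilon$ and pick one representative in each $\approx_n$-class; since every $x \in \WMSigma$ lies within $2^{-(n+1)} < \varepsilon$ of the representative of its class, the finitely many open $\varepsilon$-balls about these representatives cover $\WMSigma$, so $\WMSigma$ is totally bounded. As $\WMSigma$ is dense in $\ExtHat$, every point of $\ExtHat$ lies within $\varepsilon$ of a point of $\WMSigma$, hence within $2\varepsilon$ of one of the chosen representatives; therefore $\ExtHat$ is totally bounded as well. Being complete and totally bounded, $\ExtHat$ is compact.

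The main (and essentially only) obstacle is the finite-index claim for $\approx_n$: one must argue carefully that there are finitely many pairs $(R,\psi)$ with $R$ a finite $\Ext$-algebra of size at most $n$ and $\psi \colon \WMSigma \to R$ a morphism, so that a single finite quotient of $\WMSigma$ captures the whole relation. Everything else is a routine application of standard metric-space facts together with Proposition~\ref{lem:separate}. An alternative route would be to identify $\ExtHat$ with the inverse limit $\varprojlim_n \WMSigma/_{\approx_n}$ of finite discrete spaces and appeal to Tychonoff's theorem, but the total-boundedness argument is more self-contained given the machinery already developed in this section.
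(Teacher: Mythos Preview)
Your proposal is correct and follows essentially the same route as the paper: define the equivalence relation of being inseparable by $\Ext$-algebras of size at most $n$, show it has finite index, conclude that $\WMSigma$ is totally bounded, and then use completeness of the completion to obtain compactness. Your justification of the finite-index step is in fact more careful than the paper's, which simply asserts finite index from the finiteness of the set of $\Ext$-algebras of bounded size without explicitly bounding the number of morphisms.
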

\begin{proof}
For $u,v \in \FreeExt$, define the relation $\sim_n$ by $u \sim_n v$, if $u$ and $v$ cannot be separated by an $\Ext$-algebra of size at most $n$. It is evident that $u \sim_n v$ is reflexive and symmetrical. Let $x \in \FreeExt$ with $u \sim_n x$ and $x \sim_n v$. Since any monoid separating $u$ and $v$ must separate $u$ and $x$ or $v$ and $x$, transitivity follows and $\sim_n$ is an equivalence relation. More precisely, it is also of finite index, since there are only finitely many $\Ext$ algebra of size at most $n$. Since, the equivalence class of $u$ is precisely the open ball $B_{2^{-n}}(u)$ on $\FreeExt$, it is covered by finitely many open balls of radius $2^{-n}$. It follows that $\FreeExt$ is totally bounded and since $\ExtHat$ is its completion, using the theorem of Heine-Borel, it is compact.

\end{proof}

\begin{proposition}\label{prop:ClopenCong}
Let $R \subseteq \ExtHat$ and let $E_R$ be its syntactic $\Ext$-algebra. Then the following conditions are equivalent
\begin{enumerate}
	\item $R$ is clopen,
	\item the syntactic congruence of $R$ is a clopen subset of $\ExtHat \times \ExtHat$
	\item $E_R$ is finite and the syntactic morphism of $R$ is a continuous map.
\end{enumerate}
\end{proposition}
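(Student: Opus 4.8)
The plan is to prove the three equivalences in a cycle, $(1) \Rightarrow (3) \Rightarrow (2) \Rightarrow (1)$, mirroring the analogous statements for the free profinite monoid but keeping track of the $\ext$-operations. Throughout I would use that $\ExtHat$ is compact (the preceding proposition), that a subset of a compact metric space is clopen iff it is a finite union of balls of some fixed radius $2^{-n}$, and that the balls of radius $2^{-n}$ are exactly the classes of the finite-index congruence $\sim_n$ (``not separable by an $\Ext$-algebra of size at most $n$'') introduced in the compactness proof.

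For $(1) \Rightarrow (3)$: assume $R$ is clopen. Then both $R$ and its complement are open, hence compact, hence each is a finite union of balls $B_{2^{-n}}(x)$; taking $n$ large enough to work for all of them simultaneously, $R$ is a union of $\sim_n$-classes, so $\sim_n$ refines the syntactic congruence $\sim_R$. Since $\sim_n$ has finite index, so does $\sim_R$, i.e.\ $E_R$ is finite. Continuity of the syntactic morphism $\eta_R \colon \ExtHat \to E_R$ then follows because $\eta_R$ is constant on each $\sim_n$-class, i.e.\ constant on each ball $B_{2^{-n}}(x)$, which is precisely $\varepsilon$-$\delta$ continuity into the discrete finite space $E_R$ (take $\delta = 2^{-n}$). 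Here I should be slightly careful that $E_R$ really is an $\Ext$-algebra and $\eta_R$ an $\Ext$-algebra morphism — but the syntactic congruence of $R \subseteq \ExtHat$ is defined via the operations in $\cO(\ExtHat)$ exactly as for languages of well-matched words, so the quotient construction goes through verbatim.

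For $(3) \Rightarrow (2)$: if $E_R$ is finite and $\eta_R$ is continuous, then $\eta_R \times \eta_R \colon \ExtHat \times \ExtHat \to E_R \times E_R$ is continuous, and the syntactic congruence of $R$ is $(\eta_R \times \eta_R)^{-1}(\Delta)$ where $\Delta = \{(m,m) : m \in E_R\}$ is the diagonal. Since $E_R \times E_R$ is finite and discrete, $\Delta$ is clopen, hence its preimage is a clopen subset of $\ExtHat \times \ExtHat$. For $(2) \Rightarrow (1)$: if the syntactic congruence $\sim_R$ is clopen in $\ExtHat \times \ExtHat$, then since $R$ is a union of $\sim_R$-classes it suffices to show each $\sim_R$-class is clopen; fixing $x \in \ExtHat$, the class of $x$ is $\{y : (x,y) \in {\sim_R}\}$, which is the preimage of the clopen $\sim_R$ under the continuous map $y \mapsto (x,y)$, hence clopen. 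As $\sim_R$ has finite index (it is clopen in a compact space, so its classes form a finite clopen partition), $R$ is a finite union of clopen classes, hence clopen.

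I expect the main obstacle to be a mild bookkeeping point rather than a deep one: making sure that ``the syntactic congruence of $R$'' and ``the syntactic $\Ext$-algebra $E_R$'' are well-defined for an arbitrary clopen $R \subseteq \ExtHat$ — i.e.\ that $\ExtHat$ with its operations $\cO(\ExtHat)$ really supports a syntactic-congruence construction, with $E_R$ automatically finite exactly when finitely many $\cO(\ExtHat)$-operations suffice to detect membership in $R$ — and that the finiteness of $E_R$ in $(1)\Rightarrow(3)$ is forced by $\sim_n$ refining $\sim_R$ and not merely by $R$ being a finite union of balls in the horizontal monoid sense. The resolution is that $\sim_n$ is already an $\Ext$-algebra congruence (the ``not separable'' relation is preserved by all operations, since an operation-image of two inseparable words is inseparable), so refining the syntactic congruence by a finite-index $\Ext$-congruence indeed bounds the size of $E_R$. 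Everything else is the standard Stone-duality dictionary between clopen subsets, continuous finite-valued maps, and clopen equivalence relations on a compact space, transported through the profinite $\Ext$-algebra structure.
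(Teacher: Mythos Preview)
Your proof is correct and takes a genuinely different route from the paper. The paper proves the cycle $1 \Rightarrow 2 \Rightarrow 3 \Rightarrow 1$: for $1 \Rightarrow 2$ it writes $\sim_R$ as an intersection of preimages of $R$ and $R^c$ under the continuous maps $x \mapsto uxv$ and $\ext$ (giving closedness), and then runs a separate sequential-compactness argument to show that $\sim_R^c$ is closed as well; for $2 \Rightarrow 3$ it shows each $\sim_R$-class is open, so compactness forces finitely many classes; $3 \Rightarrow 1$ is the easy preimage argument. Your cycle $1 \Rightarrow 3 \Rightarrow 2 \Rightarrow 1$ replaces the paper's hardest step by a direct metric argument: a clopen $R$ is a finite union of $\sim_n$-balls, and $\sim_n$ is an $\Ext$-algebra congruence because every operation in $\cO(\ExtHat)$ is $1$-Lipschitz (precisely what is established in the proof sketch of Proposition~\ref{prop:ExtensionOperations}), so $\sim_n$ refines $\sim_R$ and $E_R$ is finite. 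Your $3 \Rightarrow 2$ via the diagonal $(\eta_R \times \eta_R)^{-1}(\Delta)$ is shorter than anything the paper does, and your $2 \Rightarrow 1$ is essentially the paper's $2 \Rightarrow 3$ read backwards. The trade-off is that your argument exploits the concrete ultrametric (balls are congruence classes), while the paper's argument stays at the level of continuity and compactness and would port unchanged to profinite settings that are not presented metrically. One minor slip to fix when you write it up: ``open, hence compact'' should read ``closed (being clopen), hence compact''.
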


\begin{proof}
\textbf{1. to 2.:} Let $R$ be a clopen subset of $\ExtHat$. Define the sets
$$M = \bigcap_{u,v \in \ExtHat} ((u^{-1}Rv^{-1} \times u^{-1}Rv^{-1}) \cup (u^{-1}R^cv^{-1} \times u^{-1}R^cv^{-1}))$$
and
$$X = \bigcap_{a \in \Call, b \in \Return} ((\ext^{-1}(R) \times \ext^{-1}(R)) \cup (\ext^{-1}(R) \times \ext^{-1}(R)))$$
Then
$$\sim_R \ = M \cup X.$$
We prove that both $M$ and $X$ are closed. Observe that since $R$ is clopen and $u^{-1}Rv^{-1}$ is the inverse image of $x \mapsto uxv$, which is a continuous map, each of the sets $u^{-1}Rv^{-1}$ is closed and so is $u^\mone R^c v^\mone$. Since $\ExtHat \times \ExtHat$ has the product topology, $M$ is closed. 

By a similar argumentation, the set $X$ is also closed.

We prove that $\sim_R^c$ is closed, by showing that the limit of any convergent sequence in $\sim_R^c$ is contained in it. Let $(s_n,t_n)$ be a convergent sequence in $\sim_R^c$ with limit $(s,t)$. Since $s_n$ and $t_n$ are not equivalent, there exist sequences $u_n$ and $v_n$ in $\ExtHat$ such that $u_ns_nv_n \in R$ and $u_nt_nv_n \notin R$ or sequences $a_n$ in $\Call$ and $b_n$ in $\Return$ such that $\ext[a_nb_n](s_n) \in R$ and $\ext[a_nb_n](t_n) \notin R$.

In the first case, because $\ExtHat$ is compact, both $u_n$ and $v_n$ have convergent subsequences $u'_n$ and $v'_n$ with limits $u'$ and $v'$. Since $R$ is clopen and the multiplication is continuous $u'_n s_n v'_n$ converges to $u'sv' \in R$ and $u'_n t_n v'_n$ converges to $u'tv' \notin R$. Thus $(s,t) \in \sim_R^c$.

In the second case, since $\Call$ and $\Return$ are finite, and $\ext$ is continuous, there exist $a \in \Call$ and $b \in \Return$ such that $\ext[a_nb_n](s_n)$ converges to $\ext(s)$ and $\ext[a_nb_n](t_n)$ converges to $\ext(t)$. Since $R$ is clopen, $\ext(s) \in R$ and $\ext(t) \notin R$ and thus $(s,t) \in \sim_R^c$.

We conclude that $\sim_R^c$ is closed and thus $\sim_R$ is open.

\textbf{2. to 3.:} We show that for any $x \in \ExtHat$, the equivalence class of $x$ with respect to $\sim_R$ is open. It holds that for each $x \in \Ext$, $(x,x) \in \sim_R$. Since $\sim_R$ is open, there exists an open set $U \subseteq \ExtHat$ such that $(x,x) \in U \times U \subset \sim_R$. Moreover $U$ must be fully contained in the equivalence class of $x$, since otherwise there exist two non-equivalent elements $u,v \in U$ with $(u,v) \in U \times U$, which is a contradiction to $U \times U \subset \sim_R$. Since $x$ was arbitrary, this implies that the $x$-classes of $\sim_R$ are open and thus form an open partition of $\ExtHat$. By compactness of $\ExtHat$ this partition is finite, which implies that $E_R$ is finite and the syntactic morphism is continuous by the observation that each $x$-class is the preimage of a singleton in $E_R$ and open.

\textbf{3. to 1.:} Let $\eta \colon \ExtHat \rightarrow E_R$ be the syntactic morphism of $R$. Since $E_R$ is finite, it is discrete and each subset of $E_R$ is clopen and since $\eta^{-1}(\eta(R)) = R$, and $\eta$ is continuous, $R$ is clopen.
\end{proof}

\begin{proposition}\label{prop:clopenVPL}
If $L \subseteq \FreeExt$ is a language, then $L = \overline{L} \cap \FreeExt$ and the following conditions are equivalent:
\begin{enumerate}
	\item $L$ is a VPL
	\item $L = K \cap \FreeExt$ for some clopen set $K \subseteq \ExtHat$
	\item $\overline{L}$ is clopen in $\ExtHat$
	\item $\overline{L}$ is recognised by a continuous morphism $\varphi \colon \ExtHat \rightarrow E$, where $E$ is a finite $\Ext$-algebra.
\end{enumerate}
\end{proposition}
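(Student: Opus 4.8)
The plan is to first record the identity $L=\overline{L}\cap\WMSigma$ and then prove the four conditions equivalent via the cycle $(1)\Rightarrow(4)\Rightarrow(3)\Rightarrow(2)\Rightarrow(1)$, with Theorem~\ref{thm:VPL-Ext} and Proposition~\ref{prop:ClopenCong} doing the heavy lifting. For the identity, the inclusion $L\subseteq\overline{L}\cap\WMSigma$ is immediate, and the converse rests on the observation (already implicit in the proof of Proposition~\ref{lem:separate}) that $\WMSigma$ is a \emph{discrete} subspace of $\ExtHat$: for any $x\in\WMSigma$ the syntactic $\Ext$-algebra of the singleton $\{x\}$ is finite and separates $x$ from every other well-matched word, so $d(x,y)\ge 2^{-|\Ext(\{x\})|}$ for all $y\in\WMSigma\setminus\{x\}$ and $x$ is isolated. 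Since a convergent sequence in a discrete space is eventually constant, any $w\in\overline{L}\cap\WMSigma$, being a limit of a sequence drawn from $L\subseteq\WMSigma$, must already lie in $L$.

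For $(1)\Rightarrow(4)$: if $L$ is a VPL, Theorem~\ref{thm:VPL-Ext} supplies a finite $\Ext$-algebra $E$ and a morphism $\psi\colon\WMSigma\to E$ with $L=\psi^{-1}(\psi(L))$; let $\what\psi\colon\ExtHat\to E$ be its unique continuous extension, which is again a morphism of $\Ext$-algebras because the operations on $\ExtHat$ are continuous extensions of those on the dense subset $\WMSigma$. The claim is that $\overline{L}=\what\psi^{-1}(\psi(L))$. The right-hand side is closed (preimage of a subset of the discrete finite space $E$ under a continuous map) and contains $L$, hence contains $\overline{L}$; conversely, if $\what\psi(x)\in\psi(L)$, write $x=\lim x_n$ with $x_n\in\WMSigma$, so $\psi(x_n)\to\what\psi(x)$ and, the sequence being eventually stationary, $\psi(x_n)\in\psi(L)$ — hence $x_n\in L$ — for all large $n$, giving $x\in\overline{L}$. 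Moreover $\what\psi(\overline{L})=\psi(L)$ since $\psi(L)$ lies in the image of $\what\psi$, so $\what\psi$ is a continuous morphism recognising $\overline{L}$, which is condition $(4)$.

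For $(4)\Rightarrow(3)$: a continuous morphism $\varphi\colon\ExtHat\to E$ into a finite $\Ext$-algebra recognising $\overline{L}$ makes $\overline{L}=\varphi^{-1}(\varphi(\overline{L}))$ the preimage of a clopen subset of a discrete space, hence clopen. For $(3)\Rightarrow(2)$: take $K=\overline{L}$ and use $K\cap\WMSigma=L$ from the first part. For $(2)\Rightarrow(1)$: given a clopen $K\subseteq\ExtHat$ with $L=K\cap\WMSigma$, Proposition~\ref{prop:ClopenCong} gives that the syntactic $\Ext$-algebra $E_K$ of $K$ is finite and the syntactic morphism $\eta_K\colon\ExtHat\to E_K$ is continuous; restricting $\eta_K$ to the sub-$\Ext$-algebra $\WMSigma$ and putting $P=\eta_K(K)$ yields $L=K\cap\WMSigma=\eta_K^{-1}(P)\cap\WMSigma=(\eta_K|_{\WMSigma})^{-1}(P)$, from which $L=(\eta_K|_{\WMSigma})^{-1}\big((\eta_K|_{\WMSigma})(L)\big)$ follows at once, so $L$ is recognised by the finite $\Ext$-algebra $E_K$ and is a VPL by Theorem~\ref{thm:VPL-Ext}.

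The one delicate point is to keep $\overline{L}$ and an \emph{arbitrary} clopen $K$ with $L=K\cap\WMSigma$ apart — these need not coincide, so $(3)\Rightarrow(2)$ must use $\overline{L}$ itself — and to handle the accepting sets carefully so that genuine recognition ($L=\psi^{-1}(\psi(L))$, not merely $L=\psi^{-1}(P)$) is obtained; neither is hard, but both are easy to fumble. Everything else is routine bookkeeping with density, continuity, and the eventual-stationarity of morphisms into finite $\Ext$-algebras.
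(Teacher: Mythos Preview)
Your argument is correct and uses the same ingredients as the paper (Theorem~\ref{thm:VPL-Ext}, Proposition~\ref{prop:ClopenCong}, density of $\WMSigma$, continuity of $\what\psi$), but you traverse the cycle in the opposite direction: the paper proves $(1)\Rightarrow(2)\Rightarrow(3)\Rightarrow(4)\Rightarrow(1)$, whereas you do $(1)\Rightarrow(4)\Rightarrow(3)\Rightarrow(2)\Rightarrow(1)$. In particular, where the paper obtains $(2)\Rightarrow(3)$ by the density argument showing $\overline{L}=K$ and then invokes Proposition~\ref{prop:ClopenCong} for $(3)\Rightarrow(4)$, you instead use Proposition~\ref{prop:ClopenCong} for $(2)\Rightarrow(1)$ and prove $\overline{L}=\what\psi^{-1}(\psi(L))$ directly for $(1)\Rightarrow(4)$. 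Both routes are equally short; yours has the slight advantage of spelling out the identity $L=\overline{L}\cap\WMSigma$ via discreteness of $\WMSigma$, which the paper silently uses in its $(4)\Rightarrow(1)$ step.

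One correction to your closing commentary: your caution that an arbitrary clopen $K$ with $L=K\cap\WMSigma$ ``need not coincide'' with $\overline{L}$ is misplaced. In fact it must: since $K$ is open and $\WMSigma$ is dense, $L=K\cap\WMSigma$ is dense in $K$, so $K\subseteq\overline{L}$; and since $K$ is closed and contains $L$, $\overline{L}\subseteq K$. This is exactly the paper's $(2)\Rightarrow(3)$ step. Your proof does not rely on this mistaken remark, so nothing breaks, but the ``delicate point'' you flag is not actually there.
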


\begin{proof}
\textbf{1. to 2.:} Let $L$ be a VPL, then by Theorem \ref{thm:VPL-Ext}, there exists a finite $\Ext$-algebra recognising $L$. Let $\varphi \colon \FreeExt \rightarrow E$ be the morphism recognising $L$, that is $L = \varphi^\mone(R)$, where $R \subseteq E$. Moreover, let $K = \widehat{\varphi}^{-1}(R)$. Since $E$ is discrete, $R$ is a clopen set and since $\what{\varphi}$ is continuous, $K$ is clopen. Then $\varphi(w) = \what{\varphi}(w)$ for $w \in \FreeExt$, implies $L = \what{\varphi}^\mone(R) \cap \FreeExt = K \cap \FreeExt$.

\textbf{2. to 3.:} Assume that $L = K \cap \FreeExt$ for some clopen set $K \subseteq \ExtHat$. Since $\FreeExt$ is dense in $\ExtHat$, and since $K$ is open, it follows that $K \cap \FreeExt$ is dense in $K$. Thus $\overline{K \cap \FreeExt} = K = \overline{L} $ is clopen.

\textbf{3. to 4.:} See Proposition $\ref{prop:ClopenCong}$

\textbf{4. to 1.:} Assume that $\overline{L}$ is recognised by a morphism $\varphi \colon \ExtHat \rightarrow E$ into a finite $\Ext$-algebra $E$. Then $\overline{L} = \varphi^\mone(R)$ for some $R \subseteq E$. Let $\varphi_{\FreeExt} \colon \FreeExt \rightarrow E$ be the restriction of $\varphi$ to $\FreeExt$. Then 
$$L = \overline{L} \cap \FreeExt = \varphi^{-1}(R) \cap \FreeExt = \varphi_{\FreeExt}^\mone(R)$$ and by Theorem \ref{thm:VPL-Ext}, $L$ is a VPL.
\end{proof}

The statements of the next Proposition characterise the topological closures of VPLs.

\begin{proposition}\label{prop:VPLClosure1}
Let $L \subseteq \FreeExt$ be a VPL and let $u \in \ExtHat$. Then the following conditions are equivalent:
\begin{enumerate}
	\item $u \in \overline{L}$	
	\item $\what{\varphi}(u) \in \varphi(L)$, for all morphisms $\varphi$ from $\FreeExt$ onto a finite $\Ext$-algebra.
	\item $\what{\eta}(u) \in \eta(L)$, were $\eta$ is the syntactic morphism of $L$.
	\item $\what{\varphi}(u) \in \varphi(L)$, for some morphism $\varphi$ from $\FreeExt$ onto a finite $\Ext$-algebra recognising $L$.
\end{enumerate}
\end{proposition}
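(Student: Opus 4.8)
The plan is to establish the identity $L = \overline{L} \cap \FreeExt$ first, and then to prove the four conditions equivalent by closing the cycle $1 \Rightarrow 2 \Rightarrow 3 \Rightarrow 4 \Rightarrow 1$.

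For the identity, the inclusion $L \subseteq \overline{L} \cap \FreeExt$ is immediate, and for the reverse I would use that $\FreeExt$ is discrete as a metric space and embeds isometrically into $\ExtHat$: a well-matched word lying in $\overline{L}$ is the limit of a sequence in $L$, which, being a convergent sequence of well-matched words, is eventually constant, so the point already belongs to $L$.

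For $1 \Rightarrow 2$, given any morphism $\varphi \colon \FreeExt \to E$ onto a finite $\Ext$-algebra I would invoke its unique continuous extension $\what{\varphi} \colon \ExtHat \to E$ constructed in Section~\ref{sec:topology}: picking a sequence $(x_n)_{n \in \N}$ in $L$ with $x_n \to u$, continuity gives $\varphi(x_n) \to \what{\varphi}(u)$, and since $E$ is finite the subset $\varphi(L)$ is closed, hence the limit $\what{\varphi}(u)$ lies in $\varphi(L)$. The step $2 \Rightarrow 3$ is simply the instance $\varphi = \eta_L$ of the syntactic morphism, which is legitimate because $L$ is a VPL: by Theorem~\ref{thm:VPL-Ext} together with Proposition~\ref{prop:RecognitionDivision} the syntactic $\Ext$-algebra of $L$ is finite, and $\eta_L$ is surjective onto it by construction. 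Likewise $3 \Rightarrow 4$ holds because the syntactic morphism recognises $L$, so it is an admissible witness for condition~$4$. Finally, for $4 \Rightarrow 1$, suppose $\varphi \colon \FreeExt \to E$ is a morphism onto a finite $\Ext$-algebra recognising $L$, so that $L = \varphi^{\mone}(\varphi(L))$, and $\what{\varphi}(u) \in \varphi(L)$. Choosing a sequence $(x_n)_{n \in \N}$ in $\FreeExt$ converging to $u$, continuity of $\what{\varphi}$ and discreteness of $E$ force $\varphi(x_n) = \what{\varphi}(u) \in \varphi(L)$ for all large $n$; hence $x_n \in \varphi^{\mone}(\varphi(L)) = L$ for large $n$, and therefore $u \in \overline{L}$.

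I do not expect a genuine obstacle here: the whole argument amounts to transporting the discrete topology of a finite $\Ext$-algebra back along a continuous extension and using that $\FreeExt$ is dense in $\ExtHat$. The only point requiring a little care is making sure that the syntactic morphism lands in a finite (hence discrete) $\Ext$-algebra, so that conditions~$3$ and~$4$ are meaningful and so that the eventual-constancy arguments apply; this is precisely where the assumption that $L$ is a VPL enters, via Theorem~\ref{thm:VPL-Ext}.
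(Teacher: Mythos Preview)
Your proposal is correct and follows essentially the same route as the paper: the cycle $1 \Rightarrow 2 \Rightarrow 3 \Rightarrow 4 \Rightarrow 1$, with the middle implications being trivial specialisations and the outer ones relying on continuity of $\what{\varphi}$ together with discreteness of a finite target. The only cosmetic difference is that the paper phrases $1 \Rightarrow 2$ and $4 \Rightarrow 1$ via closure identities such as $\what{\varphi}(\overline{L}) = \overline{\varphi(L)} = \varphi(L)$ and $\overline{L} = \what{\varphi}^{\mone}(\varphi(L))$, whereas you unfold the same reasoning with explicit convergent sequences; in a metric space these are interchangeable.
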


\begin{proof}
\textbf{1. to 2.:} Suppose $u \in \overline{L}$ and let $\varphi \colon \WMSigma \rightarrow R$ be a morphism onto a finite $\Ext$-algebra $R$. Since $\what{\varphi}$ is continuous and $R$ is discrete we have 
\[\what{\varphi}(\overline{L}) = \overline{\what{\varphi}(L)} = \what{\varphi}(L) = \varphi(L).\]

\textbf{2. to 3. and 3. to 4.:} Is trivial.

\textbf{4. to 1.:} Let $\varphi \colon \WMSigma \rightarrow R$ be a morphism onto a finite $\Ext$-algebra $R$. If $R$ recognises $L$, then $L = \varphi^\mone(\varphi(L))$, which together with continuity of $\what{\varphi}$ and discreteness of $R$ implies
\[\overline{L} = \overline{\varphi^\mone(\varphi(L))} = \what{\varphi}^\mone(\what{\varphi}(\overline{L})) = \what{\varphi}^\mone(\varphi(L)).\]
\end{proof}

The following theorem is a conclusion from the previous propositions, in particular \ref{prop:clopenVPL}. 
Denote by $\text{VPL}(A)$ the set of visibly pushdown languages over the visibly pushdown alphabet $A$ and by $\text{Clopen}(\ExtHat)$ the set of all clopen sets of $\ExtHat$. We now show that the maps $L \mapsto \overline{L}$ from VPL$(A)$ to Clopen$(\ExtHat)$ and $K \mapsto K \cap \WMSigma$, where $K$ is a clopen of $\ExtHat$ are inverse to each other.

\begin{proof}
That both maps are inverse to each other, follows directly from Proposition \ref{prop:clopenVPL}. It remains to be shown that both are morphisms of Boolean algebra, which is straight-forward for the map $K \mapsto \WMSigma \cap K$. That $L \mapsto \overline{L}$ is a morphism can be derived from the observation that $\overline{L} = \what{\eta}^\mone(\eta(L))$, where $\eta$ is the syntactic morphism of $L$ and the fact that closure and union commute.
\end{proof}

\subsection{Proof of Theorem \ref{thm:Reiterman}:}

If $E$ is a set of equations, we denote by $\Eqat{E}$ the class of all $\Ext$-algebras satisfying all equations in $E$.

\begin{proposition}\label{prop:VPLEqVar}
If $E$ is a set of equations over $\ExtHat$, then $\Eqat{E}$ is a variety of $\Ext$-algebras and $L(\Eqat{E})$ is the corresponding variety of languages.
\end{proposition}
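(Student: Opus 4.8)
The plan is to show that $\Eqat{E}$ satisfies the three closure conditions of a pseudo-variety of $\Ext$-algebras (subs, quotients, finite direct products), and then that the class of languages recognised by $\Eqat{E}$ is exactly the variety of VPL associated to $\Eqat{E}$ under the correspondence of Theorem~\ref{thm:eil}. For the closure properties the key observation is that satisfaction of an equation $u = v$ is inherited along morphisms in the appropriate directions. First I would fix an equation $u = v \in E$ with $u,v \in \ExtHat$ and recall that an $\Ext$-algebra $R$ satisfies $u = v$ iff $\what{\phi}(u) = \what{\phi}(v)$ for every morphism $\phi \colon \WMSigma \rightarrow R$, where $\what{\phi}$ is the unique continuous extension described in Section~\ref{sec:topology}.

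For subs: if $R$ is a sub of $S$ and $S$ satisfies $u = v$, then any morphism $\phi \colon \WMSigma \rightarrow R$ composes with the inclusion $R \hookrightarrow S$ to give a morphism into $S$; since the inclusion is continuous and injective, $\what{\phi}$ factors through the extension of the composed morphism, whence $\what{\phi}(u) = \what{\phi}(v)$. For quotients: if $\rho \colon S \twoheadrightarrow R$ is a surjective morphism and $S$ satisfies $u = v$, then for any $\phi \colon \WMSigma \rightarrow R$ we can lift $\phi$ to $\phi' \colon \WMSigma \rightarrow S$ using surjectivity of $\rho$ on the generators (the $\ext$-operations on $\WMSigma$ being free), so that $\phi = \rho \circ \phi'$; then $\what{\phi} = \rho \circ \what{\phi'}$ by uniqueness of continuous extensions, and $\what{\phi'}(u) = \what{\phi'}(v)$ gives the claim. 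For finite direct products: a morphism $\phi \colon \WMSigma \rightarrow R \times S$ is a pair $(\phi_R, \phi_S)$ of morphisms into the factors (composing with the projections), its continuous extension is $(\what{\phi_R}, \what{\phi_S})$, and if both $R$ and $S$ satisfy $u = v$ then so does the product componentwise. Hence $\Eqat{E}$ is a pseudo-variety of $\Ext$-algebras.

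For the second assertion, $L(\Eqat{E})$ denotes the class of VPL recognised by members of $\Eqat{E}$; by Proposition~\ref{prop:ExtVarFormsLangVar} this is a variety of VPL, and it is by construction the image of $\Eqat{E}$ under the correspondence $\bV \rightarrow \cV$ of Theorem~\ref{thm:eil}. So there is essentially nothing further to prove here beyond invoking that $\Eqat{E}$ is a variety of $\Ext$-algebras (just established) and applying the earlier results; the phrase ``the corresponding variety of languages'' is exactly the statement that $L(\Eqat{E})$ is a pseudo-variety of VPL paired with $\Eqat{E}$.

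The main obstacle, and the only point requiring genuine care, is the lifting argument in the quotient case: one must check that a morphism $\WMSigma \rightarrow R$ can be lifted through a surjection $S \twoheadrightarrow R$ of $\Ext$-algebras. This uses that $\WMSigma$ is the free $\Ext$-algebra, so it suffices to choose preimages in $S$ for the images of the finitely many generating operations $\ext$ (for $a \in \Call$, $b \in \Return$) and $\ext[c,\lambda], \ext[\lambda,c]$ (for $c \in \Intern$), which exist by surjectivity; the universal property then produces the lift. The compatibility of continuous extensions with composition — i.e.\ $\widehat{\rho \circ \phi'} = \rho \circ \what{\phi'}$ for $\rho$ a (necessarily continuous) morphism into a finite $\Ext$-algebra — follows from uniqueness of continuous extensions into the Stone space, as in the finite monoid case, and should be stated as a small lemma or simply cited from Section~\ref{sec:topology}.
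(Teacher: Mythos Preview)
Your proposal is correct and follows essentially the same approach as the paper, which reduces to a single equation and then asserts (very tersely) that ``quotients and direct products preserve identities''; you have simply supplied the details the paper omits, including the subs case and the lifting argument via freeness of $\WMSigma$ for quotients. The second assertion is handled identically in both, by invoking the correspondence $\bV \rightarrow \cV$ of the Eilenberg theorem.
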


\begin{proof}
First of all, observe that the intersection of varieties is a variety and that thus, without loss of generality, it suffices to show that $\Eqat{u = v}$ for some $u,v \in \ExtHat$ forms a variety.

It is pretty straight-forward to see, that quotients and direct products preserve identities, which results in $\Eqat{u = v}$ being a variety of $\Ext$-algebras and by the observation that $L(\Eqat{u = v})$ are the languages recognised by $\Eqat{u = v}$ the claim follows.
\end{proof}

Proof of the Theorem

\begin{proof}
One direction of the proof follows directly from Proposition \ref{prop:VPLEqVar}. For the other direction, let $\bV$ be a variety of $\Ext$-algebras and let $E$ be the set of identities that are satisfied by all elements of $\bV$. Now, it is evident that $\bV \subseteq \Eqat{E}$. We prove that the inclusion $\Eqat{E} \subseteq \bV$ holds. Let $R \in \Eqat{E}$ and $\varphi \colon \ExtHat \rightarrow R$ be a morphism. For a morphism $h \colon \WMSigma \rightarrow S$, where $S \in \bV$ define the set
\[N_h = \{(u,v) \in \ExtHat \times \ExtHat \mid \what{h}(u) \neq \what{h}(v)\}.\]
This set is open, since $\what{h}$ is continuous and $N_h$ is the preimage of the complement of the diagonal of $S \times S$. Formally, we also identify $E$ with a subset of $\ExtHat \times \ExtHat$, that is the set of all $(u,v)$ such that $u = v$ is an equation in $E$. We observe that if a tuple $(u,v)$ is not contained in any set $N_h$ for some morphism $h$, then $(u,v)$ must be in $E$. Denote by $\cF$ the set of all morphisms from $\WMSigma$ into an $\Ext$-algebra of $\bV$. It follows that
\[E \cup \bigcup_{h \in \cF} N_h\]
is a cover of $\ExtHat \times \ExtHat$.
Define the set
\[E_{\varphi} = \{(u,v) \in \ExtHat \times \ExtHat \mid \what{\varphi}(u) = \what{\varphi}(v)\}.\]
Clearly $E \subseteq E_{\varphi}$ and $E_{\varphi}$ is open. By the previous argumentation,
\[E_\varphi \cup \bigcup_{h \in \cF} N_h\]
is an open cover of $\ExtHat \times \ExtHat$ an hence there exists a finite subcover
\[E_\varphi \cup \bigcup_{h \in F} N_h.\]
In particular, if $\what{h}(u) = \what{h}(v)$ for all $h \in F$, then $\what{\varphi}(u) = \what{\varphi}(v)$ and hence $R$ is a divisor of the product $\prod_{h \in F} \what{h}(\ExtHat)$ and since this finite $\Ext$-algebra is in $\bV$, we obtain $R \in \bV$.
\end{proof}

\section{Equations for Subclasses of VPL}\label{sec:application}

Unless stated otherwise, by equation, we mean in the following always a formal equality between two elements in $\ExtHat$.

We are now going to investigate equations for two subclasses of the visibly pushdown languages: The visibly counter languages and a subset of these, namely the set of all VCL with threshold zero. Hence, we start with VCL by determining a set of equations (see Proposition \ref{prop:EquationsVCL}), which is sound, where soundness means that each VCL satisfies the equations. For VCL with threshold zero, we show that additional equations (see Proposition \ref{prop:EquationsZeroVCL}) hold. Moreover, we consider examples for which the equations can be utilised to show that these particular example languages are not a VCL (or not a VCL with threshold zero).

\subsection{Equations for VCL}

The following set of equations is sound, but we do not know whether it is complete. In other words, we know that whenever a language is a VCL, then it satisfies the equation, but we do not know, whether the converse holds. However, we conjecture that the equations are indeed also complete.

\begin{proposition}\label{prop:EquationsVCL}
Let $x \in \ExtHat$ and let $u,v,u',v' \in A^*$ such that $uv,u'v',uv',u'v \in \WMSigma$. Moreover, let $L$ be a VCL, then $L$ satisfies the equation
\begin{equation}\label{eq:VCL}
\extomega[u,v](\extomega[u',v'](x)) = \extomega[u,v](\extomega[u,v'](\extomega[u',v'](x))) = \extomega[u,v](\extomega[u',v](\extomega[u',v'](x)))
\end{equation}
\end{proposition}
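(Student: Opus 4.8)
Unwinding the definition, $L$ satisfies~\eqref{eq:VCL} once some $\Ext$-algebra recognising $L$ does, and — factoring every morphism through its image and using that satisfaction of equations is stable under subs, quotients and products (as in Proposition~\ref{prop:VPLEqVar}) together with the fact that $\mathrm{VCL}$ forms a pseudo-variety — it suffices to show that for every visibly counter automaton $\cB$ the \emph{canonical} morphism $\eta\colon\WMSigma\to R$ into its associated finite $\Ext$-algebra $R$ (as built in the proof of Theorem~\ref{thm:VPL-Ext}) sends all three members of~\eqref{eq:VCL} to one and the same element of $R$. So fix such a $\cB$ with threshold $m$ and realise it as a VPA whose stack symbol records $\min(\text{counter},m)$, so that $\Gamma=\{\#,G_1,\dots,G_m\}$; then $\eta(w)$ is the tuple $(\tau_0(w),\dots,\tau_m(w))\in(Q^Q)^{m+1}$, where $\tau_i(w)$ is the state transformation obtained by reading $w$ with the counter sitting at level $i$ (all levels $\ge m$ acting alike), and the operation $\eta(\ext[u,v])$ sends a tuple $s=(s_j)_j$ to the tuple whose $i$-th entry is $\rho^v_i\circ s_{\min(i+h,m)}\circ\rho^u_i$, for fixed maps $\rho^u_i,\rho^v_i$ and $h$ the number of unmatched calls of $u$. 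Throughout I use the elementary stability fact that reading a well-matched word never pops below its starting stack level; hence reading such a word from a level $\ge m$ never leaves the regime $\ge m$, so $\tau_m$ is ``DFA-like'': $\tau_m(w_1w_2)=\tau_m(w_2)\circ\tau_m(w_1)$.

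From $uv,u'v,uv',u'v'\in\WMSigma$ it follows that $u,u'$ are prefixes of well-matched words with the same number $h$ of unmatched calls and $v,v'$ are the matching suffixes; in particular all four have non-negative height profiles and $u,u'$ end at relative height $h$. If $h=0$, then $u,v,u',v'$ are themselves well-matched, each $\extomega[u,v]$ acts on $R$ by $y\mapsto\eta(u)^{\omega}y\,\eta(v)^{\omega}$ (left and right multiplications commute, so the idempotent power of the operation splits), and all three members of~\eqref{eq:VCL} collapse to $\eta(u)^{\omega}\eta(u')^{\omega}x\,\eta(v')^{\omega}\eta(v)^{\omega}$ by idempotency — an identity valid in any $\Ext$-algebra. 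So assume $h\ge1$; this is where the threshold does the work. Let $u_m$ and $v_m$ be the well-defined transformations induced by reading $u$ and $v$ from a level $\ge m$. Iterating $\ext[u,v]$ pushes the insertion point above the threshold: for all large $n$, reading $u^n$ from level $i$ equals $(u_m)^{\,n-p_i}\circ\gamma_i$ and reading $v^n$ back down to level $i$ equals $\delta_i\circ(v_m)^{\,n-q_i}$, where $\gamma_i,\delta_i$ are fixed and the ``boundary exponents'' $p_i,q_i$ are bounded by a constant depending only on $\cB$. Hence for $i<m$
\[\tau_i\bigl(\ext[u,v]^n(w)\bigr)\;=\;\delta_i\circ(v_m)^{\,n-q_i}\circ\tau_m(w)\circ(u_m)^{\,n-p_i}\circ\gamma_i,\]
while $\tau_m(\ext[u,v]^n(w))=(v_m)^{n}\circ\tau_m(w)\circ(u_m)^{n}$. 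Taking $n=n!\to\infty$ (the limits exist, $R$ being finite), $\extomega[u,v](s)$ depends on $s$ only through its $\tau_m$-coordinate $s_m$, with $[\extomega[u,v](s)]_m=(v_m)^{\omega}s_m(u_m)^{\omega}$ and $[\extomega[u,v](s)]_i=\delta_i\circ D_i\circ s_m\circ B_i\circ\gamma_i$ for $i<m$, where $B_i$ and $D_i$ are the eventual values of $(u_m)^{\,n!-p_i}$ and $(v_m)^{\,n!-q_i}$ and satisfy the absorption laws $(u_m)^{\omega}\circ B_i=B_i$, $D_i\circ(v_m)^{\omega}=D_i$. The same holds for $u',v'$ (same $h$), yielding $[\extomega[u',v'](t)]_m=(v'_m)^{\omega}t_m(u'_m)^{\omega}$, $[\extomega[u,v'](t)]_m=(v'_m)^{\omega}t_m(u_m)^{\omega}$ and $[\extomega[u',v](t)]_m=(v_m)^{\omega}t_m(u'_m)^{\omega}$.

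It remains to compare the three members of~\eqref{eq:VCL} coordinatewise on $x$. Since the outermost $\extomega[u,v]$ depends on its argument only through the $\tau_m$-coordinate, inserting $\extomega[u,v']$ (resp.\ $\extomega[u',v]$) in the middle merely pre- and post-composes $x_m$ with the extra idempotents $(u_m)^{\omega}$ (resp.\ $(u'_m)^{\omega}$) on the call side and $(v'_m)^{\omega}$ (resp.\ $(v_m)^{\omega}$) on the return side. On the $\tau_m$-coordinate these are absorbed by the $(u_m)^{\omega},(v_m)^{\omega}$ coming from the outer $\extomega[u,v]$ and by the $(u'_m)^{\omega},(v'_m)^{\omega}$ already present from the inner $\extomega[u',v']$; on each coordinate $i<m$ they are absorbed by $B_i$ (via $(u_m)^{\omega}B_i=B_i$, so $(u'_m)^{\omega}(u_m)^{\omega}B_i=(u'_m)^{\omega}B_i$) and by $D_i$ (via $D_i(v_m)^{\omega}=D_i$). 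Hence the three members agree in every coordinate, so $\what{\eta}$ maps them to one and the same element of $R$; since $R$ recognises $L$, this proves~\eqref{eq:VCL} for $L$.

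\textbf{Main obstacle.} The only genuinely delicate step is the structural description of $\ext[u,v]^n$ for large $n$: one must argue, using that $u$ and $v$ form a prefix/suffix pair of non-negative height profile with positive excess $h$, that iteration drives the insertion point above the threshold and that at each end only boundedly many copies of $u$ (resp.\ $v$) fail to act through $u_m$ (resp.\ $v_m$). Because the profiles of $u$ and $v$ may exceed $h$ internally, this relies on the ``never pop below the start'' invariant and not on any monotonicity; once it is in place everything reduces to idempotent bookkeeping, and the case $h=0$ is immediate. A secondary, more bureaucratic point is the reduction in the first paragraph — that quantifying over all morphisms into a recogniser of $L$ may be replaced by the single canonical morphism into an automaton-derived algebra; this is exactly where the pseudo-variety structure of $\mathrm{VCL}$ and the division-stability of equation satisfaction are used.
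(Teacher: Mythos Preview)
Your argument is essentially the paper's, repackaged. Both proofs work with the state-transformation functions $r_{w,i}$ (your $\tau_i(w)$) of a VCA with threshold $m$, observe that iterating $\ext[u,v]$ pushes the insertion point above the threshold so that only the ``DFA-like'' transformations $u_m,v_m$ matter, and then reduce the identity to idempotent bookkeeping. The paper packages this as ``find a common exponent $s$ with $r_{u^s,i}=r_{u^{2s},i}$ and compare the three concrete words $u^su'^sxv'^sv^s$, $u^{2s}u'^sxv'^{2s}v^s$, $u^su'^{2s}xv'^sv^{2s}$'', whereas you compute an explicit coordinate-wise formula for $\extomega[u,v]$ and use the absorption laws $(u_m)^\omega B_i=B_i$, $D_i(v_m)^\omega=D_i$. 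Your version makes the dependence on the single coordinate $s_m$ pleasantly transparent, but it is the same mechanism, and your case split $h=0$ versus $h\ge 1$ is exactly the paper's split $\lVert u\rVert=0$ versus $\lVert u\rVert>0$.

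One caveat on the opening reduction: you assert that ``$\mathrm{VCL}$ forms a pseudo-variety'' in order to pass from ``every morphism into a recogniser'' to ``the canonical morphism into the automaton algebra''. The paper never claims VCL is a pseudo-variety (it proves this only for $0$-VCL), and you do not justify it either. The paper sidesteps this by working directly with the syntactic morphism and showing the relevant words are syntactically equivalent; in effect both the paper and you verify the equation only for the canonical morphism, which is what the applications actually need. So the technical core is fine, but the pseudo-variety detour is an unproven claim you should drop in favour of the paper's direct route.
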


\textbf{Proof Sketch:} Assume that $L$ is a language recognised by a VCA $\cA$ with threshold $m$. Then for each counter value $i \leq m$ and (appropriate) word $u$ we obtain a function $r_{u,i}$ from the states of $\cA$ to the states of $\cA$, simply by simulating the run of $\cA$ on $u$ starting at stack height $i$ (appropriate $u$, since in the run the counter may not go below zero). By rather combinatorial arguments, it is possible to show that since $r_{u,k} = r_{u,l}$ for all $l,k \geq m$, there exists an exponent $s \in \N$ such that $r_{u^s,i} = r_{u^{2s},i}$ for all $i \in \N$.

We then derive that for words $u,v,u',v' \in A^*$ such that $uv,u'v',u'v,uv'$ are well-matched, there exists a common exponent $s$, such that
\[r_{u^su'^sxv'^sv^s,i} = r_{u^{2s}u'^sxv'^{2s}v^s,i} = r_{u^su'^{2s}xv'^sv^{2s},i}.\]
and hence for any $x \in \WMSigma$
\[u^su'^sxv'^sv^s \in L \Leftrightarrow u^{2s}u'^sxv'^{2s}v^s \in L \Leftrightarrow u^su'^{2s}xv'^sv^{2s} \in L.\]
Arguing that due to the choice of $s$, $u^s x v^s$ maps to $\extomega[u,v](x)$ in the syntactic $\Ext$-algebra of $L$ (and similarly for $u'^s,v'^s$, $\dots$) we obtain that
\[ \extomega[u,v](\extomega[u',v'](x)) = \extomega[u,v](\extomega[u,v'](\extomega[u',v'](x))) = \extomega[u,v](\extomega[u',v](\extomega[u',v'](x)))\]
which proves the claim.\\

We are now going to use this equation to prove for two languages, that they are not VCL. Recall that an equation $u = v$, where $u,v \in \ExtHat$, holds for a language $L$, if and only if there is a finite $\Ext$-algebra $R$ and a morphism $\psi \colon \WMSigma \rightarrow R$ recognising $L$, such that $\what{\psi}(u) = \what{\psi}(v)$. 

In particular $\what{\psi}(\extomega)$ maps to the unique idempotent generated by $\psi(\ext)$ in $\cO(R)$. As before, we assume that in the following examples, the recognising morphism is understood from the names of the elements of $\cO(R)$.

\begin{example}
Consider the language $L_\text{ML}$ \cite{Lu18} over the alphabet $\{a,b,c\}$, where $a$ is a call, $b$ a return and $c$ an internal letter, given by the production rules:
\[S \rightarrow aScb \mid acSb \mid \lambda.\]
This language is recognised by the $\Ext$-algebra
\begin{center}
\begin{tabular}[t]{c|cccccc}
$\cdot$ & 1 & $c$ & $acb$ & $acbc$ & $0$ \\ \hline
1 & 1 & $c$ & $acb$ & $acbc$ & $0$ \\ 
$c$    & $c$ & 0 & $acbc$ & $0$ & 0\\
$acb$    & $acb$ & $acbc$ & 0 & 0 & 0 \\
$acbc$    & $acbc$ & 0 & 0 & 0 & 0 \\
$0$    & $0$ & 0 & 0 & 0 & 0 \\
\end{tabular}
\hspace{1cm}
\begin{tabular}[t]{c|cccccc}
 & 1 & $c$ & $acb$ & $acbc$ & $0$ \\ \hline
$\ext$ & 0 & $acb$ & 0 & $acb$ & 0\\ 
$\ext[ac,b],\ext[a,cb]$ & $acb$ & 0 & $acb$ & 0 & 0\\ 
$\ext[ac,cb]$ & 0 & 0 & 0 & 0 & 0\\
\end{tabular}
\end{center}
Observe that each of the operations displayed on the right side is idempotent and hence for $u=ac,v=b,u'=a,v'=cb$ in Equation \ref{eq:VCL} we have
\[ \extomega[ac,b](\extomega[a,cb](1)) = \ext[ac,b](\ext[a,cb](1)) = acb \]
but 
\[\extomega[ac,b](\extomega[ac,cb](\extomega[a,cb](1))) = \ext[ac,b](\ext[ac,cb](\ext[a,cb](1))) = 0.\]
Hence $L_\text{ML}$ does not satisfy Equation \ref{eq:VCL} and thus is not VCL.
\end{example}

In the following example, it is less trivial to find the correct representatives for the operations that violate the equation.

\begin{example}
Consider the language $H^+$, which was recognised by the $\Ext$-algebra in Example \ref{ex:ExtAlgebraHPlus}. For a better intuition, we would like to recall that $H^+$ has direct connection to Boolean formulas. The main difficulty in proving that $H^+$ is not VCL via the equations in \ref{eq:VCL} is to find adequate words $u,v,u'$ and $v'$. We let $u = a^2, v= b^2, u' = a^2ba^2b$ and $v' = abb^2$. The choice is explained in the following:

Evidently $\ext[u,v] = \ext[a^2,b^2]$ is the identity, since it corresponds to double negation, and as such is idempotent. Hence $\ext[u,v] = \extomega[u,v].$
Moreover, $\ext[u',v] = \ext[a^2b,b]^2$ and $\ext[a^2b,b]$ is the constant map $1$, since multiplication by $ab$ corresponds to conjunction with \texttt{false} and $\ext[a,b]$ corresponds to negation. In particular $\ext[u',v]$ is also idempotent, which results in $\ext[u',v] = \extomega[u',v].$
By a similar argument $\ext[u,v'] = \ext[a^2,abb^2]$ is the constant map $0$. We obtain that thus independently of the choice of $x$:
\[0 = \extomega[u,v](\extomega[u',v](\extomega[u',v'](x))) \neq \extomega[u,v](\extomega[u,v'](\extomega[u',v'](x))) = 1\]
or more explicitly
\[\ext[a^2,b^2]^\omega(\ext[a^2ba^2b,b^2]^\omega(\ext[a^2ba^2b,abb^2]^\omega(x))) \neq \ext[a^2,b^2]^\omega(\ext[a^2,abb^2]^\omega(\ext[a^2ba^2b,abb^2]^\omega(x))).\]
Hence, $H^+$ does not satisfy equation \ref{eq:VCL} and thus $H^+$ is not VCL.
\end{example}

\subsection{Equations for 0-VCL}

We are now going to turn to a subclass of VCL: The visibly counter languages with threshold zero (or 0-VCL). As already used in \cite{BLS06}, they are precisely the languages $K \cap \WMSigma$, where $K$ is some regular language. For instance $\{a^nb^n \mid n \in \N\} = a^*b^* \cap \WMSigma$.

An observation, that is not immediately evident when thinking about threshold zero visibly counter languages, but becomes more plausible from an algebraic point of view is the following:

\begin{proposition}\label{prop:0VCLPseudoVar}
The 0-VCL form a pseudo-variety of VPL.
\end{proposition}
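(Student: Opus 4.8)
The plan is to verify directly that the class of 0-VCL satisfies the three defining closure properties of a pseudo-variety of VPL: being a Boolean algebra of VPL over each alphabet, closure under inverse extend operations, and closure under inverse morphisms of $\Ext$-algebras. The key structural fact to exploit is the characterisation recalled just before the statement: a language $L \subseteq \WMSigma$ is a 0-VCL if and only if $L = K \cap \WMSigma$ for some regular language $K \subseteq A^*$. So the whole proof reduces to pushing the corresponding well-known closure properties of the class of regular languages over $A^*$ through the intersection with $\WMSigma$.

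First I would handle the Boolean algebra condition. Given 0-VCL $L_1 = K_1 \cap \WMSigma$ and $L_2 = K_2 \cap \WMSigma$ with $K_1, K_2$ regular, we have $L_1 \cap L_2 = (K_1 \cap K_2) \cap \WMSigma$, $L_1 \cup L_2 = (K_1 \cup K_2) \cap \WMSigma$, and the complement of $L_1$ relative to $\WMSigma$ equals $(A^* \setminus K_1) \cap \WMSigma$; since regular languages form a Boolean algebra and $\WMSigma$ is itself regular (it is context-free in general, but for a single visibly pushdown alphabet it is recognised by a finite $\Ext$-algebra, hence a VPL, and by Theorem~\ref{thm:VPL-Ext} every 0-VCL is in particular a VPL), each of these is again of the required form. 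Next, for closure under inverse extend operations, take $u,v \in A^*$ with $uv \in \WMSigma$ and $L = K \cap \WMSigma$; then $\ext[u,v]^\mone(L) = \{w \in \WMSigma \mid uwv \in L\} = \{w \in \WMSigma \mid uwv \in K\} = (u^\mone K v^\mone) \cap \WMSigma$, where $u^\mone K v^\mone = \{w \in A^* \mid uwv \in K\}$ is regular because regular languages are closed under left and right quotients by words. Hence $\ext[u,v]^\mone(L)$ is again a 0-VCL.

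The last condition — closure under inverse morphisms of $\Ext$-algebras — is the one I expect to require the most care, and it is the main obstacle. Let $\psi \colon \WMSigma \to \WMSigma[B]$ be a morphism of $\Ext$-algebras and $L \in$ 0-VCL$(B)$, say $L = K \cap \WMSigma[B]$ with $K \subseteq B^*$ regular; we must show $\psi^\mone(L)$ is a 0-VCL over $A$. Since $L \subseteq \WMSigma[B]$ and $\psi$ maps $\WMSigma$ into $\WMSigma[B]$, we have $\psi^\mone(L) = \psi^\mone(K \cap \WMSigma[B]) = \psi^\mone(K)$ as subsets of $\WMSigma$. Now $\psi$ is determined by its values on the generating extend-operations $\ext$, $\ext[c,\lambda]$, $\ext[\lambda,c]$; concretely $\psi$ assigns to each generator a well-matched word over $B$, and extends structurally. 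The plan is to define a letter-to-language substitution (a regular substitution) $\sigma \colon A^* \to 2^{B^*}$ capturing the effect of $\psi$ on letters, observe that for $w \in \WMSigma$, $\psi(w)$ can be recovered from $\sigma$ applied to $w$ together with the bracketing structure, and then argue that $\psi^\mone(K)$, intersected with $\WMSigma$, is $\sigma^\mone(K) \cap \WMSigma$ or a closely related set; since regular languages are closed under inverse substitutions (more precisely, under inverse morphisms $A^* \to B^*$, and here one can realise $\psi$'s action on well-matched words through such a combination once the matching structure is accounted for), the preimage is of the form (regular) $\cap\ \WMSigma$, i.e. a 0-VCL. The delicate point is that $\psi$ is a morphism of $\Ext$-algebras, not of monoids, so it does not literally factor through a monoid morphism $A^* \to B^*$; the extend-operation can insert matched pairs of letters, so I would argue instead at the level of the recognising objects: a 0-VCL is exactly a VPL whose syntactic $\Ext$-algebra, viewed appropriately, is ``regular-like'', and inverse morphisms of $\Ext$-algebras preserve recognition (Proposition~\ref{prop:ExtVarFormsLangVar} already establishes that the languages recognised by any pseudo-variety of $\Ext$-algebras are closed under inverse morphisms), so it suffices to exhibit a pseudo-variety $\bV$ of $\Ext$-algebras whose recognised languages are precisely the 0-VCL and invoke Proposition~\ref{prop:ExtVarFormsLangVar} together with Theorem~\ref{thm:eil}; the natural candidate for $\bV$ is the class of $\Ext$-algebras in which every extend-operation $\ext$ acts as a constant map (mirroring that a $0$-VCA ignores the counter entirely, so reading a matched pair collapses the state dependence) — checking that this class is closed under subs, quotients and finite products is routine, and checking that its recognised languages are exactly the $K \cap \WMSigma$ is where the combinatorial heart of the argument lies.
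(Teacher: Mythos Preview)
Your first two closure checks (Boolean combinations and inverse extend operations) are essentially fine, though note a slip: $\WMSigma$ is \emph{not} regular in general (as soon as there is at least one call and one return letter it is a non-regular context-free language), so you should not invoke regularity of $\WMSigma$; fortunately you do not need it, since complements are taken relative to $\WMSigma$ and the identity $\WMSigma \setminus (K \cap \WMSigma) = (A^* \setminus K) \cap \WMSigma$ already gives what you want.

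The real problem is your candidate pseudo-variety $\bV$ for the inverse-morphism step. Requiring every $\ext$ to act as a \emph{constant} map is far too strong and does not characterise the 0-VCL. Concretely, $\{a^nb^n \mid n \in \N\} = a^*b^* \cap \WMSigma$ is a 0-VCL, yet no $\Ext$-algebra with constant $\ext$ can recognise it: if $\ext$ is constant with value $c$, then every non-empty well-matched word maps to $c^k$ where $k$ is its number of top-level factors, so $ab$ and $a\,ab\,ab\,b$ (both with one top-level factor) would have the same image, although the first is in the language and the second is not. The intuition ``a 0-VCA ignores the counter, so reading a matched pair collapses state dependence'' is wrong: a 0-VCA still processes the letters of $awb$ sequentially as a finite automaton would, and the state after $awb$ certainly depends on the state after $w$.

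What the paper does --- and what you were circling towards --- is to take for $\bV$ the class $\ExtMon$ of $\Ext$-algebras that \emph{arise from a finite monoid}: those $R$ isomorphic to $\phi(\WMSigma)$ for some monoid morphism $\phi \colon A^* \to M$, with operations given by two-sided multiplication $\ext[u,v](x) = \phi(u)\cdot x \cdot \phi(v)$. The paper then checks that $\ExtMon$ is closed under subs, quotients and finite products (the quotient case requires building a congruence on $M$ from the congruence on $R$), and finally verifies that the languages recognised by $\ExtMon$ are exactly those of the form $K \cap \WMSigma$ with $K$ regular. Your overall strategy of reducing to the Eilenberg correspondence is the right one; only the description of the relevant algebras needs to be corrected from ``constant $\ext$'' to ``$\ext$ is left-right multiplication by fixed monoid elements''.
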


\textbf{Proof Sketch:} For each threshold zero VCL $L$, there exists a regular language $K$, such that $L = K \cap \WMSigma$. Let $M_K$ be the syntactic monoid of $K$ and $\eta_K \colon A^* \rightarrow M_K$ be its syntactic morphism. Then one may show that the set $\eta_K(\WMSigma)$ with the operations $\ext[u,v](x) = \eta_K(u) \cdot x \cdot \eta_K(v)$ where $x \in \eta_K(\WMSigma)$ and $u,v \in A^*$ such that $uv \in \WMSigma$, is an $\Ext$-algebra recognising $L$. In particular a language is 0-VCL if and only if it is recognised by an $\Ext$-algebra that is derived from a monoid in the above sense. It is not hard to show, that the class of all such $\Ext$-algebras is closed under subs, quotients and finite direct products and hence a pseudo-variety. Thus the 0-VCL are the corresponding pseudo-variety of languages.\\

Thus, by Reiterman's theorem, the 0-VCL are definable by a set of equations. The equations below together with the Equations \ref{eq:VCL} are sound, but possibly not complete.

\begin{proposition}\label{prop:EquationsZeroVCL}
Let $u,v,u',v' \in A^*$ such that $uv,uv',u'v,u'v' \in \WMSigma$ and let $L$ be a threshold zero VCL. Then $L$ satisfies the equations
\begin{equation}\label{eq:ZeroVCL}
\extomega[u,v](\extomega[u,v'](x) \cdot y \cdot \extomega[u',v](z)) = \extomega[u,v'](x) \cdot y \cdot \extomega[u',v](z).
\end{equation}
\end{proposition}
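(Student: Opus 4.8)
The starting point is Proposition~\ref{prop:0VCLPseudoVar}: every $0$-VCL is recognised by an $\Ext$-algebra derived from a monoid. So let $L$ be a $0$-VCL, fix a regular language $K$ with $L = K \cap \WMSigma$, let $\eta_K \colon A^* \to M_K$ be the syntactic morphism of $K$, and let $R = \eta_K(\WMSigma)$ carry the operations $\ext[s,t](r) = \eta_K(s)\,r\,\eta_K(t)$; this is a finite $\Ext$-algebra recognising $L$ via the canonical morphism $\psi = \eta_K|_{\WMSigma} \colon \WMSigma \to R$. By the criterion recalled earlier (a language satisfies an equation as soon as one of its finite recognisers does), it suffices to check that the continuous extension $\what{\psi} \colon \ExtHat \to R$ sends the two sides of~\eqref{eq:ZeroVCL} to the same element of $R$ for all $x,y,z \in \ExtHat$.

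The crucial step is to describe $\what{\psi}$ on the operations $\extomega[s,t]$. For any valid pair $(s,t)$ the map $\psi(\ext[s,t]) \in \cO(R)$ is $r \mapsto \eta_K(s)\,r\,\eta_K(t)$, whence $\psi(\ext[s,t])^n$ is $r \mapsto \eta_K(s)^n\, r\, \eta_K(t)^n$. Fix once and for all an integer $\omega \ge 1$ with $m^{2\omega} = m^{\omega}$ for every $m \in M_K$ (so every $m^\omega$ is idempotent). Then $\psi(\ext[s,t])^{2\omega} = \psi(\ext[s,t])^{\omega}$, so $\psi(\ext[s,t])^{\omega}$ is the unique idempotent power of $\psi(\ext[s,t])$ in $\cO(R)$, and since $\what{\psi}$ sends $\extomega[s,t]$ to exactly this idempotent we obtain
\[\what{\psi}\bigl(\extomega[s,t](w)\bigr) = \eta_K(s)^\omega \cdot \what{\psi}(w) \cdot \eta_K(t)^\omega \qquad\text{for all } w \in \ExtHat.\]

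Now set $p = \eta_K(u)^\omega$, $q = \eta_K(v)^\omega$, $p' = \eta_K(u')^\omega$, $q' = \eta_K(v')^\omega$ — idempotents of $M_K$ — and $X = \what{\psi}(x)$, $Y = \what{\psi}(y)$, $Z = \what{\psi}(z)$. Using the displayed identity and that $\what{\psi}$ is a monoid morphism, the right-hand side of~\eqref{eq:ZeroVCL} is sent to $(pXq')\,Y\,(p'Zq)$, while the left-hand side is sent to $p\cdot(pXq')Y(p'Zq)\cdot q = (pp)Xq'Yp'Z(qq) = pXq'Yp'Zq$, using $pp = p$ and $qq = q$. The two agree, so $\what{\psi}$ equates the sides of~\eqref{eq:ZeroVCL} and $L$ satisfies it. The arrangement of indices in the equation is exactly what drives this: the outer $\extomega[u,v]$ would prepend $p$ and append $q$, but its argument already begins with the $p$ contributed by $\extomega[u,v'](x)$ and ends with the $q$ contributed by $\extomega[u',v](z)$, so those copies are absorbed.

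Given Proposition~\ref{prop:0VCLPseudoVar}, the argument is largely routine bookkeeping; the only point that needs genuine care is the crucial step above, namely recognising that in a monoid-derived $\Ext$-algebra the map $w \mapsto \what{\psi}(\extomega[s,t](w))$ is just two-sided multiplication by the idempotents $\eta_K(s)^\omega$ and $\eta_K(t)^\omega$, and picking one exponent $\omega$ that serves $u,v,u',v'$ simultaneously.
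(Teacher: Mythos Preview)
Your computation is exactly the right one, and it mirrors the paper's: both reduce the equation to the idempotence of $u^{\omega}$ and $v^{\omega}$ in an ambient monoid.  The paper, however, does the calculation in a different place and with a different justification, and that difference matters.

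The paper does not work inside a single finite recogniser.  Instead it uses the uniformly continuous inclusion $i\colon \WMSigma \hookrightarrow A^{*}$, extends it to $\what{i}\colon \ExtHat \to \what{A^{*}}$, and computes
\[
\what{i}\bigl(\extomega[u,v](\extomega[u,v'](x)\,y\,\extomega[u',v](z))\bigr)
= u^{\omega}u^{\omega}\,\what{i}(x)\,v'^{\omega}\,\what{i}(y)\,u'^{\omega}\,\what{i}(z)\,v^{\omega}v^{\omega}
= u^{\omega}\,\what{i}(x)\,v'^{\omega}\,\what{i}(y)\,u'^{\omega}\,\what{i}(z)\,v^{\omega},
\]
which is $\what{i}$ of the right-hand side.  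It then invokes Lemma~\ref{lem:inclusionUnif}: $\what{i}(p)=\what{i}(q)$ in $\what{A^{*}}$ is \emph{equivalent} to the equation $p=q$ holding for every algebra in $\ExtMon$, hence for every morphism into every such algebra.

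Your argument carries out the same idempotent cancellation, but only for the \emph{one} morphism $\psi=\eta_K|_{\WMSigma}$.  The sentence ``a language satisfies an equation as soon as one of its finite recognisers does'' is correct, but ``recogniser satisfies the equation'' means, by the paper's definition, that \emph{every} morphism $\phi\colon\WMSigma\to R$ sends the two sides to the same element --- not just the distinguished recognising morphism.  For a general $\phi$ you cannot assume that $\psi_\phi(\ext[u,v])$ and $\psi_\phi(\ext[u,v'])$ share a left multiplier (there is no reason the image of $\ext[a,b]$ and $\ext[a,d]$ under an arbitrary $\Ext$-morphism should have the form $r\mapsto \alpha r\beta$ and $r\mapsto \alpha r\beta'$ with the \emph{same} $\alpha$), so your cancellation $pp=p$ does not go through automatically.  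The paper's route via $\what{i}$ and Lemma~\ref{lem:inclusionUnif} is precisely what is meant to absorb this ``all morphisms'' quantifier: the lemma asserts that any $\Ext$-morphism into a monoid-derived algebra factors through some monoid morphism $A^{*}\to M$, which is what forces the left and right multipliers to separate.  That reduction is the one genuinely non-routine step, and it is missing from your write-up.
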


\textbf{Proof Sketch:} We show that the equations hold by algebraic means, in the following sense: As before, any 0-VCL $L$ is recognised by an $\Ext$-algebra $\eta_K(\WMSigma)$, where $\eta_K$ is the syntactic morphism of a regular language $K$ with operations $\ext[u,v](x) = \eta_K(u) \cdot x \cdot \eta_L(v)$.

Let $i \colon \WMSigma \rightarrow A^*$ be the inclusion. Then $\eta_K \circ i \colon \WMSigma \rightarrow \eta_K(\WMSigma)$ is an ($\Ext$-algebra) morphism recognising $L$. Since $i$ is uniformly continuous, it has a uniformly continuous extension $\what{i} \colon \ExtHat \rightarrow \what{A^*}$. This fact can be used to show that $L$ satisfies the equation $x = y$, where $x,y \in \ExtHat$ if and only if $\what{i}(x) = \what{i}(y)$.
One then shows that $\extomega[u,v](x)$ maps to $u^\omega \what{i}(x) v^\omega$ and hence
\[\what{i}(\extomega[u,v](\extomega[u,v'](x) \cdot y \cdot \extomega[u',v](z)))\]
is equal to
\[u^\omega u^\omega \cdot \what{i}(x) \cdot v'^\omega  \cdot \what{i}(y) \cdot  u'^\omega \cdot  \what{i}(z) \cdot v^\omega v^\omega = u^\omega  \cdot \what{i}(x) \cdot  v'^\omega \cdot  \what{i}(y) \cdot  u'^\omega \cdot  \what{i}(z) \cdot  v^\omega,\]
where the right side is equal to $\what{i}(\extomega[u,v'](x) \cdot y \cdot \extomega[u',v](z))$. Which proves the claim.\\

Observe that, for instance, the language $\{a^{n+k}b^nc^md^{m+k} \mid n,m,k \in \N\} = (a^*b^*c^*d^*) \cap \WMSigma$, where $a,c$ are call and $b,d$ return letters satisfies the equations in \ref{eq:ZeroVCL}.
\begin{example}
Consider the language $L = \{a^nb^nc^md^m \mid n,m \in \N\}$, which is recognised by the $\Ext$-algebra: 
 \begin{center}
\begin{tabular}[t]{c|ccccc}
$\cdot$ & 1 & $ab$ & $cd$ & $abcd$ \\ \hline
1    & 1 & $ab$ & $cd$ & $abcd$ \\ 
$ab$    & $ab$ & $abcd$ & 0 & 0\\
$cd$    & $cd$ & 0 & 0 & 0 \\
$abcd$    & $abcd$ & 0 & 0 & 0 \\
\end{tabular}
\hspace{1cm}
\begin{tabular}[t]{c|ccccc}
$\cdot$ & 1 & $ab$ & $cd$ & $abcd$ \\ \hline
$\ext$ & $ab$ & $ab$ & 0 & 0\\ 
$\ext[a,d], \ext[c,b]$ & 0 & 0 & 0 & 0\\ 
$\ext[c,d]$ & $cd$ & 0 & $cd$ & 0\\
$\dots$ &  &  &  & 
\end{tabular}
\end{center}
Observe that we do not display the full monoid $\cO(R)$. Here, it holds that
\[\extomega[a,d](\extomega[a,b](1) \extomega[c,d](1)) = 0 \neq abcd = \extomega[a,b](1) \extomega[c,d](1),\]
and thus $L$ does not satisfy Equation \ref{eq:ZeroVCL}, which implies that it is not $0$-VCL.
\end{example}

\section{Proofs of Section \ref{sec:application}}

\subsection{Proof of Proposition \ref{prop:EquationsVCL}}

Formally, we work with the following definition of VCA:

\begin{definition}
  A visibly counter automaton over a visibly pushdown alphabet $A$ with threshold $m$ is a tuple $(A,Q,q_0,F,\delta_0,\dots,\delta_m)$ where
\begin{itemize}
\item $A$ is a visibly pushdown alphabet
\item $Q$ is a finite set of states
\item $q_0 \in Q$ is the initial state
\item $F \subseteq Q$ is the set of final states
\item $\delta_i \colon A \times Q \rightarrow Q$ are the transition functions
\end{itemize}
\end{definition}

We define the \textit{stack-height} of a word $w \in A^*$, denoted by $\norm{w}$ inductively:
\begin{itemize}
\item If $w \in \Call$ then $\norm{w} = 1$,
\item if $w \in \Return$ then $\norm{w} = -1$,
\item if $w \in \Intern$ then $\norm{w} = 0$, and
\item if $w = w_1 \dots w_n$ with $w_i \in A$, then $\norm{w} = \sum_{i=1}^n \norm{w_i}$.
\end{itemize}

Observe that a word $w$ of length $n$ is well-matched, if $\norm{w} = 0$ and for each $i \leq n$, the condition $\norm{w_{<i}} \geq 0$ holds.

A configuration of a VCA $\cA$ is a tuple in $Q \times \N$ and we say that there exists an $a$-transition from $(q,i)$ to $(p,j)$, writing $(q,i) \xrightarrow{a} (p,j)$ if $j = i + \norm{a}$ and $p = \delta_i(a,q)$.  
The run of a VCA $\cA$ on a word $w \in \WMSigma$ of length $n$ is the sequence of tuples
\[(q_1,i_1)(q_2,i_2) \dots (q_n,i_n) \text{ with } (q_j,i_j) \xrightarrow{w_j} (q_{j+1},i_{j+1}) \text{ and } q_1 = q_0.\]
A run is accepting, if $q_n \in F$ and the language accepted by $\cA$ is the set of all words $w \in \WMSigma$ such that the run of $\cA$ on $w$ is accepting. Observe that we only consider well-matched words and it is hence not necessary to require $i_n = 0$ for an accepting run.

\begin{proof}
Recall that equation \ref{eq:VCL} is satisfied by a language $L$ if and only if for all $u,v,u',v' \in A^*$ such that $uv,$ $u'v',$ $u'v,$ $uv' \in \WMSigma$ the equality
\begin{align*}
\extomega[u,v](\extomega[u',v'](x)) &= \extomega[u,v](\extomega[u,v'](\extomega[u',v'](x))) \\
 & = \extomega[u,v](\extomega[u',v](\extomega[u',v'](x)))
\end{align*}
holds in the syntactic $\Ext$-algebra of $L$.

Assume that $L$ is recognised by a VCA $\cA$ with threshold $m$. For any $u \in A^*$ and $i \in \N$ such that $i + \norm{u} \geq 0$, we define the function
\[r_{u,i} \colon Q \rightarrow Q \text{ where } r_{u,i}(q) = p \text{ iff } (q,i) \xrightarrow{u} (p,i+\norm{u}).\]
Since $\cA$ is a VCA with threshold $m$, for any two $i,j \geq m$, we have $r_{u,i} = r_{u,j}$.
We show that if $\norm{u} > 0$, then there exists an $s \in \N$ such that $r_{u^s,i} = r_{u^{2s},i}$ for all $i \in \N$. Observe that it suffices to prove that for each $i=0,\dots,m$ there exists an $s_i$ with the property $r_{u^{s_i},i} = r_{u^{2s_i},i}$ and we obtain $s$ as their least common multiple. Hence let $i \in \N$ be arbitrary but fixed and observe that
\[r_{u^2,i} = r_{u,i+\norm{u}} \circ r_{u,i}\]
which implies that for $n \in \N$ such that $n \cdot \norm{u} < m \leq (n+1) \cdot \norm{u}$ and $l > 0$ the equality
\[r_{u^{n+l},i} = \underbrace{r_{u,m} \circ \dots \circ r_{u,m}}_{l \text{ times }} \circ r_{u,i+n\norm{u}} \circ \dots \circ r_{u,i+\norm{u}} \circ r_{u,i}\]
holds. Since the functions from $Q$ to $Q$ form a finite monoid, $r_{u,m}$ generates an idempotent such that for appropriate $l$ we obtain
\[r_{u^{n+l},i} = r_{u^{n+2l},i}.\]
Letting $s = n \cdot l$ proves that claim.
Similarly, if $\norm{u} < 0$, then there also exists an $s \in \N$ such that $r_{u^s,i} = r_{u^{2s},i}$ for all $i \geq \norm{u^{2n}}$ and also the case for $\norm{u} = 0$ follows in the same fashion.

Let $u,v,u',v' \in A^*$ such that they can be inserted in the equation above and let $s$ be their common exponent, in the sense that $r_{x^s,i} = r_{x^{2s},i}$ for $x \in \{u,u',v,v'\}$ and all $i \in \N$. This exponent again exists, since we can again choose the least common multiple of all single exponents.

 We observe that now for all $x \in \WMSigma$
\[r_{u^s x v^s,i} = r_{u^{2s} x v^{2s},i}\]
where $r_{u^s x v^s,0}(q_0)$ is a final state, if and only if the run of $\cA$ on $u^s x v^s$ is accepting and hence
\[u^s x v^s \in L \Leftrightarrow u^{2s} x v^{2s} \in L \]
Since the exponent $s$ was chosen independent of $i$ we may even derive that the two words are syntactically equivalent with respect to $L$. Hence we obtain that the syntactic image of $u^s x v^s$ is $\ext[u,v]^\omega(x)$.

Moreover, we derive that
\[r_{u^su'^s x v'^s v^s,i} = r_{u^{2s} u'^s x v'^{2s} v^s,i} = r_{u^s u'^{2s} x v'^s v^{2s}},\]
which results in the syntactic equivalence of those words and by the previous observation in the validity of equation \ref{eq:VCL}.
\end{proof}

\subsection{Proof of Proposition \ref{prop:0VCLPseudoVar}}
Denote by $\ExtMon$ the class of all $\Ext$-algebras $R$ such that there exists a finite monoid $M$ and a morphism $\phi \colon A^* \rightarrow M$, where $R$ is isomorphic to $\phi(\WMSigma)$ with the operations $\ext[u,v](x) = \phi(u) \cdot x \cdot \phi(v)$. We show that this class is a pseudo-variety.

\begin{proof}
We prove that $\ExtMon$ is closed under quotients, taking subalgebras and finite direct products.

Let $R \in \ExtMon$ and let $M$ be the monoid and $h \colon A^* \rightarrow M$ be the morphism generating $R$. Without loss of generality, we assume that $h$ is surjective, since otherwise we restrict to the image of $h$. Then $R$ is generated by the set $h(\Intern)$ and the monoid of operations $\cO(R)$ is generated by $\ext$ for $a \in \Call$ and $b \in \Return$. 

Let $S$ be a quotient of $R$. Then there exists a surjective morphism (of $\Ext$-algebras) $\varphi \colon R \rightarrow S$. 

Define the relation 
\[\sim_{\varphi} :=  \{(h(u),h(v)) \mid u,v \in \FreeExt \text{ with } \varphi(h(u)) = \varphi(h(v))\}\]
on $M$ and let $\equiv_\varphi$ be the congruence relation on $M$ generated by $\sim_\varphi$. Define the monoid $N := M\backslash_{\equiv_\varphi}$ and let $\psi \colon M \rightarrow N$ be the projection. It follows from the finiteness of $M$ that also $N$ is finite. Clearly for any $x \in M$ the equality
\[\varphi(\ext)(\varphi(x)) = \varphi(\ext(x)) = \varphi(h(a) \cdot x \cdot h(b)) = \psi(h(a)) \cdot x \cdot \psi(h(b))\]
holds and since $\varphi$ is surjective and $\bV$ closed under quotients, $S$ is in $\ExtMon$.

Let $S$ be a subalgebra of $R$. Then there exists an $n \leq |\Intern|$ and words $w_1,\dots,w_n \in \Intern^*$ and a $k \leq |\Call| = |\Return|$ and words $u_1,\dots,u_k \in \Call^*$, $v_1,\dots,v_k \in \Return^*$ with $|u_i| = |v_i|$ for $i = 1,\dots,k$ such that $S$ is generated by $h(w_1),\dots,h(w_n)$ and $\ext[u_i,v_i]$. Choose some enumeration of the call-, return- and internal letters and define the morphism of monoids $g \colon A^* \rightarrow M$ by sending the $i$th letter of $\Call$ (resp. $\Return$, $\Intern$) to $h(u_i)$ (resp. $h(v_i)$, $h(w_i)$), if $i$ does not exceed $k$ (resp. $n$) and to the neutral element of $M$ otherwise. By construction, $g$ generates $S$ and since $M$ is in $\bV$, $S$ is in $\ExtMon$.

The closure under product is clear: If $S$ and $R$ are monoidal, then $S \times R$ is generated by the product-morphism generating $S$ and $R$.
\end{proof}

By $\LExtMon$ denote the corresponding pseudo-variety. We show that a language belongs to $\LExtMon$ if and only if it is of the form $\WMSigma \cap K$ for a regular language $K$ (which is equivalent to being 0-VCL).

\begin{proof}
Suppose that $L \subseteq \FreeExt$ is in $\LExtMon$ and let $R_L$ be its syntactic $\Ext$-algebra and $\eta_L$ its syntactic morphism. Then there exists a monoid $M \in \bV$ such that $R_L$ is isomorphic to a submonoid of $M$ with $\ext(x) = x_a \cdot x \cdot x_b$ for some $x_a,x_b \in M$, where $\ext$ stands for the map $[w] \mapsto [awb]$. Define the morphism
\begin{align*}
h \colon A & \rightarrow M\\
x & \mapsto
\begin{cases}
x_a & \text{for } a \in \Call \text{ or } a \in \Return\\
\eta_L(x) & \text{otherwise.}
\end{cases}
\end{align*}
Define $S = h^\mone(h(L))$. Since $M \in \bV$ is finite, $S$ is a language in $\cV$ and it follows from the definition of $h$, that it coincides with $\eta_L$ on well-matched words, which proves one direction of the claim.

Let $S \in \cV(A)$ be a language and let $L = \FreeExt \cap S$. By $M_S$ denote the syntactic monoid of $S$ and by $\eta_S \colon A^* \rightarrow M_S$ its syntactic morphism. Then $\eta_S$ generates an $\Ext$-algebra $R \subseteq M_S$ and the morphism $\varphi \colon \FreeExt \rightarrow R$ induced by $\eta_S$ recognises $L = \varphi^\mone(\varphi(L))$. It follows that $L \in \LExtMonV$.
\end{proof}

\subsection{Proof of Proposition \ref{prop:EquationsZeroVCL}:}

Recall that $A^*$ is equipped with the metric $d(x,y) = 2^{-r(x,y)}$ where
\[r(x,y) = \{|M| \mid M \text{ is a monoid separating $x$ and $y$.}\}.\]
\begin{lemma}\label{lem:inclusionUnif}
The inclusion $i \colon \WMSigma \rightarrow A^*$ is uniformly continuous and $u = v$ holds for $\ExtMon$ if and only if $\what{i}(u) = \what{i}(v)$ in $\what{A^*}$.
\end{lemma}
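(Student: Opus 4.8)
I want to prove Lemma~\ref{lem:inclusionUnif}: the inclusion $i \colon \WMSigma \rightarrow A^*$ is uniformly continuous, and for $u,v \in \ExtHat$ the equation $u = v$ holds for $\ExtMon$ if and only if $\what{i}(u) = \what{i}(v)$ in $\what{A^*}$. The proof splits into two parts. First, uniform continuity of $i$. Here the key observation is a comparison of separating power: if a finite monoid $M$ (with morphism $\phi \colon A^* \to M$) separates the well-matched words $x$ and $y$, then the $\Ext$-algebra $\phi(\WMSigma)$ with operations $\ext[u,v](z) = \phi(u) \cdot z \cdot \phi(v)$ separates $x$ and $y$ as well, and this $\Ext$-algebra has at most $|M|$ elements. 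Hence $r_{\Ext}(x,y) \le |M|$ whenever a monoid of size $|M|$ separates $x$ and $y$, so $r_{\Ext}(x,y) \le r_{A^*}(x,y)$ and therefore $d_{\WMSigma}(x,y) = 2^{-r_{\Ext}(x,y)} \ge 2^{-r_{A^*}(x,y)} = d_{A^*}(i(x),i(y))$. This gives $d_{A^*}(i(x),i(y)) \le d_{\WMSigma}(x,y)$, which is uniform continuity (in fact $1$-Lipschitz), and as a consequence $i$ extends uniquely to a uniformly continuous $\what{i} \colon \ExtHat \to \what{A^*}$.

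For the equivalence, I would argue as follows. By definition, $u = v$ holds for $\ExtMon$ iff there is an $\Ext$-algebra $R \in \ExtMon$ and a morphism $\psi \colon \WMSigma \to R$ recognising $L$ (more precisely, it suffices to consider $\psi$ onto some $R \in \ExtMon$) with $\what{\psi}(u) = \what{\psi}(v)$. Given $R \in \ExtMon$, by definition of $\ExtMon$ there is a finite monoid $M$ and a morphism $\phi \colon A^* \to M$ with $R \cong \phi(\WMSigma)$ and $\ext[u',v'](z) = \phi(u') z \phi(v')$. The restriction $\phi|_{\WMSigma}$ factors as $\phi \circ i$, and since $\phi|_{\WMSigma} \colon \WMSigma \to R$ is exactly the generating $\Ext$-algebra morphism, we get on profinite completions $\what{\psi} = \what{\phi} \circ \what{i}$ for the appropriate $\psi$. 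Conversely, any morphism $\theta \colon A^* \to M$ into a finite monoid restricts to an $\Ext$-algebra morphism $\WMSigma \to \theta(\WMSigma) \in \ExtMon$. Thus: $\what{i}(u) = \what{i}(v)$ implies $\what{\phi}(\what{i}(u)) = \what{\phi}(\what{i}(v))$ for every such $\phi$, hence $\what{\psi}(u) = \what{\psi}(v)$ for every morphism into an $\Ext$-algebra of $\ExtMon$, so $u=v$ holds for $\ExtMon$. For the other direction, if $u = v$ holds for $\ExtMon$, take for $M$ the free profinite monoid's finite quotients: for every morphism $\phi \colon A^* \to M$ onto a finite monoid, the induced $\psi = \phi|_{\WMSigma} \colon \WMSigma \to \phi(\WMSigma) \in \ExtMon$ must satisfy $\what{\psi}(u) = \what{\psi}(v)$, i.e. $\what{\phi}(\what{i}(u)) = \what{\phi}(\what{i}(v))$. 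Since the finite monoid quotients of $\what{A^*}$ separate points, $\what{i}(u) = \what{i}(v)$ follows.

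\textbf{Main obstacle.} The delicate point is the identity $\what{\psi} = \what{\phi} \circ \what{i}$ on the level of completions and making precise "for every $\phi$". One has to check that $\phi|_{\WMSigma}$ really is a morphism of $\Ext$-algebras into $\phi(\WMSigma)$ equipped with the operations $\ext[u',v'](z) = \phi(u') z \phi(v')$ — this is where the definition of $\ExtMon$ is used — and that passing to profinite completions commutes with composition, which follows from uniqueness of continuous extensions (if $f,g$ are uniformly continuous maps of metric spaces, then $\what{g \circ f} = \what{g} \circ \what{f}$, because both are continuous extensions of $g \circ f$ and the domain is dense). The other subtlety is the direction "$u=v$ for $\ExtMon$ $\Rightarrow$ $\what{i}(u)=\what{i}(v)$": here one must know that the family of all continuous morphisms from $\what{A^*}$ onto finite monoids is jointly injective, a standard fact about free profinite monoids. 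Once these pieces are in place the argument is routine; the rest is bookkeeping about which morphism recognises $L$, and since the statement is phrased for arbitrary $u,v \in \ExtHat$ (not a fixed language) one does not even need to track $L$ — the equation "holds for $\ExtMon$" should be read as "holds in every $\Ext$-algebra of $\ExtMon$", which is what the displayed argument establishes.
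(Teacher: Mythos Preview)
Your proposal is correct and follows essentially the same route as the paper: the $1$-Lipschitz inequality for uniform continuity via the monoid-to-$\Ext$-algebra comparison of separating power, and the equivalence via factoring morphisms into $\ExtMon$ as $\phi \circ i$ for monoid morphisms $\phi$. You are in fact more explicit than the paper about the direction ``$u=v$ for $\ExtMon$ $\Rightarrow \what{i}(u)=\what{i}(v)$'', correctly invoking that continuous finite-monoid quotients of $\what{A^*}$ jointly separate points, where the paper only gestures at uniform continuity.
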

\begin{proof}
Let $u,v \in \FreeExt$. We show that $d(\iota(u),\iota(v)) \leq d(u,v)$. Assume that $u$ and $v$ are separated by a monoid $M$. Then, there exists a morphism $h \colon A^* \rightarrow M$ such that $h(u) \neq h(v)$ and the $\Ext$-algebra $h(\WMSigma)$ with the multiplication of $M$ and $\ext(x) = h(a) \cdot x \cdot h(b)$ separates $u$ and $v$. Since $|h(\WMSigma)| \leq |M|$, we obtain $r(u,v) \leq r(\iota(u),\iota(v))$ and conclude $d(\iota(u),\iota(v)) \leq d(u,v)$, which proves that $\iota$ is uniformly continuous.

The direction stating that $\ExtMon$ satisfies $u = v$ implies that $\what{\iota}(u) = \what{\iota}(v)$ follows directly from the uniform continuity of $\iota$.

For the converse direction, let $M$ be a finite monoid and $R \in \ExtMon$, such that $R$ is isomorphic to some submonoid of $M$ and the operations on $R$ can be represented as usual by multiplication in $M$. Now it is clear, that if $\what{\iota}(u) \leftrightarrow \what{\iota}(v)$, then $\what{h}(\what{\iota}(u)) = \what{h}(\what{\iota}(v))$ for any morphism $h \colon A^* \rightarrow M$. Since any morphism $g \colon \FreeExt \rightarrow R$ is of the form $h \circ \iota$ for some morphism $h$, we obtain $\what{g}(u) = \what{g}(v)$, since $\what{g} = \what{h} \circ \what{\iota}$.

Observe that each morphism from $A^*$ to $M$ induces a morphism from $\FreeExt$ to $R$ and vice versa. 
\end{proof}

\begin{proof}
  It follows directly from Lemma \ref{lem:inclusionUnif}, since for $x,y,z \in \ExtHat$:
  \[\what{\iota}(\extomega[u,v](\extomega[u,v'](x)y\extomega[u',v](z))) = u^\omega u^\omega x v'^\omega y u'^\omega z v^\omega v^\omega = u^\omega x v'^\omega y u'^\omega z v^\omega. \]
  
\end{proof}

\section{Conclusion and Further Research}

We have shown that it is possible to derive a notion of equations that are capable of characterising classes of VPL. In particular, we used these results in order to give concrete equations for two subclasses of VPL: the visibly counter languages and a subclass thereof -- the visibly counter languages with threshold zero. We were only capable to show soundness of the equations for these classes, yet, where soundness means that a language in the class satisfies the equations. However, the equations found are strong enough to show for certain languages that they are not VCL or not VCL with threshold zero. We conjecture that the set of equations we gave for VCL is complete, in the sense that a VPL satisfies the equations if and only if it is VCL.

The decidability of whether a given VPL is VCL or threshold zero VCL was proven in \cite{BLS06}. If we were able to prove that the equations for threshold zero VCL are not only sound, but complete, it would imply the mentioned result of \cite{BLS06} and thus present an algebraic reproof, also improving the runtime of the currently best known decision algorithm. However, it is still unclear how hard an algebraic reproof of \cite{BLS06} might be, since the procedures used there are already very intricate.

A second direction of research is the investigation of the connection of VPL and circuit classes. Motivated by the characterisation of the regular languages in \textbf{AC}$^0$ as the quasi-aperiodic languages with the equation $(x^{\omega-1}y)^{\omega+1} = (x^{\omega-1}y)^{\omega}$ for words $x,y$ of the same length, investing in the algebraic and equational characterisation of visibly pushdown languages in \textbf{TC}$^0$ might be a step in the direction of a better understanding of this class. Here, the connection of $\Ext$-algebras to words rather than trees might prove helpful in imitating techniques such as in \cite{GPK16}, where ultrafilter equations for a certain fragment of logic gave rise to profinite equations for the regular languages therein.\\

\newpage

\bibliography{biblography.bib}

\newpage

\end{document}